\begin{document}
\title{New Results on the Storage-Retrieval Tradeoff in Private Information Retrieval Systems}   %%Lower bound for Storage-Retrieval Tradeoff in PIR Systems
\author{
	\IEEEauthorblockN{Tao Guo, Ruida Zhou, and Chao Tian} 
	\thanks{Tao Guo is with the Department of Electrical and Computer Engineering, the University of California, Los Angeles, CA, USA. (e-mail: guotao@ucla.edu)
		
		Ruida Zhou and Chao Tian are with the Department of Electrical and Computer Engineering, Texas A\&M University, College Station, TX, USA. (e-mail: ruida@tamu.edu, chao.tian@tamu.edu)}
%	\IEEEauthorblockA{Department of Electrical and Computer Engineering\\
%		Texas A\&M University\\
%%		{\sffamily  \{ruida, guotao, chao.tian\}@tamu.edu}
%	}
	}
	\maketitle
	%	\balance  %align at bottom for two-column documents
	%%------------------------------ newcommands -----------------------------------------
	\newcommand{\reffig}[1]{Figure \ref{#1}}
	\newcommand{\cA}{\mathcal{A}}
	\newcommand{\cB}{\mathcal{B}}
	\newcommand{\cC}{\mathcal{C}}
	\newcommand{\cK}{\mathcal{K}}
	\newcommand{\cL}{\mathcal{L}}
	\newcommand{\cN}{\mathcal{N}}
	\newcommand{\cP}{\mathcal{P}}
	\newcommand{\cQ}{\mathcal{Q}}
	\newcommand{\cR}{\mathcal{R}}
	\newcommand{\cS}{\mathcal{S}}
	\newcommand{\cT}{\mathcal{T}}
	\newcommand{\cU}{\mathcal{U}}
	\newcommand{\cX}{\mathcal{X}}
	\newcommand{\cY}{\mathcal{Y}}
	\newcommand{\bC}{\mathbb{C}}
	\newcommand{\bF}{\mathbf{F}}
	
	\newcommand{\dunderline}[1]{\underline{\underline{#1}}}
	
	\theoremstyle{plain}% default
	\newtheorem{theorem}{Theorem}
	\newtheorem{lemma}{Lemma}
	\newtheorem{corollary}{Corollary}
	\newtheorem{proposition}{Proposition}
	\newtheorem{conjecture}{Conjecture}
	\newtheorem{claim}{Claim}
	\newtheorem{example}{Example}
	\newtheorem{application}{Application}
	
	\theoremstyle{remark}
	\newtheorem{remark}{Remark}
	
	\theoremstyle{definition}
	\newtheorem{definition}{Definition}

\begin{abstract}
	In a private information retrieval (PIR) system, the user needs to retrieve one of the possible messages from a set of storage servers, but wishes to keep the identity of requested message private from any given server. 
	Existing efforts in this area have made it clear that the efficiency of the retrieval will be impacted significantly by the amount of the storage space allowed at the servers.  In this work, we consider the tradeoff between the storage cost and the retrieval cost. We first present three fundamental results: 1) a regime-wise 2-approximate characterization of the optimal tradeoff, 2) a cyclic permutation lemma that can produce more sophisticated codes from simpler ones, and 3) a relaxed entropic linear program (LP) lower bound that has a polynomial complexity. Equipped with the cyclic permutation lemma, we then propose two novel code constructions, and by applying the lemma, obtain new storage-retrieval points. Furthermore, we derive more explicit lower bounds by utilizing only a subset of the constraints in the relaxed entropic LP in a systematic manner. Though the new upper bound and lower bound do not lead to a more precise approximate characterization in general, they are significantly tighter than the existing art.	
\end{abstract}

%\begin{IEEEkeywords}
%Private Information Retrieval, Maximum Distance Separable Code, Message Size
%\end{IEEEkeywords}

%%=============================================
\section{Introduction}
The analysis of private information retrieval (PIR) systems from the information-theoretic perspective has drawn significant attention recently \cite{chor-95,chor-98,shah2014one,Sun-Jafar-PIR-capacity-2017IT,Tian-Sun-Chen-PIR-IT19,Banawan-Ulukus-codedPIR-2018IT,uncoded1,uncoded2,Banawan-Ulukus-ITW19,sun2019breaking,tian2018shannon,zhou2019capacity,tian2019storage,lin2018mds,distributedPIR,coded-PIR--TChan-2015ISIT,Sun-Jafar-PIR-2017TIFS,PIR-dis-storage-18TIFS,PIR-MDS-dis-storage-2018IT,PIR-dis-storage-MDS-19IT,Sun-Jafar-SPIR-19IT,weakly-PIR-isit2019,Guo-Zhou-Tian-TIFS20,Wangqiwen-PIR-SPIR-Eva-2019IT,Wangqiwen-PIR-eva-2019IT,Wangqiwen-SPIR-MDS-2019IT,leaky-PIR-isit2019,WeakPIR-Zhou-Guo-Tian-isit20,Sun-Jafar-robustPIR-2017IT,Banawan-byzatine-colluding-IT19,Sun-Jafar-conjectured-2018IT,Banawan-Ulukus-PIR-MultiMessage-18IT,Sun-Jafar-PIR-multiround-2018IT,PIR-array-code-2019IT,PIR-cache-Allerton17}. 
The canonical model, where the messages are allowed to replicate over all the servers, was studied extensively and well-understood. Particularly, the capacity of the canonical PIR system was characterized recently by Sun and Jafar \cite{Sun-Jafar-PIR-capacity-2017IT}, and a more efficient code construction was presented in \cite{Tian-Sun-Chen-PIR-IT19}. 

Full replication of the messages at the storage servers can be costly, and the messages can be stored more efficiently by utilizing better storage codes. However, the amount of storage allowed at the servers will impact the efficiency of the retrieval. At one extreme, when the messages are replicated across all the servers, the retrieval can be made the most efficient; on the other hand, when no storage redundancy is allowed, the only possible strategy is to retrieve every message and thus highly inefficient. 

There has been increasing interest in understanding the storage-retrieval tradeoff in PIR systems. Banawan and Ulukus \cite{Banawan-Ulukus-codedPIR-2018IT} considered the case when each message is encoded by a maximum distance separable (MDS) code and stored across the servers, referred to as the MDS-PIR code, and characterized the capacity of this system. Sun and Tian presented two sets of codes where the messages are MDS-code that can beat the capacity of the separate MDS-PIR capacity by using joint storage coding for certain specific parameters \cite{sun2019breaking}.  Attia et al. considered the case when the storage servers can only store uncoded segments of the messages \cite{uncoded1,uncoded2},  and derived the full storage-retrieval tradeoff in such systems. A generalized code construction unifying the two codes was presented more recently in \cite{Banawan-Ulukus-ITW19}. Mathematically, we use $\alpha$ to denote the normalized average storage per server per message bit, and $\beta$ for the normalized average download cost per server by message bit (the precise definitions are given in \Cref{section-formulation}). In this context, the MDS-PIR code in \cite{Banawan-Ulukus-codedPIR-2018IT} achieve the following tradeoff points  
\begin{align}
\quad (\alpha, \beta) = \left( \frac{K}{T}, \frac{1}{N} \left(\sum_{i=0}^{K-1} \left(\frac{T}{N}\right)^i \right)\right), \quad T = 1, 2, \cdots, N,\label{eqn:mds}
\end{align}
the uncoded storage PIR code \cite{uncoded1,uncoded2} achieves the following tradeoff points
\begin{align}
\quad (\alpha, \beta) = \left( \frac{KT}{N}, \frac{1}{N} \left(\sum_{i=0}^{K-1} \frac{1}{T^i} \right)\right), \quad T = 1, 2, \cdots, N \label{eqn:uncoded}
\end{align}
and the unified code in \cite{Banawan-Ulukus-ITW19} achieves 
\begin{align}
(\alpha,\beta)=\left(\frac{KT_2}{NT_1},\frac{1}{N}\sum_{i=0}^{K-1}\left(\frac{T_1}{T_2}\right)^i\right), \quad T_1,T_2\in\{1, 2, \cdots, N\}, T_1\leq T_2. \label{eqn:gMDS}
\end{align}

Though significant progress has been made in these case where structural restrictions are placed on the storage codes, our understanding on the fundamental tradeoff between the storage cost and the retrieval cost is quite limited when these restrictions are removed. In fact, even for the smallest case with two servers and two messages, this tradeoff is not known. A Shannon-theoretic approach \cite{tian2018shannon} was used on this special case to improve the storage and download efficiency, and very specialized lower bounds were also given. Two general lower bounds were further given in \cite{tian2019storage} which focus on the two extreme points of the tradeoff curve. 

In this work, we studied the tradeoff between the storage cost and the retrieval cost in PIR systems without any structural storage restrictions. Firstly, three fundamental results are presented
\begin{enumerate}
\item A regime-wise 2-approximate characterization of the optimal tradeoff: The overall tradeoff can be partitioned into two regimes, where 2-approximation hods for either the storage cost or the retrieval cost.
\item A cyclic permutation lemma that can produce more sophisticated codes from simpler ones: This is a general technique, and it can be shown that uncoded storage PIR code \cite{uncoded1,uncoded2}  can be obtained directly from the code in \cite{Sun-Jafar-PIR-capacity-2017IT} with this lemma, and the generalized MDS-PIR code \cite{Banawan-Ulukus-ITW19} can be obtained from that in \cite{Banawan-Ulukus-codedPIR-2018IT}.
\item A relaxed entropic linear program (LP) lower bound that has a polynomial complexity: The generic entropic LP frame work \cite{yeung1997framework,tian2014characterizing,tian2018symmetry} may be used to compute lower bounds in this problem, which however has exponential numbers of variables and constraints. By utilizing the specific structure in the PIR problem, we select a subset of these inequalities and formulate a simpler LP that is more amicable for computation.
\end{enumerate}

With these results, we further seek to find improved upper bounds and lower bounds. We propose two novel code constructions, and by applying the cyclic permutation lemma, obtain a set of new storage-retrieval points. Then we derive a close-form lower bound by utilizing only a subset of the constraints in the relaxed entropic LP in a systematic manner. As a byproduct, we in fact obtain a set of lower bounds parametrized by a set of real values. Though the new upper bound and lower bound do not lead to a more precise approximate characterization in general, they are significantly tighter than the existing art.	

The rest of the paper is organized as follows. 
We formally define the problem in \Cref{section-formulation}. 
The three fundamental results on the optimal tradeoff are presented in \Cref{section-basic-result}. 
\Cref{section-UpperBounds} is mostly devoted to two new code constructions. A lower bound for the optimal tradeoff is then presented in \Cref{section-LowerBounds}, with some numerical results. We conclude the paper in \Cref{section-conclusion}. 
Some technical proofs are given in the appendices. 

%%=============================================

\section{Problem Formulation}\label{section-formulation}
We adopt the notation $[i:j]\triangleq \{i,i+1,\ldots,j\}$ when $i\leq j$, and define it to be $\emptyset$ if $i>j$; the brackets will be omitted when appeared in subscripts. 
An $(N,K)$ {\it private information retrieval} (PIR) system can be described as follows. 
A total of $K$ mutually independent equal-length messages $W_{1:K}=(W_1, W_2, \cdots, W_K)$ are coded and stored in $N$ servers; the stored content at server $n$ is denoted as $S_n$. 
When retrieving message $W_k$, the user sends a query $Q^{[k]}_n$ to  server $n$, from which an answer $A_{n}^{[k]}$ was returned. 
After collecting the answers $A_{1:N}^{[k]}$ from all the servers, the user will recover the desired message $W_k$. 
The privacy requirement stipulates that any single server cannot derive any knowledge on the identity of the requested message based on the received query. 
In this work, we aim to study the tradeoff between the size of storage contents $S_{1:N}$ and that of the answers $A_{1:N}$.

Mathematically, a PIR system can almost be fully represented using information measures of involved random variables alone.
Each message $W_k$ $(k\in[1:K])$ is comprised of $L$ i.i.d. symbols uniformly distributed over a finite alphabet $\cX$. 
In $\log_{|\cX|}$-ary units, this is equivalent to 
\begin{align}
H(W_{1:K}) &= \sum_{k=1}^K H(W_k),\\ 
H(W_k) &= L,~k\in[1:K].
\end{align}
There are a total of $N$ servers, and each can store coded or uncoded contents of the messages, which is equivalent to the condition that the stored content $S_n\in\cS_n$ at server $n$  satisfies
\begin{align}
H(S_n | W_{1:K}) = 0, ~n\in[1:N].
\end{align}
A user aims to retrieve a message $W_k$, $k\in[1 : K]$ from the $N$ severs without revealing the identity $k$ to any individual server.
A random key $\bF$ is used to generate queries $Q_{1:N}^{[k]}=(Q_1^{[k]},Q_2^{[k]},\ldots,Q_N^{[k]})$, where $Q_n^{[k]} \in \mathcal{Q}_n$ for $n\in[1:N]$, which can be represented as
\begin{align}
H(Q_1^{[k]},Q_2^{[k]},\ldots,Q_N^{[k]}|\bF)=0,\quad k=1,2,\ldots,K. 
\end{align}
The random key is independent of messages, i.e., 
\begin{align}
I(\bF;W_{1:K})=0.
\end{align}
Server-$n$ uses the stored content $S_n$ and the query $Q_n^{[k]}$ to construct an answer $A_n^{[k]}$, and then sends the answer to the user, which is represented by the relation
\begin{align}
H(A_n^{[k]} | Q_n^{[k]}, S_n) = 0,  ~n\in[1:N], ~k\in[1:K].
\end{align}
The answer symbols are in a finite alphabet $\cY$, i.e., $A_n^{[k]}\in \cY^{\ell_n}$, where $\ell_n$ is the length of the answer.
With the answers from all servers $A_{1:N}^{[k]}$, together with queries $Q^{[k]}_{1:N}$ and the identity of the desired message $k$, the user can recover the desired message $W_k$, i.e., 
\begin{align}
H(W_k | A_{1:N}^{[k]}, Q^{[k]}_{1:N}) = 0.
\end{align}
The privacy requirement is more suitable to be represented using probability distribution relations, instead of information measures\footnote{Strictly speaking, it is possible to represent the privacy condition by introducing another random variable $\theta$ to represent the (random) index of the requested message, assuming the probability distribution of $\theta$ is known. The privacy requirement as represented by the probability distribution relations is more general, in the sense that there is no need to require the knowledge of the probability distribution of $\theta$.}, i.e., for any $q\in\cQ_n$  
\begin{align}
\Pr(Q_n^{[k]}=q)=\Pr(Q_n^{[k']}=q), ~\text{for any } k \not= k' \in [1:K]. 
\end{align}

The {\it operational} normalized average storage cost and the {\it operational} normalized average download cost are defined as
\begin{align}
\bar{\alpha} &\triangleq \frac{1}{NL} \sum_{n=1}^N \log_{|\cX|}|\cS_n|,   \\
\bar{\beta} &\triangleq \frac{\log_{|\cX|}|\cY|}{NL} \sum_{n=1}^N \mathbb{E}(\ell_n), 
\end{align}
which are the average amount of stored data per symbol of individual message and the expected amount of average downloaded data per symbol of desired message, respectively. 
In the sequel, we shall simply refer to them as the storage cost and download cost, respectively. 
Note that $\bar{\beta}$ does not depend on the value of $k$, since the random variable $\ell_n$ has an identical distribution for all $k\in[1:K]$ due to the privacy requirement. 

We say the storage-retrieval tradeoff point $(\alpha, \beta)$ is achievable, if there exists a PIR code whose operational storage cost $\bar{\alpha}$ and download cost $\bar{\beta}$ satisfy $\alpha\geq \bar{\alpha}$ and $\beta\geq \bar{\beta}$, respectively. 
The aim of this work is to characterize the set of all achievable pairs $(\alpha,\beta)$, or in other words, the optimal tradeoff between $\alpha$ and $\beta$. 
It is clear that 
\begin{align}
\bar{\alpha} & \geq   \frac{1}{NL} \sum_{n=1}^N H(S_n) \\
\bar{\beta} & \geq \frac{1}{NL} \sum_{n=1}^N H(A^{[k]}|Q^{[k]}_{1:N}), 
\end{align}
the right hand sides of which are referred to as the {\it informational} normalized storage cost and {\it informational} normalized download cost, respectively. We shall use the informational costs as surrogates for the operational costs in the rest of this work in order to derive meaningful lower bounds. A detailed discussion of these two definitions and their differences can be found in \cite{tian2019storage}. 

For some fixed download cost $\beta$, let $\alpha_{\min}(\beta)$ denote the minimum achievable storage cost for the download cost $\beta$, 
and $\beta_{\min}(\alpha)$ is defined similarly. 
It was established in \cite{Sun-Jafar-PIR-capacity-2017IT} that 
\begin{align}
\beta_{\min}(\infty)=\beta_0 \triangleq \frac{1}{N}+\frac{1}{N^2}+\ldots+\frac{1}{N^K},  \label{def-beta0}
\end{align}
and it is trivial to see
\begin{align}
\alpha_{\min}(\infty)=\alpha_0\triangleq \frac{K}{N},  \label{def-alpha0}
\end{align}
in order for the system to allow correct message retrieval. In fact the result in \cite{Sun-Jafar-PIR-capacity-2017IT} implies that $\beta_{\min}(K)=\beta_0$, and it is not difficult to verify $\alpha_{\min}(\alpha_0)=\alpha_0$. 

\begin{remark}
	The definitions of $\bar{\alpha}$ and $\bar{\beta}$ are consistent with the ``worst-case" definitions, which are
	\begin{align}
	\alpha_{\text{worst}}&\triangleq \max_{n \in [1:N]} \frac{\log_{|\cX|}|\cS_n|}{L}, \\  
	\beta_{\text{worst}}&\triangleq \max_{n \in [1:N]} \frac{\log_{|\cX|}|\cY|\mathbb{E}(\ell_n)}{L}.
	\end{align}
	This is because for any code that achieves the storage-retrieval tradeoff point $(\alpha, \beta)$, we can use space-sharing to construct a new code such that $(\alpha_{\text{worst}}, \beta_{\text{worst}}) = (\alpha,\beta)$.
\end{remark}

%%=============================================
\section{Three Fundamental Results}\label{section-basic-result}
We first present three results that are not difficult from a technical point of view, but are of significant fundamental or instrumental importance. The first is a simple approximate characterization of the optimal $(\alpha,\beta)$ tradeoff, the second is a simple lemma which uses cyclic permutation to build more sophisticated codes from simpler ones, and the last is an extracted (low-complexity) linear programming lower bound that captures the most important constraints in the problem setting. 

\subsection{A Simple Approximate Characterization}

The following proposition provides a simple approximate characterization of the achievable storage-retrieval tradeoff. 
%%-------------------- BEGIN thm-simple-approx ------------------------
\begin{proposition}[Regime-wise 2-approximation]\label{thm-simple-approx}
For any $(N,K)$ PIR system where $N\geq 2$, 
	\begin{enumerate}[(i)]
	      \item The tradeoff point $(2\alpha_0,2\beta_0)$ is achievable; 
		\item Conversely, any achievable $(\alpha,\beta)$ much satisfy $\alpha\geq \alpha_0$ and $\beta\geq \beta_0$. 
	\end{enumerate}
\end{proposition}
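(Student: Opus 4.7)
My plan for the achievability part (i) is to invoke the generalized MDS-PIR construction \eqref{eqn:gMDS} at its smallest nontrivial parameter pair, $T_1=1$ and $T_2=2$, which is permissible since $N \geq 2$. This immediately yields $\alpha = K T_2/(N T_1) = 2K/N = 2\alpha_0$ and $\beta = (1/N)\sum_{i=0}^{K-1}(1/2)^i = (2/N)(1-2^{-K})$. It then remains to verify $\beta \leq 2\beta_0 = 2(1-N^{-K})/(N-1)$, which after clearing denominators reduces to $(N-1)(1-2^{-K}) \leq N(1-N^{-K})$. I would establish this in one line by the chain
\[
(N-1)(1-2^{-K}) \leq N-1 \leq N - N^{1-K} = N(1-N^{-K}),
\]
where the first inequality uses $1-2^{-K}\leq 1$ and the second uses $N^{1-K}\leq 1$ for $K\geq 1$.

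For the converse part (ii) I would handle the two bounds separately, both of which are essentially standard. For $\alpha \geq \alpha_0$, I would argue that $W_{1:K}$ is already determined by the stored contents alone: from the decoding constraint $H(W_k | A_{1:N}^{[k]}, Q_{1:N}^{[k]}) = 0$, the fact that each $A_n^{[k]}$ is a function of $(S_n, Q_n^{[k]})$, and the independence of $\bF$ (hence of all queries) from $W_{1:K}$, a short information-theoretic chain produces $H(W_{1:K} | S_{1:N}) = 0$. From this one immediately gets $\sum_{n=1}^N H(S_n) \geq H(S_{1:N}) \geq H(W_{1:K}) = KL$, and dividing by $NL$ yields $\bar{\alpha} \geq K/N = \alpha_0$. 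For $\beta \geq \beta_0$, I would appeal to the Sun--Jafar capacity converse \cite{Sun-Jafar-PIR-capacity-2017IT} already recorded in \eqref{def-beta0}, namely $\beta_{\min}(\infty) = \beta_0$; since $\beta_{\min}(\alpha)$ is non-increasing in $\alpha$, the bound $\beta \geq \beta_0$ persists at every achievable pair.

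I do not anticipate any real obstacle: the substantive content of the proposition is the observation that the existing generalized MDS-PIR construction, at its smallest nontrivial parameter choice, already lies within a factor two of the corner $(\alpha_0,\beta_0)$ dictated by the converse. The ``regime-wise'' interpretation advertised in the introduction then follows automatically, since once $(2\alpha_0, 2\beta_0)$ is achievable, no Pareto-optimal point can simultaneously satisfy $\alpha > 2\alpha_0$ and $\beta > 2\beta_0$, so every Pareto-optimal pair is within a factor of two of optimum in at least one coordinate.
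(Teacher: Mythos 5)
Your proof is correct and follows essentially the same route as the paper: the paper achieves $(2\alpha_0,2\beta_0)$ via the uncoded storage PIR code at $T=2$, which is exactly the same tradeoff point as your generalized MDS-PIR code at $(T_1,T_2)=(1,2)$, and it verifies the same inequality $\frac{2}{N}(1-2^{-K})\leq 2\beta_0$; the converse is likewise dispatched by the definitions of $\alpha_0$ and $\beta_0$ (your extra detail on $H(W_{1:K}\mid S_{1:N})=0$ and the Sun--Jafar capacity bound just fills in what the paper leaves implicit).
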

%%-------------------- END thm-simple-approx ------------------------

\newcommand{\cross}{$\mathbin{\tikz [x=1.4ex,y=1.4ex,line width=.2ex] \draw (0,0) -- (0.5,0.5) (0,0.5) -- (0.5,0);}$}
\newcommand{\circles}{$\mathbin{\tikz [x=1.4ex,y=1.4ex,line width=.2ex] \filldraw (0,0) circle (1.0pt);}$}
\begin{figure}[!h]
	\centering
	
	\begin{subfigure}{0.45\textwidth}
		\centering
		\begin{tikzpicture}
		\draw [->] (0,0)--(4.5,0); \node [right] at (4.5,0) {$\alpha$}; 
		\draw [->] (0,0)--(0,3.5); \node [right] at (0,3.5) {$\beta$}; 
		
		\fill [gray,opacity=0.3] (0.5,0.5)--(0.5,3.0)--(1.0,3.0)--(1.0,1.0)--(4.0,1.0)--(4.0,0.5)--cycle;
		
		\draw [dashed] (0,0.5)--(4.0,0.5); \node [left] at (0,0.4) {$\beta_0$}; \draw [dashed] (0.5,0)--(0.5,3.0); 
		\draw [dashed] (1.0,1.0)--(4.0,1.0); \node [below] at (0.45,0) {$\alpha_0$}; \draw [dashed] (1.0,1.0)--(1.0,3.0); 
		\draw (1,0)--(1,0.1); \node[below] at (1.2,0) {$2\alpha_0$}; 
		\draw (0,1)--(0.1,1); \node[left] at (0,1.1) {$2\beta_0$}; 
		\node [left] at (0,3) {$\alpha_0$}; \draw [dashed] (0,3)--(0.1,3); 
		\node [below] at (4,0) {$K$}; \draw [dashed] (4,0)--(4,0.1); 
		\draw [red] plot [samples=100, domain=0.53 : 4] (\x,{1/(5*(\x-0.45)) + 0.48});
		\end{tikzpicture}
		\caption{Simple approximation}
		\label{fig_simple-tradeoff}
	\end{subfigure}
	~
	\begin{subfigure}{0.45\textwidth}
		\centering
		\begin{tikzpicture}
		\draw [->] (0,0)--(4.5,0); \node [right] at (4.5,0) {$\alpha$}; 
		\draw [->] (0,0)--(0,3.5); \node [right] at (0,3.5) {$\beta$}; 
		
		\draw [dashed] (0,0.5)--(4.0,0.5); \node [left] at (0,0.4) {$\beta_0$}; \draw [dashed] (0.5,0)--(0.5,3.0); 
		\draw [dashed] (1.0,1.0)--(4.0,1.0); \node [below] at (0.45,0) {$\alpha_0$}; \draw [dashed] (1.0,1.0)--(1.0,3.0); 
		\draw (1,0)--(1,0.1); \node[below] at (1.2,0) {$2\alpha_0$}; 
		\draw (0,1)--(0.1,1); \node[left] at (0,1.1) {$2\beta_0$}; 
		\node [left] at (0,3) {$\alpha_0$}; \draw [dashed] (0,3)--(0.1,3); 
		\node [below] at (4,0) {$K$}; \draw [dashed] (4,0)--(4,0.1); 
		
		\draw [blue] (0.5,3)--(0.58,2.2)--(0.66,1.55)--(0.75,1.05)--(0.85,0.93)--(1,0.84)--(1.25,0.75)--(1.5,0.7)--(2.5,0.55)--(4,0.5);
		\fill [gray,opacity=0.3] (0.5,0.5)--(0.5,3)--(0.58,2.2)--(0.66,1.55)--(0.75,1.05)--(0.85,0.93)--(1,0.84)--(1.25,0.75)--(1.5,0.7)--(2.5,0.55)--(4,0.5)--cycle;
		
		\foreach \m in {(0.5,3), (0.58,2.2), (0.66,1.55), (0.75,1.05), (0.85,0.93), (1,0.84), (1.25,0.75), (1.5,0.7), (2.5,0.55), (4,0.5)}
		\node [blue] at \m {\cross};
		
		\draw (1.3,2.5) rectangle (5.2,3.0); 
		\draw [blue] (1.4,2.75)--(1.9,2.75); \node [blue] at (1.65,2.75) {\cross}; \node at (3.55,2.75) {\small generalized MDS-PIR}; 
		\end{tikzpicture}
		\caption{Approximation vs. the generalized MDS-PIR bound}
		\label{fig_InnerBounds}
	\end{subfigure}
	
	\caption{Lower bound, upper bounds and simple approximation of the optimal tradeoff.}
\end{figure}
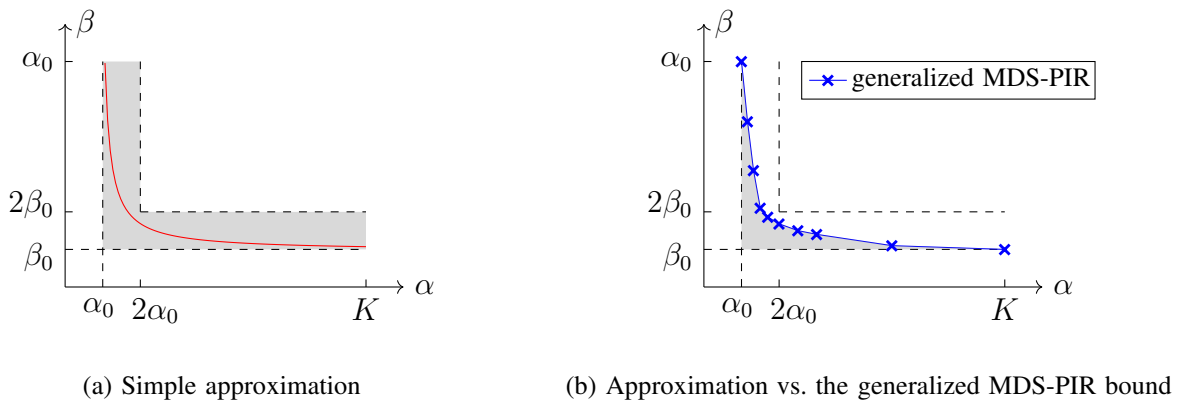

The combination of the upper bound given in (i) and the lower bound given in (ii) provide an approximate characterization as shown in Fig.~\ref{fig_simple-tradeoff}. In Fig.~\ref{fig_InnerBounds}, we further include the upper bounds induced by the generalize MDS-PIR code \cite{Banawan-Ulukus-ITW19} to illustrate this approximate characterization. 

\begin{proof}[Proof of \Cref{thm-simple-approx}]
	The lower bounds in (ii) follow simply from the definition of $\alpha_0$ and $\beta_0$, and thus we only need to prove the upper bounds in (i). 
	This can be done by showing that the point $\left(2\alpha_0,2\beta_0\right)$ is above the tradeoff curve achieved by the 
	uncoded storage PIR code given in \cite{uncoded1,uncoded2}, for which the storage cost and download cost are given by the lower convex envelop of the following points
	\begin{align}
	\quad (\bar{\alpha}, \bar{\beta}) = \left( \frac{KT}{N}, \frac{1}{N} \left(\sum_{i=0}^{K-1} \frac{1}{T^i} \right)\right), \quad T = 1, 2, \cdots, N. \label{uncoded-tradeoff}
	\end{align}
	Taking $T=2$, we obtain 
	\begin{align}
	\quad (\bar{\alpha}, \bar{\beta}) = \left(2\alpha_0, \frac{1}{N} \cdot \frac{1-\left(\frac{1}{2}\right)^K}{1-\frac{1}{2}}\right). 
	\end{align}
	For $N\geq2$, it is seen that the 
	\begin{equation}
	\bar{\beta}=\frac{1}{N} \frac{1-\left(\frac{1}{2}\right)^K}{1-\frac{1}{2}} < \frac{2}{N} \cdot \frac{1-\left(\frac{1}{N}\right)^K}{1-\frac{1}{N}}=2\beta_0, 
	\end{equation}
	which completes the proof.
\end{proof}

\begin{remark}
It is also possible to utilize the upper bound induced by the MDS-PIR code \cite{Banawan-Ulukus-codedPIR-2018IT} to prove this proposition, which we omit for brevity. 
\end{remark}

This approximate characterization shows that the storage-retrieval tradeoff can be divided into three regimes: a storage-bound regime, a retrieval-bound regime, and an intermediate regime. 
In the storage-bound regime $\beta\geq 2\beta_0$, the optimal storage cost is sandwiched between $\alpha_0$ and $2\alpha_0$ for any fixed  $\beta$;
in the retrieval-bound regime $\alpha\geq 2\alpha_0$, the optimal retrieval cost is sandwiched between $\beta_0$ and $2\beta_0$ for any fixed $\alpha$; in the intermediate regime, where $\alpha\leq 2\alpha_0$ and $\beta\leq 2\beta_0$, the optimal sum rate $\alpha+\beta$ is sandwiched between $\alpha_0+\beta_0$ and $2\alpha_0+2\beta_0$. Thus in the first regime, the potential loss of using the uncoded PIR code (or the MDS-PIR code) in terms of the storage cost is less than a multiplicative factor of 2, while in the second, the potential loss of using either of these two codes in terms of the retrieval cost is less than a factor of 2. In the intermediate regime, the sum-rate loss is less than a factor of 2 using these codes. This result makes clear what questions remain difficult: to find good approximate (or exact) characterization of the retrieval cost in the storage-bound regime, that on the storage cost in the retrieval-bound regime, and either direction in the intermediate regime. In \cite{tian2019storage}, these questions were considered for the extreme cases when $\alpha=\alpha_0$ and $\beta=\beta_0$, and a precise characterization was given for the former and an approximate one for the latter. However, beyond those two extreme cases, the answers to these questions remains elusive. In the sections to follow, we shall provide further results toward answering these questions. 

%%-----------------------------------------------------------------
\subsection{A Cyclic Permutation Lemma}
We next introduce a general technique to produce more sophisticated codes from simpler codes, and present several immediate applications of this lemma. In Section \ref{section-UpperBounds} we shall further utilize this technique to produce other achievable $(\alpha,\beta)$ tradeoff points based on several new code constructions. 

\begin{lemma}[Cyclic permutation lemma]\label{lemma-cyclic-permutation} 
If an $(N,K)$ PIR code can achieve the tradeoff point $(\alpha, \beta)$, then there exists an $(M,K)$ PIR code, $M\geq N$, that achieves the tradeoff point $\left(\frac{N}{M}\alpha, \frac{N}{M}\beta \right)$.% tradeoff point for the $(M,K)$ PIR system. 
\end{lemma}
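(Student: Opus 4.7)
The plan is to build the $(M,K)$ code by taking $M$ cyclically shifted copies of the original $(N,K)$ code, where each copy is applied to a length-$N$ cyclic window of the $M$ servers. Concretely, I would inflate the message length from $L$ to $ML$ by partitioning each $W_k$ into $M$ equal sub-blocks $W_k^{(1)}, \ldots, W_k^{(M)}$ of length $L$. For each $m \in [1:M]$, the sub-blocks $\{W_k^{(m)}\}_{k=1}^K$ are encoded and stored exactly as in the original $(N,K)$ scheme, but placed on the $N$ servers indexed by $\{((m + j - 1) \bmod M) + 1 : j \in [0:N-1]\}$. A retrieval of $W_k$ then runs the original protocol $M$ times in parallel, once per sub-block, each on its own cyclic window of $N$ servers, using $M$ independent random keys.

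Next I would verify the cost accounting. By the cyclic symmetry, every server lies in exactly $N$ of the $M$ windows, so every server stores exactly $N$ sub-blocks. Its total storage is therefore $N \alpha L$ symbols, which normalized against the new per-message length $ML$ yields $\tfrac{N}{M}\alpha$. An identical count applies to downloads: each of the $M$ sub-block retrievals contributes $N \beta L$ aggregate download distributed across its $N$ window servers, and summing over $m$ and dividing by $M$ gives a per-server average download of $N \beta L$, i.e., $\tfrac{N}{M}\beta$ per message symbol. Correctness is immediate because each invocation of the base code recovers its assigned sub-block, and the $M$ sub-blocks together reconstitute $W_k$.

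The remaining point is privacy, and this is where I would be most careful. The queries a fixed server $n$ receives are those from the $N$ windows containing $n$; because each window uses an independent random key, these query components are mutually independent. For each window individually, the base code's privacy guarantees that the marginal distribution of the query at server $n$ is the same for every choice of $k$. Since independence is preserved and each marginal is $k$-invariant, the joint distribution of the full query bundle at server $n$ is $k$-invariant, which is exactly the required privacy condition. I expect the main obstacle to be not the construction but this privacy verification, since one must resist the temptation to require a stronger joint condition than actually follows from the definition; once one writes down that privacy is a marginal, distribution-level requirement at a single server, the independent-key argument closes the loop and the $N/M$ scaling drops out of the cyclic counting.
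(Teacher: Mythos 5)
Your construction is the same round-robin/cyclic-window scheme the paper uses: partition each message into $M$ length-$L$ sub-blocks, run an independent copy of the base $(N,K)$ code on each cyclic window of $N$ servers, and count that each server participates in exactly $N$ of the $M$ sub-codes to get the $\tfrac{N}{M}$ scaling. Your privacy verification (independent keys per window, so the joint query at a server is a product of $k$-invariant marginals) is in fact spelled out more carefully than the paper's one-line appeal to the base code's privacy, and the argument is correct.
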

\begin{proof}
We prove the lemma by generating an $(M,K)$ PIR code from an $(N,K)$ PIR code using round-robin, which is illustrated in Fig.~\ref{fig_round-robin}. 
Let the message length in the $(N,K)$ PIR (base) code be $L$, and in the $(M,K)$ code to be constructed, the message length will be $ML$. 	Therefore, for the $(M,K)$ PIR problem, we can partition each message $W_k$ into $M$ sub-messages $W_k^1, W_k^2, \cdots, W_k^M$, each has a message length $L$. For each $m\in[1:M]$, the sub-messages $W_1^m, W_2^m, \cdots, W_K^m$ can be encoded by the original $(N,K)$ code, and placed on a set of $N$ consecutive servers, i.e., the base $(N,K)$ PIR code is utilized on the servers in a round-robin manner. More precisely, for $m=1,2,\ldots,M$, the $K$ sub-messages $W_1^m, W_2^m, \cdots, W_K^m$  are encoded using the $(N,K)$ PIR storage code as $S_1^m, S_2^m, \cdots, S_N^m$; for notation simplicity, further define $S_n^m=\emptyset$ for $n\in[N+1,N+2,\cdots,M]$. Then server $m$ stores the encoded messages $S_{(m+1)_N}^1, S_{(m+2)_N}^2, \cdots, S_{(m+M)_N}^M$ for $m=1,2,\ldots,M$, where $(x)_N$ is defined for any integer $x$ as 
	\begin{equation}
	(x)_N = 
	\begin{cases}
	x \mod N, &\text{if }x\mod N\neq 0  \\
	N, &\text{if }x \mod N =0 
	\end{cases}.
	\end{equation}

	The retrieval is done on each group of sub-messages, $(W_1^m,W_2^m,\ldots,W_K^m)$, which are stored on the corresponding server set that they are stored, for $m=1,2,\ldots,M$. Since for each group of sub-messages, the retrieval is private by the property of the base $(N,K)$ PIR code, the overall retrieval is also private. 
	Since the message lengths of the base code and the new code are $L$ and $ML$, respectively, 
	the resulting storage and download cost of the new code are $\alpha' = \frac{MN\alpha L}{MML}=\frac{N}{M} \alpha$ and $\beta' = \frac{MN\beta L}{MML} = \frac{N}{M} \beta$. The proof is complete.
\end{proof}

\newcommand{\squares}{$\mathbin{\tikz [x=2.0ex,y=2.0ex,line width=.2ex] \filldraw (0,0) rectangle (1,1);}$}
\begin{figure}[!t]
	\centering
	\begin{tikzpicture}[scale=1.45]
	\draw (-3.2,0.2) rectangle (-3.7,-3.7); \node at (-3.5,-1.9) {\rotatebox[origin=c]{270}{Messages $W_{1:K}$}}; 
	\draw[->,>=stealth] (-3.2,0)--(-2.5,0); \draw[->,>=stealth] (-3.2,-0.5)--(-2.5,-0.5); \draw[->,>=stealth] (-3.2,-1.0)--(-2.5,-1.0); 
	\draw[->,>=stealth] (-3.2,-2.0)--(-2.7,-2.0); \draw[->,>=stealth] (-3.2,-2.5)--(-2.8,-2.5); \draw[->,>=stealth] (-3.2,-3.5)--(-2.5,-3.5); 
	
	\node [red, left] at (-0.2,0) {\small Sub-message $W_{1:K}^1$}; 
	\node[red] at (0,0) {\squares}; \node[red] at (0.5,0) {\squares}; \node[red] at (1.0,0) {$\bm{\cdots}$}; \node[red] at (1.5,0) {\squares};
	
	\node [cyan, left] at (-0.2,-0.5) {\small Sub-message $W_{1:K}^2$}; 
	\node[cyan] at (0.5,-0.5) {\squares}; \node[cyan] at (1.0,-0.5) {\squares}; \node[cyan] at (1.5,-0.5) {$\bm{\cdots}$}; \node[cyan] at (2.0,-0.5) {\squares};
	
	\node [blue, left] at (-0.2,-1.0) {\small Sub-message $W_{1:K}^3$}; 
	\node[blue] at (1.0,-1.0) {\squares}; \node[blue] at (1.5,-1.0) {\squares}; \node[blue] at (2.0,-1.0) {$\bm{\cdots}$}; \node[blue] at (2.5,-1.0) {\squares};
	
	\node [black, left] at (-0.5,-1.5) {$\vdots$}; \node[black] at (2.0,-1.5) {$\vdots$};
	
	\node [orange, left] at (-0.2,-2.0) {\small Sub-message $W_{1:K}^{M-N}$}; 
	\node[orange] at (2.0,-2.0) {\squares}; \node[orange] at (2.5,-2.0) {\squares}; \node[orange] at (3.0,-2.0) {$\bm{\cdots}$}; \node[orange] at (3.5,-2.0) {\squares};
	
	\node [green, left] at (-0.2,-2.5) {\small Sun-message $W_{1:K}^{M-N+1}$}; 
	\node[green] at (0.0,-2.5) {\squares}; \node[green] at (2.5,-2.5) {\squares}; \node[green] at (3.0,-2.5) {$\bm{\cdots}$}; \node[green] at (3.5,-2.5) {\squares};
	
	\node [black, left] at (-0.5,-3.0) {$\vdots$}; \node[black] at (2.5,-3.0) {$\vdots$};
	
	\node [blue!40, left] at (-0.2,-3.5) {\small Sub-message $W_{1:K}^{M}$}; 
	\node[blue!40] at (0.0,-3.5) {\squares}; \node[blue!40] at (0.5,-3.5) {$\bm{\cdots}$}; \node[blue!40] at (1.0,-3.5) {\squares};  \node[blue!40] at (3.5,-3.5) {\squares};
	
%	\node[above] at (-1.0,0.2) {\rotatebox[origin=c]{270}{\small Messages}}; 
	
	\draw[dashed, black!30] (-0.17,0.25) rectangle (0.17,-3.8); \node[above] at (0,0.2) {\rotatebox[origin=c]{270}{\small Server 1}}; 
	\draw[dashed, black!30] (0.33,0.25) rectangle (0.67,-3.8); \node[above] at (0.5,0.2) {\rotatebox[origin=c]{270}{\small Server 2}}; 
	\draw[dashed, black!30] (0.83,0.25) rectangle (1.17,-3.8); \node[above] at (1.0,0.7) {$\cdots$}; 
	\draw[dashed, black!30] (1.33,0.25) rectangle (1.67,-3.8); \node[above] at (1.5,0.2) {\rotatebox[origin=c]{270}{\small Server $N$}}; 
	\draw[dashed, black!30] (1.83,0.25) rectangle (2.17,-3.8); \node[above] at (2.0,0.2) {\rotatebox[origin=c]{270}{\small Server $N+1$}}; 
	\draw[dashed, black!30] (2.33,0.25) rectangle (2.67,-3.8); \node[above] at (2.5,0.2) {\rotatebox[origin=c]{270}{\small Server $N+2$}}; 
	\draw[dashed, black!30] (2.83,0.25) rectangle (3.17,-3.8); \node[above] at (3.0,0.7) {$\cdots$}; 
	\draw[dashed, black!30] (3.33,0.25) rectangle (3.67,-3.8); \node[above] at (3.5,0.2) {\rotatebox[origin=c]{270}{\small Server $M$}}; 
	
	\node[above] at (4.25,0.2) {\rotatebox[origin=c]{270}{\small $(N,K)$ PIR code}}; 
	
	\draw[dashed, red!80] (-0.22,0.17) rectangle (3.75,-0.17); \draw[->,>=stealth] (3.75,0)--(4.0,0); \draw (4.0,0.2) rectangle (4.5,-0.2); 
	\draw[->,>=stealth] (4.5,0.0)--(4.75,0.0); \node[right, red!80] at (4.75,0.0) {\small Sub-code $1$};
	\draw[dashed, cyan!80] (-0.22,-0.33) rectangle (3.75,-0.67); \draw[->,>=stealth] (3.75,-0.5)--(4.0,-0.5); \draw (4.0,-0.3) rectangle (4.5,-0.7); 
	\draw[->,>=stealth] (4.5,-0.5)--(4.75,-0.5); \node[right, cyan!80] at (4.75,-0.5) {\small Sub-code $2$};
	\draw[dashed, blue!80] (-0.22,-0.83) rectangle (3.75,-1.17); \draw[->,>=stealth] (3.75,-1.0)--(4.0,-1.0); \draw (4.0,-0.8) rectangle (4.5,-1.2); 
	\draw[->,>=stealth] (4.5,-1.0)--(4.75,-1.0); \node[right, blue!80] at (4.75,-1.0) {\small Sub-code $3$};
	\node at (4.25,-1.5) {$\vdots$};
	\draw[dashed, orange!80] (-0.22,-1.83) rectangle (3.75,-2.17); \draw[->,>=stealth] (3.75,-2.0)--(4.0,-2.0); \draw (4.0,-1.8) rectangle (4.5,-2.2); 
	\draw[->,>=stealth] (4.5,-2.0)--(4.75,-2.0); \node[right, orange!80] at (4.75,-2.0) {\small Sub-code $M \! - \! N$};
	\draw[dashed, green!80] (-0.22,-2.33) rectangle (3.75,-2.67); \draw[->,>=stealth] (3.75,-2.5)--(4.0,-2.5); \draw (4.0,-2.3) rectangle (4.5,-2.7); 
	\draw[->,>=stealth] (4.5,-2.5)--(4.75,-2.5); \node[right, green!80] at (4.75,-2.5) {\small Sub-code $M \! - \! N+1$};
	\node at (4.25,-3.0) {$\vdots$};
	\draw[dashed, blue!30] (-0.22,-3.33) rectangle (3.75,-3.67); \draw[->,>=stealth] (3.75,-3.5)--(4.0,-3.5); \draw (4.0,-3.3) rectangle (4.5,-3.7); 
	\draw[->,>=stealth] (4.5,-3.5)--(4.75,-3.5); \node[right, blue!30] at (4.75,-3.5) {\small Sub-code $M$};
	
%	\node[above] at (5.25,0.2) {\rotatebox[origin=c]{270}{\small $(M,K)$ PIR code}}; 
	
	\draw (7.5,0.2) rectangle (8.0,-3.7); \node at (7.8,-1.9) {\rotatebox[origin=c]{270}{\small $(M,K)$ PIR code}}; 
	\draw[->,>=stealth] (6.3,0)--(7.5,0); \draw[->,>=stealth] (6.3,-0.5)--(7.5,-0.5); \draw[->,>=stealth] (6.3,-1.0)--(7.5,-1.0); 
	\draw[->,>=stealth] (6.8,-2.0)--(7.5,-2.0); \draw[->,>=stealth] (7.0,-2.5)--(7.5,-2.5); \draw[->,>=stealth] (6.3,-3.5)--(7.5,-3.5); 
	\end{tikzpicture}
	\caption{Generation of $(M,K)$ PIR code from $(N,K)$ PIR code using round-robin.}
	\label{fig_round-robin}
\end{figure}
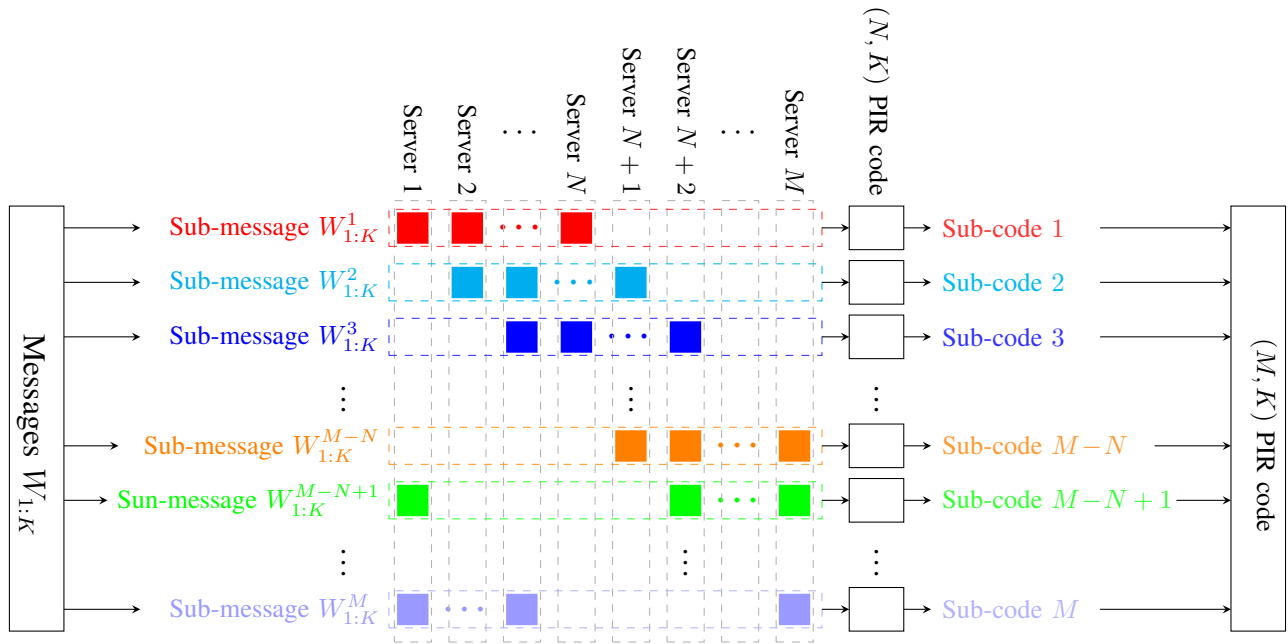

\begin{remark}
The lemma in fact also holds for the $T$-colluding PIR problem \cite{Sun-Jafar-robustPIR-2017IT} and the symmetric PIR problem \cite{Sun-Jafar-SPIR-19IT}, when the storage cost is taken into consideration.  The proof is identical and thus omitted here.
\end{remark}

We can apply the cyclic permutation lemma on any existing codes, e.g., the Sun-Jafar code \cite{Sun-Jafar-PIR-capacity-2017IT}, the TSC code~\cite{Tian-Sun-Chen-PIR-IT19}, the MDS-PIR code \cite{Banawan-Ulukus-codedPIR-2018IT}, and the uncoded storage PIR code \cite{uncoded1,uncoded2}. In fact, the performance of the uncoded storage PIR code in \cite{uncoded1,uncoded2} and the generalized MDS-PIR coded in \cite{Banawan-Ulukus-ITW19} can be obtained this way from the code \cite{Sun-Jafar-PIR-capacity-2017IT} and \cite{Banawan-Ulukus-codedPIR-2018IT}, respectively, as we shall show next.  

\begin{application}
	The $(M,K)$ uncoded storage PIR code in \cite{uncoded1,uncoded2} can be produced from the Sun-Jafar code \cite{Sun-Jafar-PIR-capacity-2017IT} using the cyclic permutation lemma. The  storage cost and download cost of $(N,K)$ Sun-Jafar code is 
	\begin{align}
	(\alpha,\beta)=\left(K,\frac{1}{N}+\frac{1}{N^2}+\cdots+\frac{1}{N^K}\right). 
	\end{align}
	By applying \Cref{lemma-cyclic-permutation} to an $(N,K)$ Sun-Jafar code, the corresponding storage download cost can be obtained as 
	\begin{align}
	(\alpha',\beta')=\frac{N}{M}(\alpha,\beta)=\left(\frac{NK}{M},\frac{1}{M}\sum_{i=0}^{K-1}\frac{1}{N^i}\right). 
	\end{align}
	For different storage requirement, we can choose different Sun-Jafar base code by varying $N$. By taking $N=1,2,\cdots,M$, the storage-retrieval tradeoff of the uncoded storage PIR code in \eqref{uncoded-tradeoff} is obtained. We remark that the base code can be any other PIR capacity-achieving code, e.g., the TSC code \cite{Tian-Sun-Chen-PIR-IT19}, which can yield the same performance. 
\end{application}

\begin{application}
	The $(M,K)$ generalized MDS-PIR code in \cite{Banawan-Ulukus-ITW19} can be produced using the cyclic permutation lemma from the MDS-PIR codes in~\cite{Banawan-Ulukus-codedPIR-2018IT}. 
	The average storage and download cost of MDS-PIR code with parameters $(N,K)$ is given in \eqref{eqn:mds} as 
	\begin{align}
	\quad (\alpha, \beta)=\left( \frac{K}{T}, \frac{1}{N} \left(\sum_{i=0}^{K-1} \frac{T^i}{N^i} \right)\right), \quad T = 1, 2, \cdots, N. 
	\end{align}
	By applying \Cref{lemma-cyclic-permutation} to an $(N,K)$ MDS-PIR code, the corresponding storage download cost can be obtained as 
	\begin{align}
	(\alpha,\beta)=\frac{N}{M}(\alpha,\beta)=\left(\frac{NK}{MT},\frac{1}{M}\sum_{i=0}^{K-1}\frac{T^i}{N^i}\right), \quad T = 1, 2, \cdots, N. 
	\end{align}
	%	For different storage requirement, we can choose MDS-PIR code with different $N$. 
	By letting $N=1,2,\cdots,M$, we obtain the storage-retrieval tradeoff of the generalized MDS-PIR code \cite{Banawan-Ulukus-ITW19} given in (\ref{eqn:gMDS}). We illustrate this storage-retrieval points for the codes obtained by applying the technique when $N=K=3$ and $M=4$ in Fig.~\ref{fig_example-cyclic}, where it is seen that indeed new tradeoff points are obtained beyond those achieved by the MDS-PIR base code. 
	
		\begin{figure}[!t]
		\centering
		\includegraphics[scale=0.6]{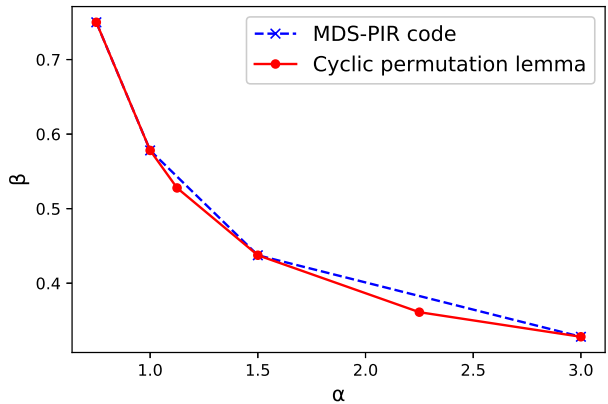}
		\caption{MDS-PIR code upper bound vs. that obtained by cyclic permutation lemma.}
		\label{fig_example-cyclic}
	\end{figure}
\end{application}

%%-----------------------------------------------------------------
\subsection{An Linear Programming Lower Bound}\label{section-LP-LowerBound}
Characterizing fully the strorage-retrieval tradeoff appears difficult, partly due to the lack of strong lower bounds, though some initial effort was reported in \cite{Sun-Jafar-PIR-multiround-2018IT,tian2018shannon, tian2019storage}. This problem can potentially be solved computationally in the generic entropic linear programming (LP)  framework \cite{yeung1997framework}, similar to the approach discussed in \cite{tian2014characterizing,tian2018symmetry}. This generic approach however suffers from high complexity that is  exponential in the number of random variables in the problem, since the variables in this generic entropic LP are the joint entropy values of all the possible subsets of these random variables. On the other hand, the PIR problem in fact has a very special structure, which can be well captured by a small class of inequalities. In the following, we use one special class of inequalities to formulate a relaxed linear program. 

As a first step, we shall utilize the symmetry structure in this problem. As shown in \cite{Tian-Sun-Chen-PIR-IT19}, any PIR code can be symmetrized without sacrificing the storage and download cost to satisfy two symmetry relations: message symmetry and server symmetry. 
Let $\cP_m$ be the set of all permutations of $[1:m]$. A symmetrized PIR code satisfies the following condition for any $\cA, \cB \subset [1:N]$ and $\cC \subset [1:K]$, $k \in [1:K]$, and any $\pi \in \cP_N$ and $\pi' \in \cP_K$, 
\begin{align}
H(S_{\cA}, A^{[k]}_{\cB} |\bF,  W_{\cC}) = H(S_{\pi(\cA)}, A^{[\pi'(k)]}_{\pi(\cB)} |\bF, W_{\pi'(\cC)}).
\end{align}
In the sequel, we consider only such symmetrized codes without loss of optimality. 

For any nonnegative integers $a$ and $b$ such that $a + b \leq N$, let $\cA$ and $\cB$ be two disjoint subsets of $[1:N]$ with $|\cA|=a$ and $|\cB| = b$, we define (for any symmetrized code)
\begin{align*}
&x_k(a, b) \triangleq H(S_{\mathcal{A}}, A^{[k]}_{\mathcal{B}} |\bF,  W_{1:k})/L  \\
&y_k(a, b) \triangleq H(S_{\mathcal{A}}, A^{[k]}_{\mathcal{B}} |\bF,  W_{1:k-1})/L.
\end{align*}

It is straightforward to see that by definition
\begin{align*}
y_1(1, 0) &\geq   \alpha\\
y_1(0,1) &\geq  \beta.
\end{align*}
Thus in order to lower bound a linear combination of $a_0 \alpha+b_0 \beta$ where $a_0,b_0\geq 0$, we can consider the following linear program, which we summarize as a proposition.
\begin{proposition}[Relaxed entropic LP]
\label{prop:LP}
For any achievable $(\alpha,\beta)$, the linear combination $a_0 \alpha+b_0 \beta$ where $a_0,b_0\geq 0$ is lower-bounded by the solution of the following linear program.
\begin{eqnarray} \label{eqn:lp}
\text{minimize: } & & a_0 y_1(1, 0) + b_0 y_1(0,1) \\
\text{subject to: }
&\text{(Submodular)} & x_k( |\mathcal{A}_1|, |\mathcal{B}_1| ) + x_k(|\mathcal{A}_2|, |\mathcal{B}_2| ) \geq  x_k( |\mathcal{A}_1 \cup \mathcal{A}_2|, |\mathcal{B}_1 \cup \mathcal{B}_2 / (\mathcal{A}_1 \cup \mathcal{A}_2)| )  \notag \\
&&\qquad\qquad\qquad +x_k( |\mathcal{A}_1 \cap \mathcal{A}_2|, |\mathcal{B}_1 \cap \mathcal{B}_2|+ |\mathcal{A}_1 \cap \mathcal{B}_2| + |\mathcal{A}_2 \cap \mathcal{B}_1|  ), \notag\\
&& \qquad\quad  \forall k \in [1:K-1], \forall \cA_i, \cB_i \in [1:N], \cA_i\cap \cB_i=\emptyset, i=1,2\\ 
&& y_k( |\mathcal{A}_1|, |\mathcal{B}_1| ) + y_k(|\mathcal{A}_2|, |\mathcal{B}_2| ) \geq  y_k( |\mathcal{A}_1 \cup \mathcal{A}_2|, |\mathcal{B}_1 \cup \mathcal{B}_2 / (\mathcal{A}_1 \cup \mathcal{A}_2)| )  \notag \\
&&\qquad\qquad\qquad+y_k( |\mathcal{A}_1 \cap \mathcal{A}_2|, |\mathcal{B}_1 \cap \mathcal{B}_2|+ |\mathcal{A}_1 \cap \mathcal{B}_2| + |\mathcal{A}_2 \cap \mathcal{B}_1|  ), \notag\\
&& \qquad\qquad  \forall k \in [1:K], \forall \cA_i, \cB_i \in [1:N], \cA_i\cap \cB_i=\emptyset, i=1,2\\ 
& \text{(Monotone) }& x_k(a,b) \geq x_{k}(a,b-1), \notag\\
&& \qquad\qquad  \forall k \in [1:K-1], \forall a \in [0:N-1], \forall b \in [1:N - a]  \\ 
&&  y_{k}(a, b) \geq y_k(a,b-1),\notag\\
&& \qquad\qquad  \forall k \in [1:K], \forall a \in [0:N-1], \forall b \in [1:N - a]  \\ 
&\text{(Decodable)}& y_k(a, N-a) \geq 1 + x_k(a, N-a), ~ \forall k \in [1:K], \forall a \in [0:N-1]\\
&\text{(Han's inequality) }& y_{k}(0,b) \geq \frac{b}{N} y_{k}(0,N), \quad  \forall k \in [1:K], \forall b \in [1:N-1] \\
&\text{(Privacy)}&  x_k(a, 1) = y_{k+1}(a, 1), \quad  \forall k \in [1:K-1], \forall a \in [0:N-1]\\
&\text{(Invariance)}&  x_k(a,0) = y_{k+1}(a,0), \quad  \forall k \in [1:K-1], \forall a \in [1:N]  \\
&\text{(Boundary)}&x_K(a,b) = 0, \quad \forall a \in [0:N], \forall b \in [0: N - a].
\end{eqnarray}
\end{proposition}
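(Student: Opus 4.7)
The plan is to show that the entropic quantities $x_k(a,b), y_k(a,b)$ computed from any symmetrized $(N,K)$ PIR code achieving $(\alpha, \beta)$ form a feasible point of the LP whose objective value is at most $a_0 \alpha + b_0 \beta$. Since every PIR code can be symmetrized without inflating either cost, the restriction to symmetrized codes (under which $x_k, y_k$ are well-defined functions of the cardinalities) is without loss of generality. The objective-value bound rests on two simple observations: $y_1(1,0) = H(S_n \mid \bF)/L = H(S_n)/L \leq \alpha$, using $\bF \perp S_n$ together with the informational storage cost inequality, and $y_1(0,1) = H(A^{[1]}_n \mid \bF)/L \leq H(A^{[1]}_n \mid Q^{[1]}_{1:N})/L \leq \beta$, using that $\bF$ determines the queries so conditioning on $\bF$ can only reduce entropy. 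It then remains to verify every constraint of the LP.

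The routine constraints are quick to dispatch. Monotonicity reduces to $H(X,Y \mid Z) \geq H(X \mid Z)$. Decodability is the chain-rule identity $y_k(a, N-a) = H(W_k \mid \bF, W_{1:k-1})/L + x_k(a, N-a) = 1 + x_k(a, N-a)$, justified by noting that the decoding condition together with $S_\cA$ determining $A^{[k]}_\cA$ given $\bF$ forces $W_k$ to be a function of $(S_\cA, A^{[k]}_{[1:N]\setminus \cA}, \bF, W_{1:k-1})$. Han's inequality follows from applying the standard averaged Han's inequality to $(A^{[k]}_1, \ldots, A^{[k]}_N)$ conditioned on $(\bF, W_{1:k-1})$ and collapsing the resulting sum via server symmetry. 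Invariance $x_k(a,0) = y_{k+1}(a,0)$ is definitional (both sides equal $H(S_\cA \mid \bF, W_{1:k})/L$), and the boundary $x_K(a,b) = 0$ follows because $W_{1:K}$ determines each $S_n$, which together with $\bF$ determines each $A^{[K]}_n$.

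The two substantive constraints are privacy and submodularity. For privacy, I would expand $H(A^{[k]}_n \mid S_\cA, \bF, W_{1:k})$ by conditioning on the realization of $Q^{[k]}_n$ (which is a deterministic function of $\bF$) and note that $A^{[k]}_n$ depends on $\bF$ only through $Q^{[k]}_n$; the marginal equality $\Pr(Q^{[k]}_n = q) = \Pr(Q^{[k+1]}_n = q)$, combined with the fact that the answer computed on server $n$ from $S_n$ and a given query $q$ does not depend on the message index, yields equality with the $k{+}1$ version. The submodularity constraint has a non-standard form in which the intersection gains the off-diagonal sizes $|\cA_1 \cap \cB_2| + |\cA_2 \cap \cB_1|$; to recover this, I would augment each collection $\{S_n : n \in \cA_i\} \cup \{A^{[k]}_n : n \in \cB_i\}$ by adjoining $A^{[k]}_n$ for every $n \in \cA_i$. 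Since each such $A^{[k]}_n$ is a deterministic function of $S_n$ given $\bF$, the augmentation preserves conditional entropy. Standard submodularity of joint entropy applied to the augmented collections over the label set $[1:N] \times \{\text{storage}, \text{answer}\}$, followed by pruning the redundant $A^{[k]}_n$ in the resulting union and intersection, produces exactly the LP inequality; the same argument conditioned on $W_{1:k-1}$ instead of $W_{1:k}$ covers the $y_k$ version. The main obstacle is spotting the augmentation trick: without it, direct submodularity yields a strictly weaker inequality than what the LP requires, and it is this submodularity family that dominates both the complexity and the strength of the relaxed LP.
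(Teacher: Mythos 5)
Your proposal is correct and follows essentially the same route as the paper: verify that the entropic quantities of a symmetrized code are LP-feasible, with the only delicate steps being the answer-augmentation argument for the non-standard submodular constraint (adjoining $A^{[k]}_n$ for $n\in\cA_i$, which is exactly how the paper's union/intersection identities are justified) and the identical-distribution argument for the privacy constraint. One minor note: your direction $y_1(1,0)\leq\alpha$ and $y_1(0,1)\leq\beta$ is the one actually needed for the LP value to lower-bound $a_0\alpha+b_0\beta$; the paper's displayed inequalities preceding the proposition state the reverse, which appears to be a typo.
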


In Fig. \ref{fig_compare-LP-Tian19}, we illustrate a set of bounds obtained by solving this linear program for the case $(N,K)=(5,3)$, which are considerably tighter than known bounds in the literature. 
The capacity bound $\beta\geq \beta_0=0.248$ is shown as a horizontal bound, which is indeed obtained through solving the relaxed entropic LP. The two constraints given in~\cite{tian2019storage} are also obtained through the relaxed entropic LP. This is not surprising, since the insights used to formulate the relaxed entropic LP are partly motivated by the proof steps used~there. 

\begin{figure}[tb]
	\centering
	\includegraphics[scale=0.6]{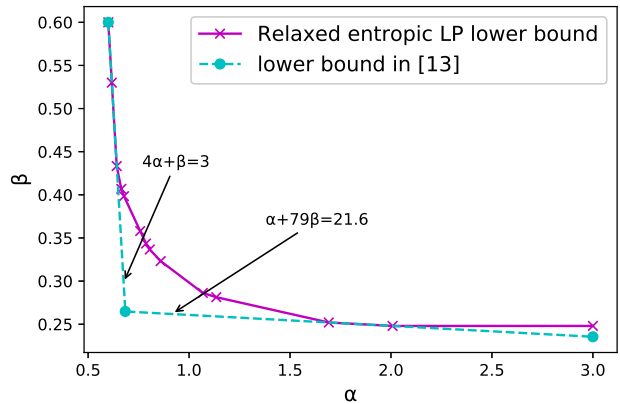}
	\caption{Comparison of the lower bounds in \Cref{prop:LP} and \cite{tian2019storage} for $(N,K)=(5,3)$.}
	\label{fig_compare-LP-Tian19}
\end{figure}

\begin{proof}
The constraints in this LP need to hold for any valid (symmetrized) PIR code for the reasons given below.
\begin{itemize}
\item Submodular: for any disjoint $\cA_1, \cB_1 \subset [1:N]$, disjoint $\cA_2, \cB_2 \subset [1:N]$ and any $k \in [0:K-1]$, $k' \in [1:K]$, by the submodular property of the entropy function, we have
\begin{align}
&H(S_{\cA_1}, A^{[k']}_{\cB_1} |\bF,  W_{1:k}) + H(S_{\cA_2}, A^{[k']}_{\cB_2} |\bF,  W_{1:k}) \notag\\
&\geq H( (S_{\cA_1}, A^{[k']}_{\cB_1}) \cup  (S_{\cA_2}, A^{[k']}_{\cB_2}) |\bF,  W_{1:k}) + H( (S_{\cA_1}, A^{[k']}_{\cB_1}) \cap  (S_{\cA_2}, A^{[k']}_{\cB_2}) |\bF,  W_{1:k})\notag \\
&=  H(S_{\cA_1 \cup \cA_2}, A^{[k']}_{\mathcal{B}_1 \cup \mathcal{B}_2 / (\mathcal{A}_1 \cup \mathcal{A}_2)} |\bF,  W_{1:k}) + H(S_{\cA_1 \cap \cA_2}, A^{[k']}_{(\mathcal{B}_1 \cap \mathcal{B}_2) \cup (\mathcal{A}_1 \cap \mathcal{B}_2) \cup (\mathcal{A}_2 \cap \mathcal{B}_1)} |\bF,  W_{1:k}).
\end{align}
The constraints on $y_k(a,b)$ hold for the similar reason. 
\item Monotone: for any disjoint $\cA, \cB \subset [1:N]$ with $\cB$ non-empty, let $\cB' \subset \cB$ with $|\cB'| = |\cB| - 1$, and for any $k \in [0:K-1]$ we have 
\begin{align}
H(S_{\mathcal{A}}, A^{[k]}_{\mathcal{B}} |\bF,  W_{1:k}) \geq H(S_{\mathcal{A}}, A^{[k]}_{\mathcal{B}'} |\bF,  W_{1:k}),   \label{entropicLP-monotone}
\end{align}
then the given linear constraints on $x_{k}(a,b)$ hold due to the symmetry relation mentioned earlier. The constraints on $y_k(a,b)$ hold for the same reason. 
\item Decodable: for any $\cA \subset [1:N]$, for any $k \in [1: K]$, we have
\begin{align}
H(S_{\mathcal{A}}, A^{[k]}_{\mathcal{A}^c} |\bF,  W_{1:k - 1}) = H(W_k, S_{\mathcal{A}}, A^{[k]}_{\mathcal{A}^c} |\bF,  W_{1:k - 1}) = H(W_k) + H(S_{\mathcal{A}}, A^{[k]}_{\mathcal{A}^c} |\bF,  W_{1:k}),
\end{align}
where $\mathcal{A}^c=[1:N]\setminus \mathcal{A}$.
\item Han's inequality: for any $b \in [1:N-1]$ and $k \in [1:K]$, by the conditional version of the Han's inequality, we have
\begin{align}
\frac{1}{b \binom{N}{b}} \sum_{\cB \subset [1:N]: |\cB| = b} H(A^{[k]}_{\mathcal{B}} |\bF,  W_{1:k-1}) \geq \frac{1}{N} H(A^{[k]}_{[1:N]} |\bF,  W_{1:k-1}).  \label{entropicLP-Han}
\end{align}
\item Privacy: due to the Markov string $\bf\leftrightarrow Q^{[k]}_n\leftrightarrow (A^{[k]}_n, W_{1:K}, S_{1:N})$, and the fact that $W_{1:K}$ is independent of $\bF$, we have that for any $\cA \subset [1:N]$ with $|\cA| < N$ and any $n \in \cA^c$, 
\begin{eqnarray}
H(S_{\mathcal{A}}, A^{[k]}_{n} |\bF,  W_{1:k}) &=& H(S_{\mathcal{A}}, A^{[k]}_{n} |Q_n^{[k]},  W_{1:k}) \notag\\
H(S_{\mathcal{A}}, A^{[k+1]}_{n} |\bF,  W_{1:k})&=&H(S_{\mathcal{A}}, A^{[k+1]}_{n} |Q_n^{[k+1]},  W_{1:k}).
\end{eqnarray}
By the privacy requirement, the distributions of $(Q^{[k]}_n,A^{[k]}_n,S_{\mathcal{A}},W_{1:k})$ and  $(Q^{[k+1]}_n,A^{[k+1]}_n,S_{\mathcal{A}},W_{1:k})$ are identical, and it follows that 
\begin{align}
H(S_{\mathcal{A}}, A^{[k]}_{n} |\bF,  W_{1:k}) =  H(S_{\mathcal{A}}, A^{[k+1]}_{n} |\bF,  W_{1:k}).
\end{align}
\item Invariance: for any $\cA \subset [1:N]$ with $|\cA| = a$ and $k \in [0:K - 1]$ we have
\begin{align}
x_k(a, 0) = H(S_{\mathcal{A}} | W_{1:k})/H(W_1) = y_{k+1}(a, 0). 
\end{align}
\item Boundary: for any disjoint subsets $\cA, \cB \subset [1:N]$, we have
\begin{align}
H(S_{\cA}, A^{[K]}_{\cB} |\bF,  W_{1:N}) = 0.
\end{align}
\end{itemize}
Since any valid symmetrized PIR code must satisfy these constraints, the optimal solution to this LP is indeed a lower bound for $a_0 \alpha+b_0 \beta$ .
\end{proof}

Let us now consider the complexity of this LP. The variables in this LP are all the $x_k(a,b)$'s and $y_k(a,b)$'s for $k=1,2,\ldots,K$, and for integers $a,b$ such that $a+b\leq N$, and it is straightforward to verify that there are a total of $K(N+1) (N+2)$ of them. It is more involved to count the total number of constraints. However, it is clear that the dominant component here is the submodular constraints, and thus let us focus on this set. Let $a, b, c, d, e, f, g, h \in [0 : N]$ be
\begin{align}
&a = |\cA_1 \cap \cA_2|; \quad b = |\cA_1 \cap \cB_2|; \quad c = |\cA_1 \slash (\cA_2 \cup \cB_2)| \notag\\
&d = |\cA_2 \cap \cB_1|; \quad e = |\cB_1 \cap \cB_2|; \quad f = |\cB_1 \slash (\cA_2 \cup \cB_2)| \notag\\
&g = |\cA_2 \slash (\cA_1 \cup \cB_1)|; \quad h = |\cB_2 \slash (\cA_1 \cup \cB_1)|,
\end{align}
then the submodular inequalities can be rewritten as
\begin{align}
&x_k(a + b + c, d + e + f) + x_k(a + g + d, b + e + h) \notag\\
&\qquad\qquad\qquad\geq x_k(a + b + c + d + g, e + f + h) + x_k(a, b + d + e)\\
&y_k(a + b + c, d + e + f) + y_k(a + g + d, b + e + h) \notag\\
&\qquad\qquad\qquad\geq y_k(a + b + c + d + g, e + f + h) + y_k(a, b + d + e). 
\end{align}
Since each of these 8 parameters only takes values in $[0:N]$, the total number of $(a, b, c, d, e, f, g, h)$ combinations is upper-bounded by $(N+1)^8$. 
Moreover, the combinations $(a, b, c, d, e, f, g, h)$ and $(a, d, g, b,$ $e, h, c, f)$ in fact represent the same inequality. 
Thus, there are fewer than $K(N+1)^8$ such submodular inequalities, though the number of valid combinations $(a, b, c, d, e, f, g, h)$ is in fact even smaller due to their inherent relation. Therefore, the problem complexity in terms of the LP constraints is $O(K N^8)$. 

In comparison, let us consider the complexity of the generic entropic LP approach  \cite{yeung1997framework} in the problem setting. The number of random variables there is at least $K+N+KN$, where the $KN$ term is due to the answers from the $N$ servers for the $K$ messages. Thus there are a total of $2^{K+N+KN}-1$ joint entropy values as the variables in the generic entropic LP, and the $K+N+KN+{K+N+KN \choose 2} 2^{K+N+KN-2}$ elemental entropic constraints. It is possible to reduce the number of constraints using the symmetry reduction techniques introduced in \cite{tian2014characterizing,tian2018symmetry}, but it will not change the exponential nature (see \cite{zhang2017symmetry} for a more thorough analysis). In contrast, the complexity of the formulation in Proposition \ref{prop:LP} is polynomial. The significant reduction in the number of constraints is due to the much more restricted set of submodular inequalities we include in this relaxed entropic LP, using the (specific  domain) insights.

%%=============================================
\section{Codes to Improve Known Upper Bounds}\label{section-UpperBounds}

The lower convex envelop of the storage-retrieval pairs of the generalized MDS-PIR code provides an upper bound on the optimal tradeoff, which is the best known in the information theoretic PIR formulation. Equipped with the cyclic permutation lemma, in this section, we provide several new base code constructions which yield further improvements. 

\subsection{Construction-A: $N=2$}\label{section-code-KN2}
We provide a construction for $N=2$, which is based on the idea of compressing an existing code \cite{Tian-Sun-Chen-PIR-IT19}. Here the message length $L=1$. We first provide an example, then present the general code construction. 

{\em 
	\noindent\textbf{Example:} Let $(K, N) = (3, 2)$.  There 3 messages  are $a$, $b$ and $c$, respectively. The storage for each server is
	\begin{table}[!h]
		\centering
		\begin{tabular}{|c|c|c|}
			\hline
			$S_1$ & $S_2$ \\ \hline
			$a$ & $a \oplus b$  \\
			$b$ & $a \oplus c$ \\ 
			$c$ &  $\emptyset$ \\ \hline
		\end{tabular}
	\end{table}
	which implies that $\alpha = \frac{5}{2}$. Notice that $b \oplus c$ can be decoded by $S_2$ as $b \oplus c = (a \oplus b) \bigoplus (a \oplus c)$. Suppose the user desires message $a$, it randomly chooses one row to retrieve from the table below as the answers.
	\begin{table}[H]
		\centering
		\begin{tabular}{|c||c|c|c|}
			\hline
			prob. & server 1 & server 2 \\ \hline
			$0.25$ & $a$ & $\emptyset$ \\ \hline
			$0.25$ & $b$ & $a \oplus b$ \\ \hline
			$0.25$ & $c$ & $a \oplus c$ \\ \hline
			$0.25$ & $a \oplus b \oplus c$ & $b \oplus c$ \\ \hline
		\end{tabular}
	\end{table}
	
	Similarly, to retrieve $W_2=b$, the following table is used:
	\begin{table}[H]
		\centering
		\begin{tabular}{|c||c|c|c|}
			\hline
			prob. & server 1 & server 2 \\ \hline
			$0.25$ & $b$ & $\emptyset$ \\ \hline
			$0.25$ & $a$ & $a \oplus b$ \\ \hline
			$0.25$ & $c$ & $b \oplus c$ \\ \hline
			$0.25$ & $a \oplus b \oplus c$ & $a \oplus c$ \\ \hline
		\end{tabular}
	\end{table}

	And to retrieve $W_3 = c$, the user uses the following table:	
	\begin{table}[H]
		\centering
		\begin{tabular}{|c||c|c|}
			\hline
			prob. & server 1 & server 2 \\ \hline
			$0.25$ & $c$ & $\emptyset$ \\ \hline
			$0.25$ & $a$ & $a \oplus c$ \\ \hline
			$0.25$ & $b$ & $b \oplus c$ \\ \hline
			$0.25$ & $a \oplus b \oplus c$ & $a \oplus b$ \\ \hline
		\end{tabular}
	\end{table}
	
	It is seen that $\beta = \frac{7}{8}$. Compared with the capacity achieving code in \cite{sun2019breaking} or \cite{Tian-Sun-Chen-PIR-IT19}, where $(\alpha, \beta) = \left(3, \frac{7}{8} \right)$, the storage is compressed while the download costs remains the same.
}

We next provide the code construction for more general $K$ and $N=2$.
\begin{itemize}
	\item\textbf{Storage: } Let $S_1 = W_{1:K}$, and $S_2 = \{W_1 \oplus W_k\}_{k=2}^K $. It can be interpreted as server $1$ stores all the messages, and server $2$ stores all the even sum of messages, because the summation of any even number of messages can be constructed by $S_2$. The normalized average storage can be calculated simply as 
	\begin{align*}
	\alpha = \frac{1}{2}\left(K + K-1 \right) = K - \frac{1}{2}.
	\end{align*}
	\item \textbf{Retrieval: } To retrieve message $W_k$, $k \in [1:K]$, we randomly choose a length-$K$ vector $\mathbf{v}$ of $0$ and $1$. Then retrieve $ \bigoplus_{ \mathbf{v}(j) = 1} W_j $ and $W_k \oplus \bigoplus_{ \mathbf{v}(j) = 1} W_j$ each bit from one server. If the vector $\mathbf{v}$ has odd number of $1$s, retrieve the former bit from server 2, else retrieve the latter bit from server 2. The user can recover the desired message by $W_k = A_1^{[k]} \oplus A_2^{[k]}$. The user will retrieve 2 bits unless $\mathbf{v}$ is consisted of all $0$s or only the $k^{th}$ position of $\mathbf{v}$ is $1$, and in these two cases, the user only retrieve 1 bit. It follows that
	\begin{align*}
	\beta = \frac{1}{2 \times 2^{K}}(2^K \times 2 - 2 ) = \frac{2^K - 1}{2^K}.
	\end{align*}
	To see that the protocol is private, observe that server $1$ receives queries uniformly distributed over $\{\bigoplus_{\mathbf{v}(j)=1}W_j:$ $\mathbf{v}\text{ has odd number of 1's} \}$; similarly, server $2$ receives queries uniformly distributed over the set $\{\bigoplus_{\mathbf{v}(j)=1}W_j:$ $\mathbf{v}\text{ has even number of 1's} \}$.
\end{itemize}

%%-----------------------------------------------------------------
\subsection{Construction-B: $K|(N-1)$}\label{section-code-Kmod(N-1)}
Next we provide a construction by generalizing code in \cite{sun2019breaking}. Here the message length $L$ is $1$, and $K = T(N-1)$ for some positive integer $T$. An example is given first, and then the general construction will be presented. 

{\em 
	\noindent\textbf{Example:} Let $(K, N) = (4, 3)$, and thus $T = 1$ in this example. There are four messages $a,b,c,d$. The stored contents are as follows
	\begin{table}[!h]
		\centering
		\begin{tabular}{|c|c|c|}
			\hline
			$S_1$ & $S_2$ & $S_3$ \\ \hline
			$a$ & $c$ & $a \oplus c$ \\
			$b$ & $d$ & $b \oplus d$ \\ \hline
		\end{tabular}
	\end{table}
	It is clear that $\alpha = 2$. Suppose the user desires message $a$, a row from the table below is chosen, uniformly at random, as the queries for the servers 
	\begin{table}[!h]
		\centering
		\begin{tabular}{|c||c|c|c|}
			\hline
			prob. & server 1 & server 2 & server 2 \\ \hline
			0.25 & $a$ & $\emptyset$ & $\emptyset$\\ \hline
			0.25 & $b$ & $c \oplus d$ & $a \oplus c \oplus b \oplus d$ \\ \hline
			0.25 & $\emptyset$ & $c$ & $a \oplus c$ \\ \hline
			0.25 & $a \oplus b$ & $d$ & $b \oplus d$ \\ \hline
		\end{tabular}
	\end{table}

	Similarly, to retrieve $W_3 = c$, the following table is used:
	\begin{table}[!h]
		\centering
		\begin{tabular}{|c||c|c|c|}
			\hline
			prob. & server 1 & server 2 & server 2 \\ \hline
			0.25 & $\emptyset$ & $c$ & $\emptyset$\\ \hline
			0.25 & $a \oplus b$ & $d$ & $a \oplus c \oplus b \oplus d$ \\ \hline
			0.25 & $a$ & $\emptyset$ & $a \oplus c$ \\ \hline
			0.25 & $b$ & $c \oplus d$ & $b \oplus d$ \\ \hline
		\end{tabular}
	\end{table}
	It is straightforward to see that code is private because the set of queries for one server is the same and query for retrieving any message is uniformly distributed over that set. Clearly $\beta = 0.75$. Comparing with $(4, 3)$-MDS coded PIR in \cite{Banawan-Ulukus-codedPIR-2018IT}, where $(\alpha, \beta) = (2, 0.802)$, the code given above has a smaller download cost.
}

In the general code construction, the index of a message can be represented either as $(N-1)i + j$, where $i \in [0:T-1]$ and $j \in [1:T]$, or as $Ta + b$, where $a \in [0:N-2]$ and $b \in [1:T]$. 

\begin{itemize}
	\item \textbf{Storage:} For $n\in[1:N-1]$, server $n$ stores $S_n = \{W_{T(n-1) + i}\}_{i = 1}^T$; server $N$ stores $S_N = \{ \bigoplus_{j=0}^{N-2} W_{Tj + i} \}_{i=1}^T$. As a consequence $\alpha=T$.
	
	\item \textbf{Retrieval:} To retrieve $W_{T(a-1) + b}$, where $a \in [1:N-1]$ and $b \in [1:T]$. We randomly generate a vector $\mathbf{v}$ of 0 and 1 with length $T$. Let $\mathbf{v}'$ be a vector such that the only difference between $\mathbf{v}$ is the $b^{th}$ position, which is $\mathbf{v}'(b) = 1 \oplus \mathbf{v}(b)$. Then retrieves $ A_n^{[k]} =\bigoplus_{\mathbf{v}'(i) = 1} W_{T(n-1) + i}$ from server $n \in [1:N-1] \slash \{a\}$; retrieves $ A_a^{[k]} = \bigoplus_{\mathbf{v}(i) = 1} W_{T(a-1) + i}$ from server $a$; and retrieves $ A_N^{[k]} = \bigoplus_{\mathbf{v}'(i) = 1} \bigoplus_{j=0}^{N-2} W_{Tj + i} $ from server $N$. The user can decode $W_k = \bigoplus_{n=1}^N A_n^{[k]}$. Thus
	\begin{align*}
	\beta = \frac{1}{N} \left( \frac{1}{2^T} + \frac{1}{2^T} (N-1) + (1 - \frac{1}{2^{T-1}}) N \right) =\frac{2^T-1}{2^T}.
	\end{align*}
	To see that the protocol is private, observe that for any server $n \in [1:N-1]$, the received query is uniformly distributed over $\left\{ \bigoplus_{\mathbf{v}(i)=1}W_{Ta+i} : \mathbf{v} \in [0:1]^T \right\}$; server $N$ receives queries uniformly distributed over $\Big\{ \bigoplus_{\mathbf{v}(i)=1} \bigoplus_{j=0}^{N-2} W_{Tj + i}:$ $\mathbf{v} \in [0:1]^T \Big\}$.
\end{itemize}

%%-----------------------------------------------------------------
\subsection{Applying the Cyclic Permutation Lemma}\label{section-code-compare-MDSuncoded}
By applying the cyclic permutation lemma to the base codes given as construction-A and construction-B, we can obtain further improvement on the storage-retrieval tradeoff which is given in the following proposition. 
\begin{proposition}\label{prop-new-codes-cyclic}
	For $N,K\geq 2$, the following tradeoff points are achievable:
	\begin{enumerate}[(a)]
		\item $(\alpha,\beta)=\frac{2}{N}\left(K-\frac{1}{2}, \frac{2^K-1}{2^K}\right)$; 
		
		\item $(\alpha,\beta)=\frac{K/T+1}{N}\left(T, \frac{2^T-1}{2^T}\right)$ for all $T$ being a factor of $K$ so that $\frac{K}{T}+1\leq N$. 
	\end{enumerate}
\end{proposition}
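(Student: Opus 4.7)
The plan is to apply the cyclic permutation lemma (Lemma \ref{lemma-cyclic-permutation}) directly to the two base codes developed in the preceding subsections, so the main work is to identify the correct base parameters and to verify that the applicability conditions of the lemma are met.

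For part (a), I would invoke Construction-A as the base code. That construction produces an $(N_0, K)$ PIR code with $N_0 = 2$ servers and operational costs $(\alpha_0, \beta_0) = \bigl(K - \tfrac{1}{2},\, \tfrac{2^K - 1}{2^K}\bigr)$; it is defined for every $K \geq 2$ in the excerpt. Taking this as the base code and applying Lemma \ref{lemma-cyclic-permutation} with target number of servers $M = N \geq 2 = N_0$, the lemma yields an $(N, K)$ PIR code achieving
\begin{equation}
(\alpha,\beta) \;=\; \frac{N_0}{M}(\alpha_0,\beta_0) \;=\; \frac{2}{N}\left(K - \frac{1}{2},\, \frac{2^K - 1}{2^K}\right),
\end{equation}
which is exactly the tradeoff point claimed in (a).

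For part (b), I would take Construction-B as the base code. That construction is defined whenever $K = T(N_0 - 1)$ for some positive integer $T$, i.e.\ whenever $T$ is a factor of $K$ and $N_0 = K/T + 1$. For such parameters it produces an $(N_0, K)$ PIR code with $(\alpha_0,\beta_0) = \bigl(T,\, \tfrac{2^T - 1}{2^T}\bigr)$. The hypothesis $\tfrac{K}{T}+1 \leq N$ precisely guarantees $N_0 \leq N$, so Lemma \ref{lemma-cyclic-permutation} applies with $M = N$ and scaling factor $N_0/M = (K/T + 1)/N$, giving an $(N,K)$ code achieving
\begin{equation}
(\alpha,\beta) \;=\; \frac{K/T + 1}{N}\left(T,\, \frac{2^T - 1}{2^T}\right),
\end{equation}
which is the statement of (b).

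The proof is essentially a bookkeeping exercise once the base codes are in place; there is no real obstacle beyond making sure the divisibility and inequality conditions on $T$ and $N$ align the base-code parameters with those required by the cyclic permutation lemma. The only subtlety worth flagging is that the lemma as stated is symmetric in both storage and download cost (both scale by $N_0/M$), so the inflation in server count simultaneously deflates both coordinates by the same factor, which is what makes the resulting tradeoff points nontrivially new relative to the base codes themselves.
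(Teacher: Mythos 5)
Your proposal is correct and is essentially the same argument the paper intends: the proposition follows by applying Lemma \ref{lemma-cyclic-permutation} with $M=N$ to Construction-A (a $(2,K)$ base code with costs $(K-\tfrac{1}{2},\tfrac{2^K-1}{2^K})$) for part (a) and to Construction-B (a $(K/T+1,K)$ base code with costs $(T,\tfrac{2^T-1}{2^T})$) for part (b), with the divisibility and $\tfrac{K}{T}+1\leq N$ conditions ensuring the lemma's hypothesis $M\geq N_0$ holds. The only cosmetic caveat is that your use of $(\alpha_0,\beta_0)$ for the base-code costs clashes with the paper's reserved definitions of $\alpha_0$ and $\beta_0$ in \eqref{def-alpha0} and \eqref{def-beta0}, so different symbols should be used there.
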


%%-----------------------------------------------------------------
\begin{figure}[!th]
	\centering
	\includegraphics[scale=0.6]{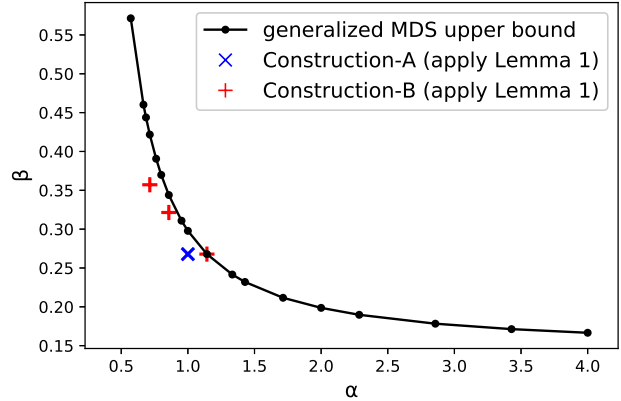}
	\caption{Comparison of new upper bound and generalized MDS upper bound for $(N,K)=(7,4)$.}
	\label{fig_compare-new-gMDS}
\end{figure}

In Fig.~\ref{fig_compare-new-gMDS}, we show the improvement of the tradeoff points in \Cref{prop-new-codes-cyclic} compared with the best known upper bound, i.e. the generalized MDS tradeoff curve. 
Note that there is always a point generated from construction-B lies on the generalized MDS tradeoff curve. 
This can be seen by letting $T=K$ in \Cref{prop-new-codes-cyclic} (b), then the resulting storage-retrieval tradeoff is obtained as
$(\alpha,\beta)=\left(\frac{2K}{N},\frac{1}{N}\frac{2^K-1}{2^{K-1}}\right)$ which is exactly the tradeoff point of the generalized MDS code in \eqref{eqn:gMDS} for $(T_1,T_2)=(1,2)$.

%%=============================================
\section{Explicit Lower Bounds}\label{section-LowerBounds}
In this section, we derive more explicit lower bounds by further relaxing the linear program given in \Cref{prop:LP}. The bounds such derived are more explicit, and moreover, we show numerically that the loss from those obtained using Proposition \ref{prop:LP} is small, for the cases that the relaxed entropic LP can be effectively computed. 

\subsection{A Lower Bound for $(N-m)\alpha+m\beta$}\label{section-LB1}

For notation convenience, define the following function for any integer $m\in [1:N]$
\begin{equation}
B_N(K,m)\triangleq \inf_{\text{achievable }(\alpha,\beta)}\big\{(N-m)\alpha+m\beta\big\}.  
\end{equation}
The following boundary conditions are immediate (the last two are from \eqref{def-beta0} and \eqref{def-alpha0}): 
\begin{align}
B_N(1,m)&=1,\text{ for }m\in[1:N],   \label{initial-K=1}  \\
B_N(K,1)&= K,    \label{initial-m=1}  \\
B_N(K,N)&= N \beta_0.   \label{initial-m=N}
\end{align}
Let $\bC$ be the set of positive real-valued vectors $\bm{c}=\left(c_j^n:j\in[1:m],n\in[0:N-j+1]\right)$ satisfying the conditions
\begin{align}
\sum_{n=0}^{N-j+1}c_j^n&=1, \text{ for }j\in[1:m],    \label{coefficient-sumup-to1}  \\
0\leq c_j^n&\leq 1, \text{ for }j\in[1:m], n\in[0:N-j+1],    \label{coefficient-range}
\end{align}
and the conditions 
\begin{align}
\sum_{i=j}^N(N-i+1)d_i\geq \sum_{i=j}^m(N-i+1), \text{ for all }j\in[2:m],  \label{coefficient-SEI-condition}
\end{align} 
where 
\begin{equation}
d_j=
\begin{cases}
\sum_{i=2}^{j-1}c_i^{j-i}\left(\frac{1}{j-i}-\frac{1}{j-1}\right)+c_{j-1}^0+\sum_{n=1}^{N-j+1}\frac{(n-1)c_j^n}{n}, &\text{if }j\in[2:m]\\
c_m^0+\sum_{i=2}^{m}c_i^{j-i}\left(\frac{1}{j-i}-\frac{1}{j-1}\right), &\text{if }j=m+1  \\
\sum_{i=2}^{m}c_i^{j-i}\left(\frac{1}{j-i}-\frac{1}{j-1}\right), &\text{if }j\in[m+2:N]. 
\end{cases}  \label{coefficient-SEI-d}
\end{equation}

For any $\bm{c}\in\bC$, we provide a general lower bound on $B_N(K,m)$ for $m\in[2:N-1]$, through a recursive relation: 
\begin{align}
\widetilde{B_N}(K,m,\bm{c})\triangleq 1+c_1^1(K-1)+\sum_{j=2}^{m}\sum_{n=1}^{N-j+1}\frac{c_j^n}{j+n-1}\widetilde{B_N}(K-1,j+n-1,\bm{c}),   \label{general-LowerBound}
\end{align}
where the initial conditions are given as $\widetilde{B_N}(K,m,\bm{c})=B_N(K,m)$ for i) $K=1$; ii) $m=1$; iii) $m=N$; i.e., the boundary conditions in \eqref{initial-K=1}-\eqref{initial-m=N}. 

Moreover, for $K\geq 2$, $N\geq 2$, and $m\in[1:N]$, define 
\begin{align}
\underline{B_N}(K,m)\triangleq \max_{\bm{c}\in\bC}\widetilde{B_N}(K,m,\bm{c}).
\end{align}

\begin{theorem}\label{thm-general-lower-bound}
	For $K\geq 2$, $N\geq 2$, and $m\in[1:N]$, we have 
	\begin{equation}
	B_N(K,m)\geq \underline{B_N}(K,m) \geq \widetilde{B_N}(K,m,\bm{c}), ~\forall~\bm{c}\in\bC. 
	\end{equation}
\end{theorem}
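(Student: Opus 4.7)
The proof plan is to proceed by induction on the number of messages $K$, with base cases already enshrined in the boundary conditions \eqref{initial-K=1}--\eqref{initial-m=N}; these serve simultaneously as the initial conditions of the recursion $\widetilde{B_N}$ and of $B_N$ itself. For the inductive step with $K\geq 2$ and $m\in[2:N-1]$, it suffices to establish the single-step inequality
\begin{equation*}
(N-m)\alpha+m\beta \;\geq\; 1 + c_1^1(K-1) + \sum_{j=2}^{m}\sum_{n=1}^{N-j+1}\frac{c_j^n}{j+n-1}\,B_N(K-1,\,j+n-1),
\end{equation*}
after which the inductive hypothesis $B_N(K-1,j+n-1)\geq \widetilde{B_N}(K-1,j+n-1,\bm{c})$ delivers \eqref{general-LowerBound}. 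Maximizing over $\bm{c}\in\bC$ then gives $B_N(K,m)\geq\underline{B_N}(K,m)$, while the inequality $\underline{B_N}(K,m)\geq\widetilde{B_N}(K,m,\bm{c})$ is tautological from the definition of $\underline{B_N}$.

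The engine for the inductive step is a carefully chosen nonnegative linear combination of the constraints of the relaxed entropic LP in \Cref{prop:LP}. Starting from the trivial bound $(N-m)\alpha+m\beta\geq (N-m)y_1(1,0)+m\,y_1(0,1)$, I would iteratively apply submodularity to merge $y_1$-terms of mixed shapes, interleaved with Han's inequality to decompose download entropies $y_1(0,b)$ into combinations of $y_1(0,N)$ weighted by $b/N$. Once the support size $a+b$ of a summand reaches $N$, decodability converts $y_1(a,N-a) = 1 + x_1(a,N-a)$, which is the origin of the leading $1$ on the right-hand side. The privacy and invariance equalities $x_1(a,1)=y_2(a,1)$ and $x_1(a,0)=y_2(a,0)$ then replace every residual $x_1$-term by the corresponding $y_2$-term (the latter being representable as a symmetrized mix via further submodular steps). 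Finally, each $y_2(a,b)$ with $a+b=j+n-1$ is exactly the informational cost quantity of the $(N,K-1)$ PIR sub-problem obtained by conditioning on $W_1$, lower bounded by $\tfrac{1}{j+n-1}B_N(K-1,j+n-1)$ under the inductive hypothesis.

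The role of the feasibility set $\bC$ is to certify that the above combination is a valid lower bound. The normalization \eqref{coefficient-sumup-to1} ensures that the weights at each ``layer'' $j$ sum to one so that the masses of the initial $y_1$-terms are correctly preserved across merges; \eqref{coefficient-range} guarantees that every elementary step is a genuine convex combination; and the auxiliary quantities $d_j$ of \eqref{coefficient-SEI-d} track the net residual multiplier placed on $y_1(\cdot,j-\cdot)$-terms at layer $j$ after expanding all $c_i^n$ through the submodular and Han's inequality manipulations. The ``SEI'' conditions \eqref{coefficient-SEI-condition} then encode precisely the nonnegativity of these residuals, rewritten as a partial-sum bound weighted by the support sizes $N-i+1$.

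I expect the principal obstacle to be the bookkeeping that translates the abstract convex combination of submodular, Han's, decodability, privacy, and invariance inequalities into the explicit closed-form weight schedule $\bm{c}$. A clean way to manage this is to introduce layer-indexed surrogates such as $\phi_j(n)\triangleq y_1(j-1,n)$ and $\psi_j\triangleq\tfrac{1}{j}y_1(0,j)$ and to express the target inequality as a telescoping sum over $j=1,2,\ldots,m$; the conditions \eqref{coefficient-SEI-condition} should then emerge as the sign-preservation requirements at each partial sum. Once this bookkeeping is complete, the induction closes and both claimed inequalities follow.
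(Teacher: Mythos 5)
Your plan follows the paper's proof essentially step for step: the theorem is reduced to exactly the single-step recursive inequality you display (equation \eqref{recursive-inequality-B} in the paper), which is derived from a weighted combination of the relaxed-LP constraints — successive submodularity to merge single-answer terms, Han's inequality for the averaging, decodability for the leading $1$, privacy to shift the answer superscript from $[1]$ to $[2]$ after conditioning on $W_1$ — and the conditions \eqref{coefficient-SEI-condition} on $\bm{c}$ play precisely the role you assign them: via a subset entropy inequality (\Cref{lemma-SubsetEntropyInequality}, i.e., server symmetry plus Han's inequality on the nested sets $S_{j:N}$), they guarantee that the accumulated positive storage terms with multipliers $d_j$ dominate the $-\sum_{j=2}^{m}H(S_{j:N}|\bF W_1)$ left over from the merging.

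One step of your outline is misstated and would fail if executed literally: a term $y_2(a,b)$ with $a+b=j+n-1<N$ is \emph{not} lower bounded by $\frac{1}{j+n-1}B_N(K-1,j+n-1)$; the reduction to the $(N,K-1)$ sub-problem (\Cref{lemma-LB-reduce}) applies only to full-support quantities with $a+b=N$, since only then can $H(S_{\cA},A^{[2]}_{\cB}|\bF W_1)$ be identified with $(N-b)\alpha'+b\beta'$ of the reduced problem (for instance $y_2(0,b)$ is a partial download entropy that can be far below $\frac{1}{b}B_N(K-1,b)\approx \frac{N-b}{b}\cdot\frac{K-1}{N}+\beta$ for capacity-approaching codes). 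What the paper does instead (\Cref{lemma-generageS}) is assemble, from $j+n-1$ single-answer terms via submodularity and Han's inequality, the full-support term $H(A^{[2]}_{1:j+n-1}S_{j+n:N}|\bF W_1)$ carrying the coefficient $\frac{1}{j+n-1}$ — so that factor originates in the Han averaging during assembly, not in the reduction itself — and only then invokes $y_2(N-(j+n-1),j+n-1)\geq B_N(K-1,j+n-1)$. With that correction, and once the bookkeeping you defer is actually carried out to confirm that the residual storage multipliers are the $d_j$ of \eqref{coefficient-SEI-d}, your argument coincides with the paper's.
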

\begin{proof}[Proof overview for Theorem \ref{thm-general-lower-bound}]
	The lower bound can be obtained by further relaxing the linear program in \Cref{prop:LP}, which is to minimize the objective function under a chosen subset of constraints in a systematic manner. 
	The idea is to first specify which and how the submodularity inequalities are applied (i.e., utilize an even smaller subset of the possible submodular inequalities), and then apply the other inequalities accordingly. 
	More specifically, we use only the submodularity constraints for $j\in[2:N-1],~i\in[j:N]$ that (c.f. \eqref{pf-succ-iteration} and \eqref{lem-geS-pf-1})
	\begin{align}
	&H(A_{1:j}^{[1]}S_{j+1:N}|\bF W_1)+H(S_{j:N}|\bF W_1)\geq H(A_{j}^{[1]}S_{j+1:N}|\bF W_1)+H(A_{1:j-1}^{[1]}S_{j:N}|\bF W_1),  \\
	&H(A_{j:i-1}^{[2]}S_{i:N}|\bF W_1)+H(A_i^{[2]}S_{j:i-1}S_{i+1:N}|\bF W_1)\geq H(A_{j:i}^{[2]}S_{i+1:N}|\bF W_1)+H(S_{j:N}|\bF W_1). 
	\end{align}
	The detailed proof can be found in Appendix~\ref{proof-thm-general-lower-bound}. 
\end{proof}

The bound given in Theorem \ref{thm-general-lower-bound} is still not explicit, and next we specialize it even further, in order to obtain a more explicit form. This is accomplished by choosing a specific set of $\bm{c} \in \bC$. 
More precisely, we will show that the following value of $\dunderline{B_N}(K,m)$, $m\in[2:N-1]$ is feasible, 
\begin{align}
&\dunderline{B_N}(K,m)=   \nonumber \\
&\begin{cases}
1+\frac{m-j^*+1}{N}+\sum_{j=j^*}^{m-1}\frac{(m-j)(m+j-1)}{2j(j-1)N}+\frac{N-j^*}{N-j^*+1}\left(1-\frac{1}{N-j^*}\frac{(m-j^*)(m+j^*-1)}{2(j^*-1)}\right), &\!\!\! \text{if }K=2 \\
1+\sum_{j=j^*}^{m}\frac{1}{j}\cdot \dunderline{B_N}(K-1,j)+\frac{1}{N-j^*+1}\left[\frac{(N-m)(m-1)}{m}-\sum_{i=j^*+1}^m\frac{N-i+1}{i-1}\right] \dunderline{B_N}(K-1,j^*-1), & \!\!\! \text{if }K\geq 3, 
\end{cases}   \label{explicit-LB}
\end{align}
where the initial conditions are given by 
\begin{align}
	\dunderline{B_N}(1,m)&=1,\text{ for }m\in[1:N],     \\
	\dunderline{B_N}(K,1)&= K,     \label{initial-dunderlineB-m=1}   \\
	\dunderline{B_N}(K,N)&= N \beta_0,     \label{initial-dunderlineB-m=N}
\end{align}
and $j^*\in[2:m]$ is defined as follows: 
\begin{itemize}
	\item For $K=2$, $j^*$ is given as
	\begin{equation}
	j^*=\max\left\{2,\left\lceil(N+\frac{1}{2})-\sqrt{(N-m)(N+m-1)+\frac{1}{4}}\right\rceil\right\}.  \label{def-j*-K=2}
	\end{equation}
	
	\item For $K\geq 3$, $j^*$ is the minimum $j$ such that 
	\begin{equation}
	\sum_{i=j+1}^m\frac{N-i+1}{i-1}\leq \frac{(m-1)(N-m)}{m},   \label{def-j*-K>=3}
	\end{equation}
	which is equivalent to 
	\begin{equation}
	N\sum_{i=j}^{m}\frac{1}{i}+j\leq N+1. \label{j*-K>=3}
	\end{equation}
	If the above inequality holds for all $j\in[2:m]$, then let $j^*=2$. 
\end{itemize}

We can upper bound the LHS of \eqref{j*-K>=3} by 
\begin{equation}
N\sum_{i=j}^{m}\frac{1}{i}+j\leq N\cdot\ln\left(\frac{m+1/2}{j-1/2}\right)+j,
\end{equation}
where $f(j)=N\cdot\ln\left(\frac{m+1/2}{j-1/2}\right)+j$ is obtained by the convexity of the reciprocal function $\frac{1}{x}$, which is 
\begin{equation}
\sum_{i=j}^{m}\frac{1}{i}\leq \int_{j-\frac{1}{2}}^{m+\frac{1}{2}}\frac{1}{x}dx=\ln\left(\frac{m+1/2}{j-1/2}\right). 
\end{equation}

The following theorem is our main result of this section.
\begin{theorem}\label{thm-lower-bound-Nm}
	For $K\geq 2$, $N\geq 2$, and $m\in[1:N]$, we have 
	\begin{equation}
	B_N(K,m) \geq \dunderline{B_N}(K,m). 
	\end{equation}
\end{theorem}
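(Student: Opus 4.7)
The plan is to prove Theorem \ref{thm-lower-bound-Nm} by exhibiting an explicit vector $\bm{c}^{\star}\in\bC$ for which $\widetilde{B_N}(K,m,\bm{c}^{\star})=\dunderline{B_N}(K,m)$, and then invoking Theorem \ref{thm-general-lower-bound}. In other words, the closed form \eqref{explicit-LB} should be read as the particular realization of the recursion \eqref{general-LowerBound} corresponding to the natural ``splitting index'' $j^{\star}$ defined in \eqref{def-j*-K=2}--\eqref{def-j*-K>=3}. The argument proceeds by induction on $K$ with the base case $K=2$ handled by direct computation (since $\widetilde{B_N}(1,\ell,\bm{c})=1$), and the inductive step reducing $\widetilde{B_N}(K,m,\bm{c}^{\star})$ to a linear combination of $\dunderline{B_N}(K-1,\ell)$'s of the correct form.

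First I would identify $\bm{c}^{\star}$ by matching \eqref{general-LowerBound} against \eqref{explicit-LB} term-by-term. Substituting $\ell=j+n-1$, the inner double sum in \eqref{general-LowerBound} regroups as $\sum_{\ell=2}^{N}\tfrac{1}{\ell}\widetilde{B_N}(K-1,\ell,\bm{c})\cdot\bigl(\sum_{j=2}^{\min(m,\ell)} c_j^{\ell-j+1}\bigr)$, so I will require that the inner bracket equal $1$ for $\ell\in[j^{\star}:m]$, equal the bracketed coefficient of $\dunderline{B_N}(K-1,j^{\star}-1)$ appearing in \eqref{explicit-LB} for $\ell=j^{\star}-1$, and vanish otherwise. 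This also forces $c_1^1=0$, so that no spurious $(K-1)$-linear term survives, and together with the simplex constraint \eqref{coefficient-sumup-to1} it pins down every $c_j^n$ up to the free slack that can be dumped into $c_j^{0}$.

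Once $\bm{c}^{\star}$ is written down, three properties must be checked. Nonnegativity and the unit-interval condition \eqref{coefficient-range} reduce, after a little manipulation, to exactly the defining inequality $\sum_{i=j^{\star}+1}^{m}\tfrac{N-i+1}{i-1}\leq \tfrac{(N-m)(m-1)}{m}$ in \eqref{def-j*-K>=3} (resp.\ the analogous quadratic inequality in \eqref{def-j*-K=2} for $K=2$), which is what makes $j^{\star}$ the smallest admissible splitting index. The simplex condition \eqref{coefficient-sumup-to1} is then automatic from the way $c_j^{0}$ was defined. The remaining and crucial obligation is the submodular-engagement condition \eqref{coefficient-SEI-condition}: one has to compute the aggregated quantities $d_j$ from \eqref{coefficient-SEI-d} for the candidate $\bm{c}^{\star}$ and verify the telescoping inequalities $\sum_{i=j}^{N}(N-i+1)d_i\geq\sum_{i=j}^{m}(N-i+1)$ for all $j\in[2:m]$. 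This is the main obstacle, because the weights $\tfrac{1}{j-i}-\tfrac{1}{j-1}$ couple several rows through harmonic-like sums, and one must carefully separate the contributions from the ``active'' range $[j^{\star}:m]$, the pivot $j^{\star}$, and the ``dormant'' range $[2:j^{\star}-1]$.

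For the $K=2$ base case I would compute $\widetilde{B_N}(2,m,\bm{c}^{\star})$ directly by substituting $\widetilde{B_N}(1,\ell,\bm{c}^{\star})=1$ into \eqref{general-LowerBound} and collapsing the resulting sums via a partial-fraction/telescoping identity, producing the quadratic-looking first branch of \eqref{explicit-LB}; at this point the value $j^{\star}$ from \eqref{def-j*-K=2} is precisely the minimizer of the induced quadratic in $j$, so the choice of $\bm{c}^{\star}$ is optimal within the feasible set. For $K\geq 3$, the inductive hypothesis $\widetilde{B_N}(K-1,\ell,\bm{c}^{\star})=\dunderline{B_N}(K-1,\ell)$ applies at each index $\ell\in\{j^{\star}-1\}\cup[j^{\star}:m]$ that appears, and substituting into \eqref{general-LowerBound} reproduces the second branch of \eqref{explicit-LB}; the boundary initial conditions \eqref{initial-dunderlineB-m=1}--\eqref{initial-dunderlineB-m=N} line up with those of Theorem \ref{thm-general-lower-bound}. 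Given the heavy index bookkeeping involved in verifying \eqref{coefficient-SEI-condition}, I would push the detailed calculations to an appendix and keep only the structural outline in the main text.
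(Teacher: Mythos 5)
Your proposal matches the paper's own proof: the paper likewise exhibits explicit coefficients $\bm{c}$ (one assignment for $K=2$, another for $K\geq 3$, both pivoting on $j^*$), verifies $\bm{c}\in\bC$ — with the range condition \eqref{coefficient-range} reducing exactly to the defining inequalities of $j^*$, which is where the case analysis lives, while \eqref{coefficient-SEI-condition} turns out to be the easy check — and then substitutes into the recursion of Theorem \ref{thm-general-lower-bound}. One small correction to your derivation: $c_1^1=0$ is not forced in general, since when $j^*=2$ the term $\dunderline{B_N}(K-1,j^*-1)=\dunderline{B_N}(K-1,1)=K-1$ in \eqref{explicit-LB} can only be produced by the $c_1^1(K-1)$ term of \eqref{general-LowerBound} (the double sum there only reaches indices $j+n-1\geq 2$), and indeed the paper's assignments give $c_1^1\neq 0$ in that case.
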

\begin{proof}
	The proof can be found in Appendix~\ref{proof-thm-lower-bound-Nm}. 
\end{proof}
\begin{remark}
	The definition of $\dunderline{B_N}(K,m)$ in \eqref{explicit-LB} is defined for $m\in[2:N-1]$. 
	We can verify that for $m=N$, we have $j^*=N$, and then \eqref{explicit-LB} gives $\dunderline{B_N}(K,N)=N\beta_0$ which is consistent with the initial condition in \eqref{initial-dunderlineB-m=N}. However, for $m=1$, we have $j^*=2$, and \eqref{explicit-LB} gives $\dunderline{B_N}(K,1)=1 <=K$, which is not consistent with the initial condition in \eqref{initial-dunderlineB-m=1}. 
\end{remark}

%%-----------------------------------------------------------------
\subsection{Bounding $\alpha+m\beta$ with Large Integer $m$}
Similar to the previous case, we can also relax the relaxed entropic LP to find the following lower bound. 
%%--------------------- BEGIN thm-flat-bound ----------------------------
\begin{theorem}\label{thm-flat-bound}
	For $K\geq 2$, $N\geq 2$, and $m=(N-1)+(N-2)N^{K-k}$ for $k\in[1:K]$, the term $\alpha+m\beta$ can be lower bounded as follows: 
\begin{enumerate}
	\item If $k\leq \frac{K}{2}$, then 
	\begin{align}
	\alpha+m\beta&\geq \left(k+\frac{N^{K-2k}(N^k-1)(N-2)}{(N-1)}\right) +\frac{1}{N^{k-1}}\left[\frac{(N^{K-k}-1)(N-2)}{N(N-1)}+(K-k)\right]  \nonumber \\
	&\quad +\left(1-\frac{1}{N^{k-1}}\right)\Big(B_N(K-k+1,N-1)-1\Big),  \nonumber
	\end{align}
	\item If $k>\frac{K}{2}$, then 
	\begin{align}
	\alpha+m\beta&\geq \left(K-k+\frac{(N-2)(N^{K-k}-1)}{N-1}\right)+ \frac{2(N^{2k-K}-1)}{N^{2k-K}} \nonumber\\
	&\quad +(N-1)\sum_{i=1}^{2k-K-1}\frac{1}{N^i}\Big(B_N(k-i+1,N-1)-1\Big) \nonumber \\
	&\quad +\frac{1}{N^{k-1}}\left[\frac{(N^{K-k}-1)(N-2)}{N(N-1)}+(K-k)\right] +\frac{N^{K-k}-1}{N^{k-1}}\Big(B_N(K-k+1,N-1)-1\Big).  \nonumber
	\end{align}
	\end{enumerate}
\end{theorem}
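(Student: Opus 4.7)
The plan is to derive this bound by carefully specializing the relaxed entropic LP from \Cref{prop:LP}, following the same philosophy as the proof of \Cref{thm-lower-bound-Nm} (and ultimately a recursive-bounding strategy similar in spirit to the Sun--Jafar capacity proof, augmented by the storage variables). The target quantity $\alpha + m\beta$ is bounded below by the LP value with $(a_0, b_0) = (1, m)$, and I would build a feasible dual certificate by chaining specific submodularity, privacy, decodability, and Han inequalities.

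First, I would identify the natural ``starting configuration'' that produces the exponent $N^{K-k}$ in $m = (N-1) + (N-2)N^{K-k}$. Iterating Han's inequality $y_{k'}(0,b) \ge (b/N) y_{k'}(0,N)$ together with the decodability identity $y_{k'}(a, N-a) = 1 + x_{k'}(a, N-a)$ and the privacy/invariance identities that swap between $x_{k'}$ and $y_{k'+1}$ produces, at each pass through the index $k'$, a multiplicative factor of roughly $1/N$ on download terms and a term of the form $(N-2)/(N-1)$ (or similar) from the combined storage--download interaction. The specific value $m = (N-1) + (N-2)N^{K-k}$ is exactly what appears when one performs $k$ rounds of this peeling before switching to bounding the residual tail via $B_N(K-k+1, N-1)$; I would verify this by induction on $k$, tracking the accumulated coefficient on $y_1(1,0)$ and $y_1(0,N)$ exactly.

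The second step is the recursive invocation. After $k$ rounds of the peeling described above, the remaining entropy term is a conditional entropy on $N-1$ of the servers (not all $N$), with $K-k+1$ messages still ``active'' in the conditioning. This is precisely the definition of $B_N(K-k+1, N-1)$ restricted to a subsystem, which justifies the appearance of that quantity in both branches of the bound. The case split is governed by whether the peeling process terminates before ($k \le K/2$) or after ($k > K/2$) crossing the midpoint where additional recursive calls $B_N(k-i+1, N-1)$ for $i \in [1 : 2k-K-1]$ become necessary; in the second case, the peeling continues from the ``other end,'' producing the extra sum $\sum_{i=1}^{2k-K-1} N^{-i}\bigl(B_N(k-i+1,N-1)-1\bigr)$.

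The main obstacle I expect is bookkeeping: the coefficients $1/N^{k-1}$, $(N^{K-k}-1)/(N^{k-1})$, and $(N-2)/(N(N-1))$ must arise with the correct signs and multiplicities from the LP combination, and every individual inequality used must be one of the constraints listed in \Cref{prop:LP}. To keep this manageable, I would formalize the recursion as a statement of the form ``there exist nonnegative multipliers on the LP constraints such that the sum telescopes to the desired inequality,'' and verify it by induction on $k$ (for fixed $K$, $N$); the base case $k=1$ reduces to a direct application of Han's inequality plus decodability, and the inductive step peels one round and invokes the $(K, k-1)$ statement on a reduced-server subsystem. The case $k = \lceil K/2 \rceil$ requires the most care, since it is precisely where the two formulas in \Cref{thm-flat-bound} must agree, and checking that agreement (and monotonicity across the threshold) will be the most error-prone calculation.
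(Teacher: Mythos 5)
Your plan captures the right silhouette of the argument --- start from an entropy expression whose coefficients reproduce $m=(N-1)+(N-2)N^{K-k}$, peel one message per round while the download coefficient shrinks by a factor of $N$, and close the recursion with $B_N(\cdot,N-1)$ terms --- but the technical core is missing, and where you commit to specifics they do not match what is actually needed. The paper's proof hinges on a single two-term iteration inequality (\Cref{lemma-iteration-2term}): for the pair
\begin{equation*}
H(A_{1:j}^{[k']}S_{j+1:N}\,|\,\bF W_{1:k'})+N^{q}(j-1)H(A_{1:N}^{[k']}\,|\,\bF W_{1:k'}),
\end{equation*}
one round of submodularity (used twice, to merge the storage block $S_{j+1:N}$ into the answers and then to re-split $\sum_i H(A_i^{[k'+1]}S_{j+1:N})$), privacy, and decodability advances the message index $k'\to k'+1$ while dropping $q$ by one. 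Your per-round mechanism --- ``Han's inequality plus decodability yields a factor $1/N$ and a term $(N-2)/(N-1)$'' --- does not produce this pair, and you never exhibit the inequality that does; since the entire theorem is this lemma applied repeatedly plus \Cref{lemma-cancle-entropy}, this is the gap.

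Two of your structural claims are also wrong. First, the case split is not about the peeling ``continuing from the other end'': it comes from the requirement $q\geq 0$ in the iteration lemma (its proof needs $1+N^q(j-1)-j\geq 0$). For $k\leq K/2$ the exponent $q=K-k-i$ stays nonnegative through all $k-1$ rounds; for $k>K/2$ it reaches zero after $K-k-1$ rounds, and the remaining $2k-K$ rounds must be run with $q=0$ on only a $1/N$ fraction of the current term, shedding a leftover $\tfrac{N-1}{N^i}H(A_{1:N-1}^{[\cdot]}S_N\,|\,\cdot)$ each round --- these leftovers are precisely the $B_N(k-i+1,N-1)$ terms in part~2. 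Second, your proposed induction ``on $k$, invoking the $(K,k-1)$ statement on a reduced-server subsystem'' is not the right recursion: peeling one message sends the $(K,k)$ configuration to a $(K-1,k)$ configuration (one fewer \emph{message}, the same $N$ servers), the terminal reduction is to fewer messages via \Cref{lemma-LB-reduce}, and the two displayed formulas govern disjoint ranges of $k$, so there is no boundary case where they must ``agree.'' As written, the proposal is a plausible outline whose load-bearing steps are either absent or would not go through.
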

\begin{proof}
	Similar to \Cref{thm-general-lower-bound}, the lower bound can be obtained by minimizing the objective function in \Cref{prop:LP} under a chosen subset of constraints. 
	Specifically, we use the submodularity constraints for $j\in[2:N-1],~i\in[1:j]$ that (c.f. \eqref{pf-flat-iteration-2} and \eqref{pf-flat-iteration-5})
	\begin{align*}
	&H(A_{1:N}^{[k]}|\bF W_{1:k})+H(S_{j+1:N}|\bF W_{1:k})\geq H(A_{1:j}^{[k]}S_{j+1:N}|\bF W_{1:k})+H(A_{j+1:N}^{[k]}|\bF W_{1:k}),   \\
	&H(A_{1:i}^{[k+1]}S_{j+1:N}|\bF W_{1:k})+H(A_{i+1}^{[k+1]}S_{j+1:N}|\bF W_{1:k}) \geq H(A_{1:i+1}^{[k+1]}S_{j+1:N}|\bF W_{1:k})+H(S_{j+1:N}|\bF W_{1:k}).
	\end{align*}
%	The other constraints can be chosen accordingly. 
	The details can be found in Appendix~\ref{proof-thm-flat-bound}. 
\end{proof}
%%--------------------- END thm-flat-bound ----------------------------
\begin{remark}
	From Fig.~\ref{fig_compare-LP-Tian19}, we see that the two constraints given in~\cite{tian2019storage} are also obtained through the relaxed entropic LP. 
	Now these two constraints are also included in the further relaxed explicit expression, where the first one (Theorem 1 in \cite{tian2019storage}) is simply \eqref{initial-m=1} and the second one (Theorem 2 in \cite{tian2019storage}) can be obtained from \Cref{thm-flat-bound} by letting $k=1$ which becomes 
	\begin{align}
	\alpha+\left[(N-1)+(N-2)N^{K-1}\right]\beta\geq K+\frac{(N-2)(N^K-1)}{N(N-1)}. 
	\end{align}
\end{remark}

%%=============================================
\subsection{Numerical Results}\label{section-numerical}
We first compare the proposed lower bounds and the upper bounds of the storage-retrieval tradeoff in Fig.~\ref{fig_lower-upper-bounds}. 
%%%---------------------- BEGIN figure -----------------------------------------------
\begin{figure}[t!]
	\centering
	\begin{subfigure}[t]{0.3\textwidth}
		\centering
		\includegraphics[draft=false,scale=0.35]{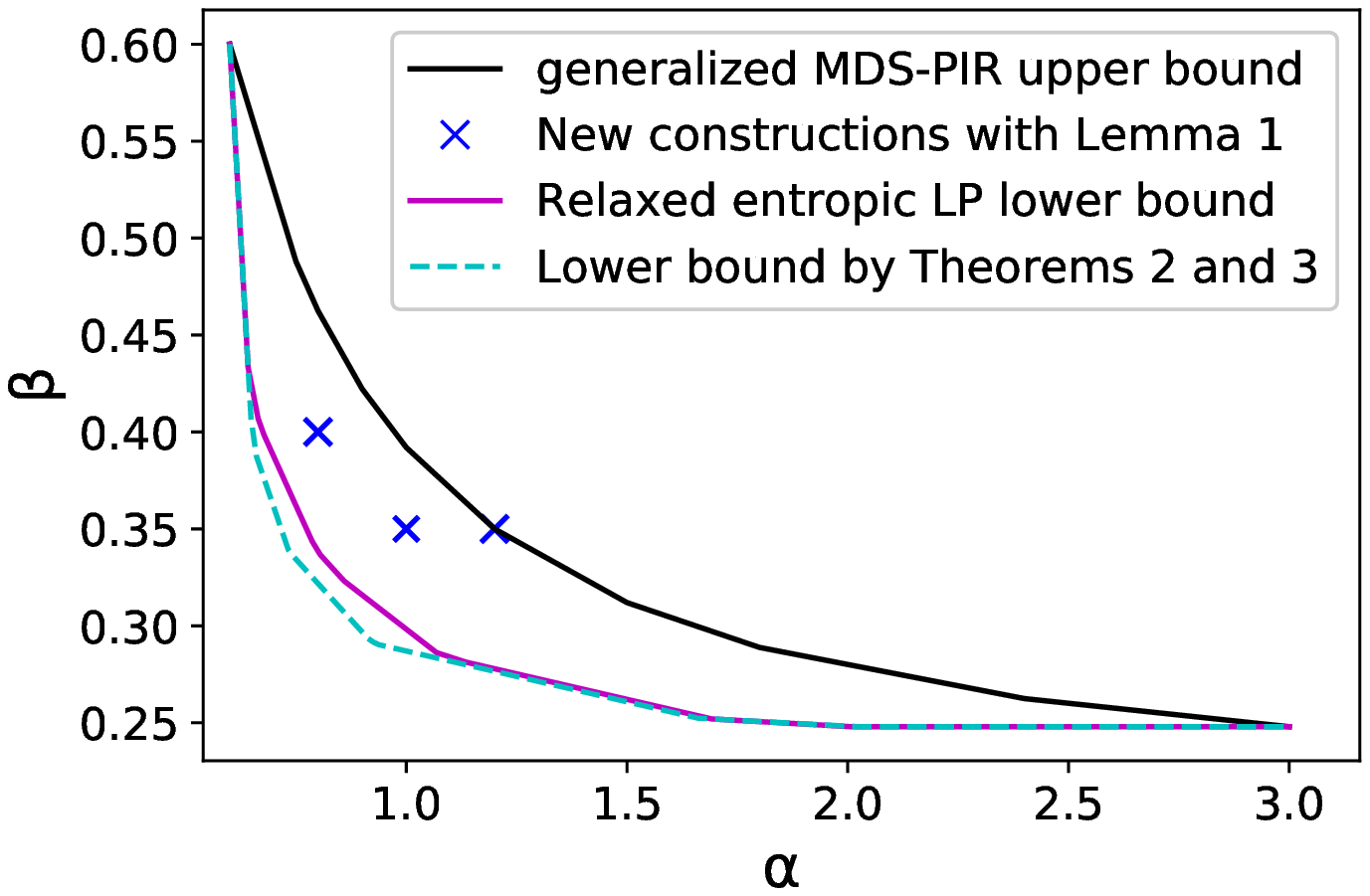}
		\caption{}
	\end{subfigure}
	\begin{subfigure}[t]{0.3\textwidth}
		\centering
		\includegraphics[draft=false,scale=0.35]{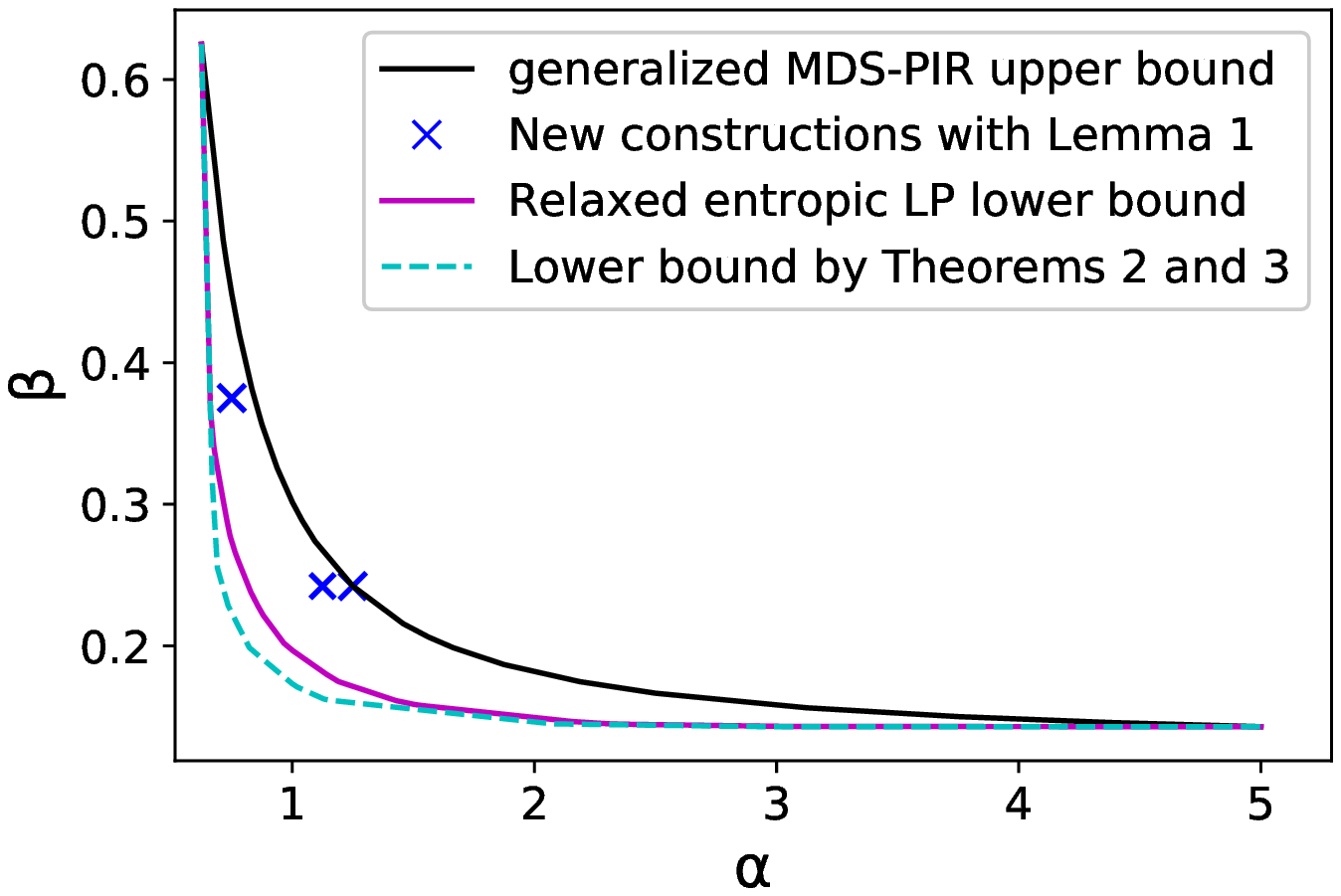}
		\caption{}
	\end{subfigure}
	\begin{subfigure}[t]{0.3\textwidth}
		\centering
		\includegraphics[draft=false,scale=0.35]{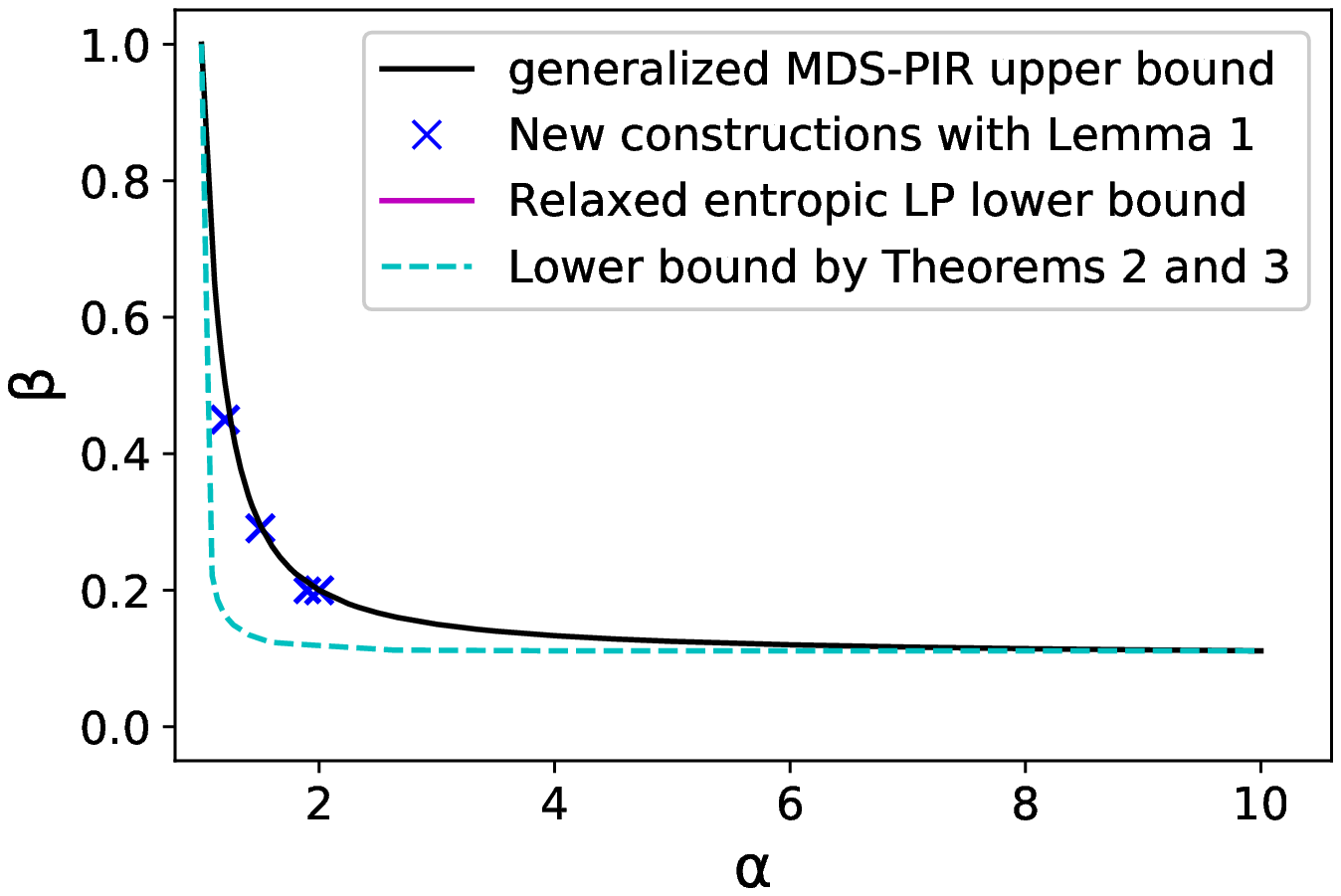}
		\caption{}
	\end{subfigure}
	\caption{Comparison of the lower bounds and upper bounds for $(N,K)=(5,3), (8,5), (10,10)$.}
	\label{fig_lower-upper-bounds}
\end{figure}
%%%---------------------- END figure -----------------------------------------------
The upper bounds in \Cref{section-UpperBounds} evidently outperform the generalized MDS-PIR upper bound at the storage-bound regime. We further observe that the lower bound in \Cref{thm-lower-bound-Nm} and \Cref{thm-flat-bound} is close to the relaxed entropic LP lower bound in \Cref{section-LP-LowerBound}. 
Recall the relaxed entropic LP has constraints on the order of $O(K N^8)$, which becomes unmanageable for larger $N,K$, e.g.,  we found $K=N=10$ can not be effectively computed in a reasonable amount of time. However, the lower bound in \Cref{thm-lower-bound-Nm} and \Cref{thm-flat-bound} is obtained only by direct calculation and can be solved quickly for large $K$ and $N$. 
These upper bounds and lower bounds help to further refine the approximation. 

We next further analyze the difference in Fig.~\ref{fig_gap-lower-upper-bounds}, and consider whether the new bounds will be able to provide a tighter approximation ratio. Since both the lower bounds and the upper bounds are small for large $K$ and $N$, we plot the ratio of the upper and lower bounds instead of their difference. Unfortunately, though the new bounds indeed provide improvement over the existing art, it appears they are not sufficient to yield a better approximation of $\beta$ in the storage-bound regime, and the largest ratio gap appears to be just above $\frac{K}{N}$; the precise positions of the largest gap are given in the respective figures.

%%%---------------------- BEGIN figure -----------------------------------------------
\begin{figure}[!t]
	\centering
	\begin{subfigure}[t]{0.3\textwidth}
		\centering
		\includegraphics[draft=false,scale=0.35]{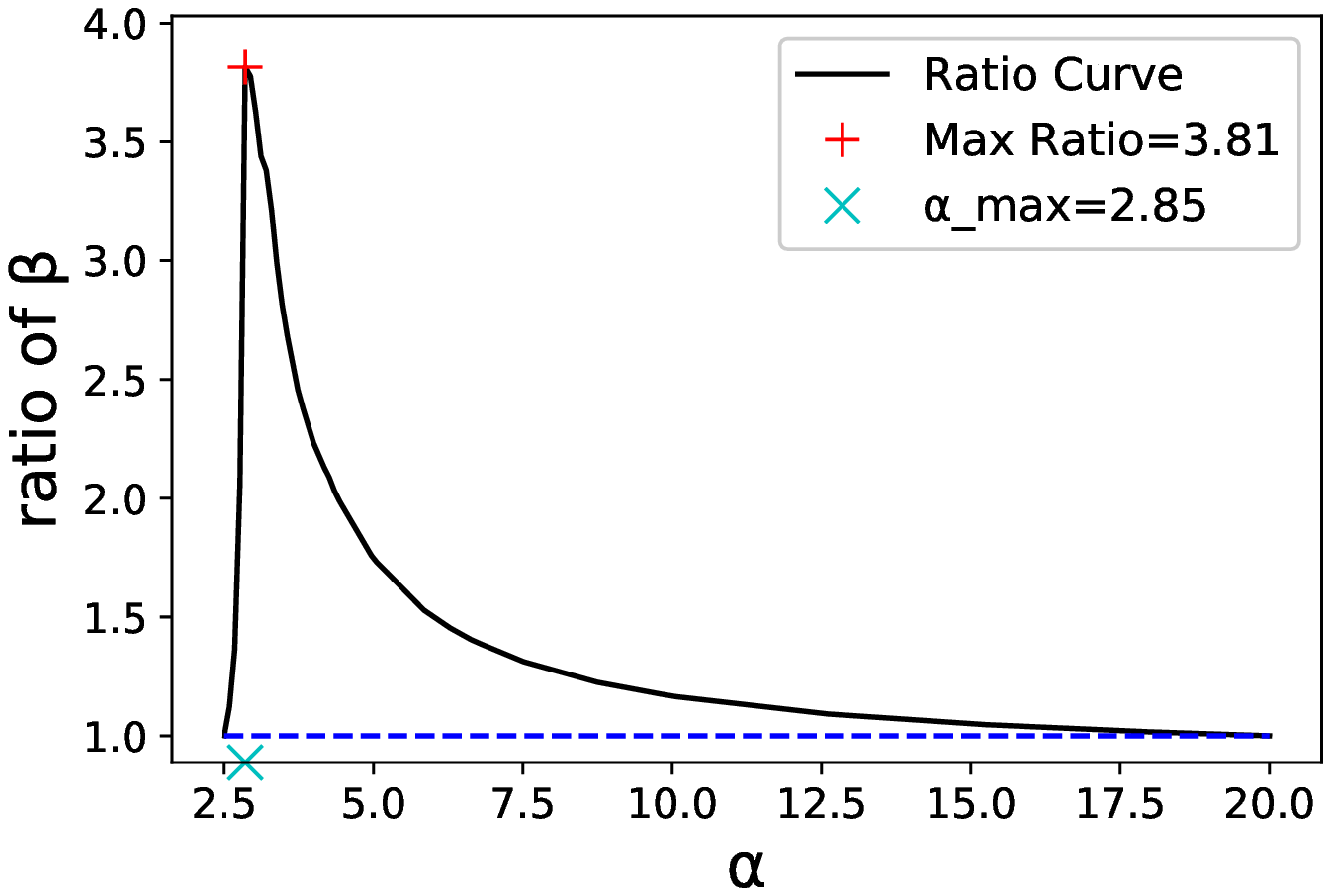}
		\caption{}
	\end{subfigure}
	\begin{subfigure}[t]{0.3\textwidth}
		\centering
		\includegraphics[draft=false,scale=0.35]{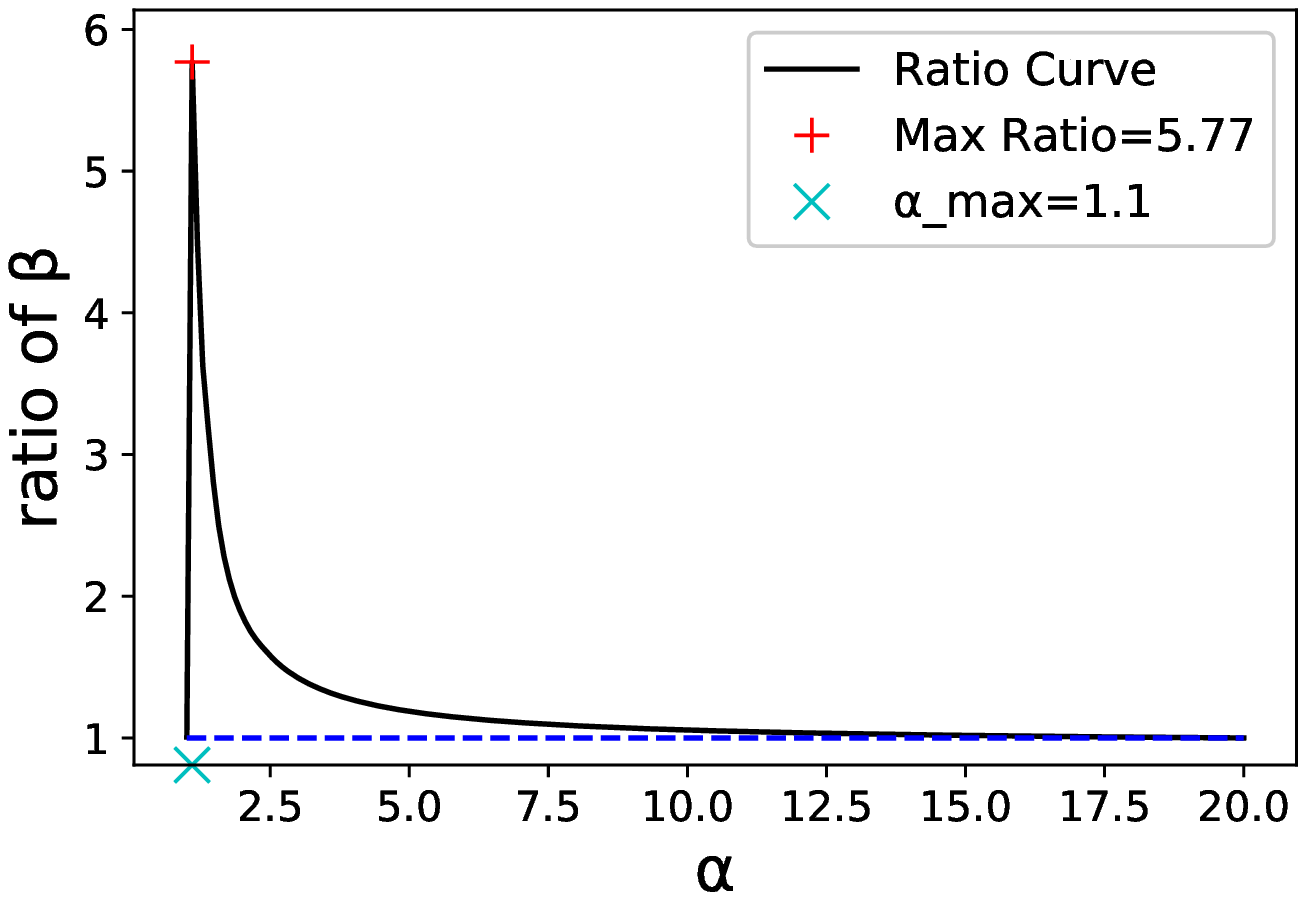}
		\caption{}
	\end{subfigure}
	\begin{subfigure}[t]{0.3\textwidth}
		\centering
		\includegraphics[draft=false,scale=0.35]{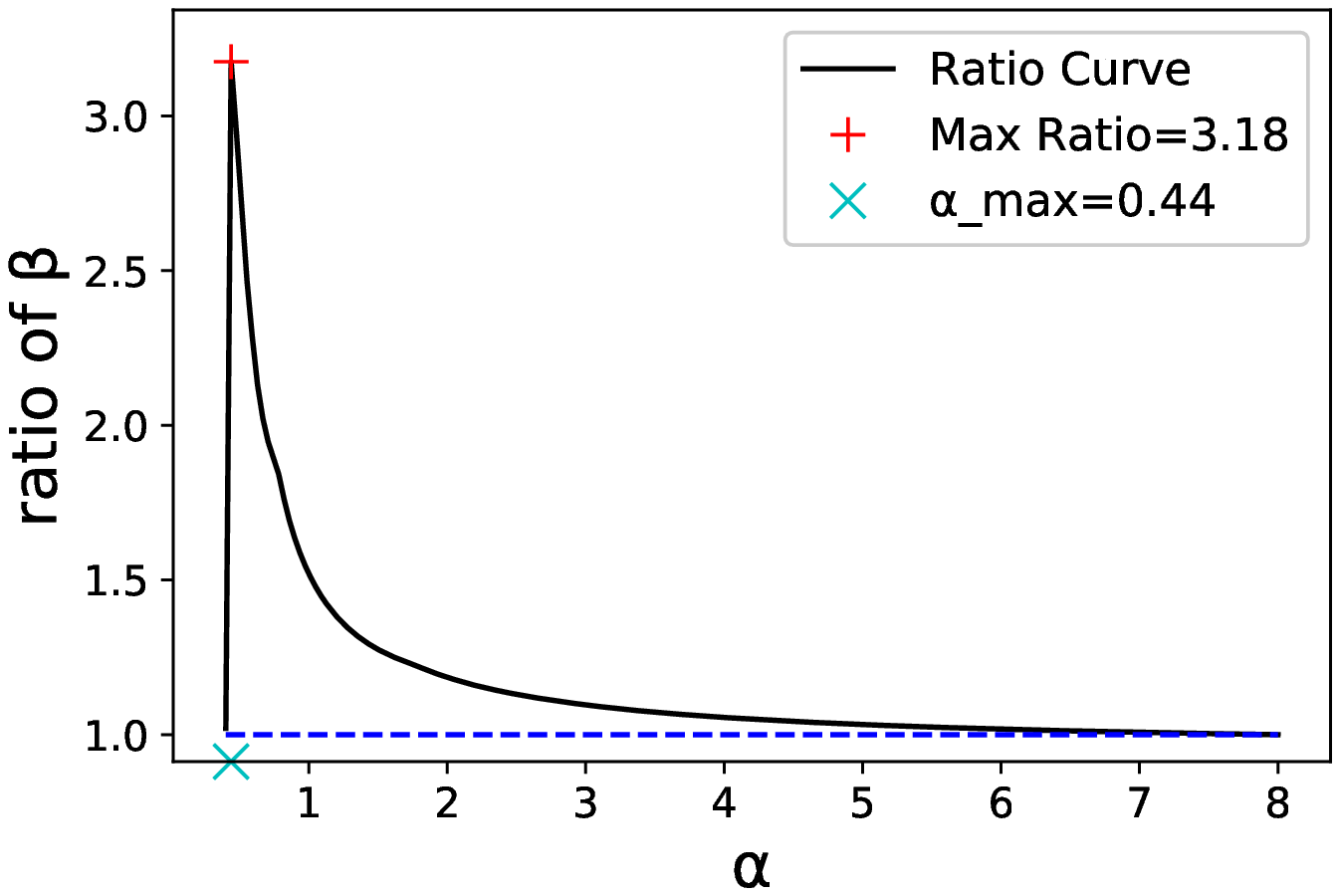}
		\caption{}
	\end{subfigure}
	\caption{The ratio of the upper bound and lower bound for $(N,K)=(8,20), (20,20), (20,8)$.}
	\label{fig_gap-lower-upper-bounds}
\end{figure}

%%=============================================
\section{Conclusion}\label{section-conclusion}
We studied the tradeoff between the storage cost and the download cost in private information retrieval systems. Three fundamental results are first presented: a regime-wise 2-approximation, a cyclic permutation lemma, and a relaxed entropic LP with polynomial complexity. Equipped with these results, we then provide improved upper bounds and lower bounds. Though these results provide significant new insights into the storage-retrieval tradeoff in PIR systems, the characterization is not tight in general. As a future work, we plan to further investigate the relaxed entropic LP and derive improved lower bound that can yield better approximate or precise characterizations.

\begin{appendices}
%%-----------------------------------------------------------------
\section{Proof of \Cref{thm-general-lower-bound}}\label{proof-thm-general-lower-bound}
For $m=1$ and $m=N$, we have $B_N(K,1)= K$ and $B_N(K,N)= N\beta_0$ which are the boundary conditions in \eqref{initial-m=1} and \eqref{initial-m=N}. 
For $m\in[2:N-1]$, we prove the lower bound by first considering 
\begin{align}
B_N(K,m)&\geq \frac{1}{L}H(A_{1:m}^{[1]}S_{m+1:N}|\bF)  \\
&=\frac{1}{L}\left[H(W_1)+H(A_{1:m}^{[1]}S_{m+1:N}|\bF W_1)\right], \\
&=1+\frac{1}{L} H(A_{1:m}^{[1]}S_{m+1:N}|\bF W_1).   \label{alpha-to-entropy}
\end{align}
Thus, we only need to find a lower bound to $H(A_{1:m}^{[1]}S_{m+1:N}|\bF W_1)$. 
We start from the following lemmas. 
%%------------------- BEGIN lemma-suss-add ------------------
\begin{lemma}\label{lemma-suss-add}
	For $m\in[2:N-1]$, we have 
	\begin{equation}
	H(A_{1:m}^{[1]}S_{m+1:N}|\bF W_1)+\sum_{j=2}^mH(S_{j:N}|\bF W_1)\geq \sum_{j=1}^mH(A_j^{[2]}S_{j+1:N}|\bF W_1). \label{succ-add}
	\end{equation}
\end{lemma}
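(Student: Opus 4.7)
The plan is to prove the inequality by telescoping a single carefully chosen family of submodular inequalities (all with the conditioning $\bF, W_1$), and then applying the privacy relation at the end to convert the answers for message~$1$ into answers for message~$2$. The pivotal observation is that the two sides of \eqref{succ-add} differ only by a telescoping chain if one introduces the intermediate quantity $T_j \triangleq H(A_{1:j}^{[1]}S_{j+1:N}\mid \bF W_1)$, with $T_m$ matching the first term on the LHS and $T_1$ matching the $j=1$ summand on the RHS.

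First I would instantiate the submodular constraint of \Cref{prop:LP} (applied to $x_1$, i.e., conditioned on $\bF W_1$) with the set choices $\cA_1 = \{j+1,\ldots,N\}$, $\cB_1 = \{1,\ldots,j\}$, $\cA_2 = \{j,\ldots,N\}$, $\cB_2 = \emptyset$. A direct computation of the set operations shows $\cA_1\cup\cA_2 = \{j,\ldots,N\}$, $(\cB_1\cup\cB_2)\setminus(\cA_1\cup\cA_2) = \{1,\ldots,j-1\}$, $\cA_1\cap\cA_2 = \{j+1,\ldots,N\}$, and $|\cA_2\cap\cB_1| = 1$, so the submodular inequality gives exactly
\begin{align}
T_j + H(S_{j:N}\mid \bF W_1) \;\geq\; T_{j-1} + H(A_j^{[1]}S_{j+1:N}\mid \bF W_1),
\end{align}
for each $j\in[2:m]$. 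This particular choice is what makes the chain collapse: one side contains $T_j$, the other contains $T_{j-1}$ together with an isolated $A_j^{[1]}$ term, which is precisely the structure needed to telescope.

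Next I would sum these $m-1$ inequalities over $j=2,\ldots,m$. The $T_j$ terms telescope and leave only $T_m$ on the LHS (together with $\sum_{j=2}^m H(S_{j:N}\mid \bF W_1)$), while the RHS becomes $T_1 + \sum_{j=2}^m H(A_j^{[1]}S_{j+1:N}\mid \bF W_1)$. Since $T_1 = H(A_1^{[1]}S_{2:N}\mid \bF W_1)$, absorbing it into the sum yields $\sum_{j=1}^m H(A_j^{[1]}S_{j+1:N}\mid \bF W_1)$.

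The final step is to replace each $A_j^{[1]}$ by $A_j^{[2]}$ using the Privacy identity $x_1(a,1)=y_2(a,1)$ of \Cref{prop:LP}, namely $H(A_j^{[1]}S_{j+1:N}\mid \bF W_1) = H(A_j^{[2]}S_{j+1:N}\mid \bF W_1)$, which follows from the identical distribution of $(Q_j^{[1]},A_j^{[1]},S_{j+1:N},W_1)$ and $(Q_j^{[2]},A_j^{[2]},S_{j+1:N},W_1)$ guaranteed by the single-server privacy requirement together with the independence of $\bF$ and $W_{1:K}$. Substituting into the summed inequality yields \eqref{succ-add}. The only real obstacle is the combinatorial bookkeeping of picking the submodular sets so the telescope works cleanly; once that is fixed, monotonicity, decodability, and Han's inequality are not needed and the privacy step is purely mechanical.
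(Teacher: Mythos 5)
Your proposal is correct and follows essentially the same route as the paper: the submodular inequality you instantiate (with $\cA_1=\{j+1,\ldots,N\}$, $\cB_1=[1:j]$, $\cA_2=\{j,\ldots,N\}$, $\cB_2=\emptyset$) is exactly the paper's inequality \eqref{pf-succ-iteration}, which is then summed over $j\in[2:m]$ to telescope and finished off with the privacy identity. Your write-up merely makes the set bookkeeping explicit where the paper leaves it implicit.
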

\begin{proof}
	For $j\in[2:N-1]$, using submodularity, we have 
	\begin{align}
	H(A_{1:j}^{[1]}S_{j+1:N}|\bF W_1)+H(S_{j:N}|\bF W_1)\geq H(A_{j}^{[1]}S_{j+1:N}|\bF W_1)+H(A_{1:j-1}^{[1]}S_{j:N}|\bF W_1).  \label{pf-succ-iteration}
	\end{align}
	summing up \eqref{pf-succ-iteration} for $j=2,3,\cdots,m$, we obtain 
	\begin{align}
	H(A_{1:m}^{[1]}S_{m+1:N}|\bF W_1)+\sum_{j=2}^mH(S_{j:N}|\bF W_1)\geq \sum_{j=1}^mH(A_j^{[1]}S_{j+1:N}|\bF W_1). 
	\end{align}
	By replacing the terms in the RHS using identical distribution (privacy), we can obtain the inequality in~\eqref{succ-add}. 
\end{proof}
%%---------------------- END lemma-suss-add ------------------------

%%---------------------- BEGIN lemma-generageS ---------------------
\begin{lemma}\label{lemma-generageS}
	For any $j\in[2:N-1]$ and $n\in[1:N-j+1]$, we have 
	\begin{align}
	H(A^{[2]}_jS_{j+1:N}|\bF W_1)&\geq \frac{1}{j+n-1}H(A^{[2]}_{1:j+n-1}S_{j+n:N}|\bF W_1)   \nonumber \\
	&\qquad +\left(\frac{1}{n}-\frac{1}{j+n-1}\right)H(S_{j+n:N}|\bF W_1)+\frac{n-1}{n}H(S_{j:N}|\bF W_1).    \label{generateS-1}
	\end{align}
	For $j=1$, we have 
	\begin{equation}
	H(A^{[2]}_1S_{2:N}|\bF W_1)\geq (K-1)L.   \label{generateS-2}
	\end{equation}
	For $j\in[1:N-1]$, we have 
	\begin{equation}
	H(A^{[2]}_jS_{j+1:N}|\bF W_1)\geq H(S_{j+1:N}|\bF W_1).     \label{generateS-3}
	\end{equation}
\end{lemma}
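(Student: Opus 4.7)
I will handle parts \eqref{generateS-2} and \eqref{generateS-3} first, as each uses a single off-the-shelf ingredient, and then devote the main effort to \eqref{generateS-1}, splitting it into a submodularity telescope step and a Han-averaging step whose coefficients interlock to give the three-term right-hand side. Part \eqref{generateS-3} is immediate from monotonicity of conditional entropy in its first argument: $H(A_j^{[2]},S_{j+1:N}\mid\bF,W_1)\geq H(S_{j+1:N}\mid\bF,W_1)$. For part \eqref{generateS-2}, I will iterate the decodability and privacy constraints of \Cref{prop:LP}: $y_k(N-1,1)\geq 1+x_k(N-1,1)$ and $x_k(N-1,1)=y_{k+1}(N-1,1)$ chain through $k=2,\ldots,K-1$ and close with $x_K(N-1,1)=0$ to give $y_2(N-1,1)\geq K-1$, which is the claim after multiplying by $L$.

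For part \eqref{generateS-1}, the first step is submodularity applied, for each $i\in[j+1:j+n-1]$, with disjoint-set pairs $(\cA_1,\cB_1)=(\{i,\ldots,N\},\{j,\ldots,i-1\})$ and $(\cA_2,\cB_2)=(\{j,\ldots,i-1,i+1,\ldots,N\},\{i\})$, yielding
\begin{equation*}
H(A_{j:i-1}^{[2]}S_{i:N}|\bF W_1)+H(A_i^{[2]}S_{j:i-1}S_{i+1:N}|\bF W_1)\geq H(A_{j:i}^{[2]}S_{i+1:N}|\bF W_1)+H(S_{j:N}|\bF W_1).
\end{equation*}
By server symmetry the second left-hand term equals $H(A_j^{[2]}S_{j+1:N}|\bF W_1)$, since both patterns have storage size $N-j$ and a single answer index. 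Rearranging so that the fixed quantity $H(A_j^{[2]}S_{j+1:N}|\bF W_1)-H(S_{j:N}|\bF W_1)$ sits on the left and telescoping over $i=j+1,\ldots,j+n-1$ produces
\begin{equation*}
n\,H(A_j^{[2]}S_{j+1:N}|\bF W_1) \geq H(A_{j:j+n-1}^{[2]}S_{j+n:N}|\bF W_1) + (n-1)\,H(S_{j:N}|\bF W_1).
\end{equation*}

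The second step is Han's inequality applied conditionally to $A_1^{[2]},\ldots,A_{j+n-1}^{[2]}$ given $(\bF,W_1,S_{j+n:N})$ at subset size $n$: server symmetry within $[1:j+n-1]$ makes all $\binom{j+n-1}{n}$ size-$n$ conditional entropies equal to $H(A_{j:j+n-1}^{[2]}|\bF W_1 S_{j+n:N})$, so averaging gives $H(A_{j:j+n-1}^{[2]}|\bF W_1 S_{j+n:N})\geq \tfrac{n}{j+n-1}H(A_{1:j+n-1}^{[2]}|\bF W_1 S_{j+n:N})$. Restoring $H(S_{j+n:N}|\bF W_1)$ through the chain rule and using $1-\tfrac{n}{j+n-1}=\tfrac{j-1}{j+n-1}=n\bigl(\tfrac{1}{n}-\tfrac{1}{j+n-1}\bigr)$ converts this to
\begin{equation*}
\tfrac{1}{n}H(A_{j:j+n-1}^{[2]}S_{j+n:N}|\bF W_1) \geq \tfrac{1}{j+n-1}H(A_{1:j+n-1}^{[2]}S_{j+n:N}|\bF W_1) + \Bigl(\tfrac{1}{n}-\tfrac{1}{j+n-1}\Bigr)H(S_{j+n:N}|\bF W_1),
\end{equation*}
and substituting into the telescoped bound (after dividing it by $n$) yields \eqref{generateS-1} exactly. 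The delicate point that I expect to demand the most care is forcing the Han coefficient $\tfrac{n}{j+n-1}$ to interlock with the telescoped coefficient $\tfrac{1}{n}$ so that the slack on $H(S_{j+n:N}|\bF W_1)$ emerges precisely as $\tfrac{1}{n}-\tfrac{1}{j+n-1}$; this pins down the decomposition parameter $n$ in both the telescope length and the Han subset size, and the two choices must be made consistently from the outset.
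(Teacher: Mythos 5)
Your proposal is correct and follows essentially the same route as the paper: \eqref{generateS-3} by monotonicity, \eqref{generateS-2} by chaining decodability, privacy, and the boundary condition across $k$, and \eqref{generateS-1} by telescoping the submodular inequality over $i=j+1,\ldots,j+n-1$ (with server symmetry identifying $H(A_i^{[2]}S_{j:i-1}S_{i+1:N}|\bF W_1)$ with $H(A_j^{[2]}S_{j+1:N}|\bF W_1)$) and then applying the conditional Han inequality at subset size $n$, which produces exactly the coefficient $\frac{1}{n}-\frac{1}{j+n-1}$ on $H(S_{j+n:N}|\bF W_1)$. The only cosmetic difference is that you cite the LP constraints abstractly for \eqref{generateS-2} where the paper writes out the entropy chain explicitly.
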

\begin{proof}
	The inequality in \eqref{generateS-3} is the monotonicity in \eqref{entropicLP-monotone}. 
	The inequality in \eqref{generateS-2} follows by applying the following inequality successively for $k=1,2,\cdots,K-1$: 
	\begin{align}
	H(A^{[k+1]}_1S_{2:N}|\bF W_{1:k})&=H(A^{[k+1]}_1S_{2:N}W_{k+1}|\bF W_{1:k})   \label{pf-generateS-3-1}  \\
	&=L+H(A^{[k+1]}_1S_{2:N}|\bF W_{1:k+1})  \label{pf-generateS-3-2}  \\
	&=L+H(A^{[k+1]}_1S_{2:N}|Q_n^{[k+1]} W_{1:k+1})   \label{pf-generateS-3-3}  \\
	&=L+H(A^{[k+2]}_1S_{2:N}|Q_n^{[k+2]} W_{1:k+1})  \label{pf-generateS-3-4}  \\
	&=L+H(A^{[k+2]}_1S_{2:N}|\bF W_{1:k+1}),  \label{pf-generateS-3-5}  
	\end{align}
	where \eqref{pf-generateS-3-3} and \eqref{pf-generateS-3-5} follow from the Markov chains $\bF\leftrightarrow Q_n^{[k+1]}\leftrightarrow (W_{1:K}S_{1:N}A_n^{[k+1]})$ and $\bF\leftrightarrow Q_n^{[k+2]}\leftrightarrow (W_{1:K}S_{1:N}A_n^{[k+2]})$,
	and \eqref{pf-generateS-3-4} follows from the privacy requirement. 
	To prove \eqref{generateS-1}, we first consider for any $j\in[2:N-1]$ and $i\in[j:N]$ that 
	\begin{equation}
	H(A_{j:i-1}^{[2]}S_{i:N}|\bF W_1)+H(A_i^{[2]}S_{j:i-1}S_{i+1:N}|\bF W_1)\geq H(A_{j:i}^{[2]}S_{i+1:N}|\bF W_1)+H(S_{j:N}|\bF W_1),  \label{lem-geS-pf-1}
	\end{equation}
	which can be obtained directly by applying submodularity. 
	Then by symmetry, for  $j\in[2:N-1]$ and $n\in[1:N-j+1]$, we have 
	\begin{align}
	H(A^{[2]}_jS_{j+1:N}|\bF W_1)&\geq \frac{1}{n}\sum_{i=j}^{j+n-1}H(A^{[2]}_iS_{j:i-1,i+1:N}|\bF W_1)   \\
	&\geq \frac{1}{n}\left[H(A^{[2]}_{j:j+n-1}S_{j+n:N}|\bF W_1)+(n-1)H(S_{j:N}|\bF W_1)\right]   \label{lem-geS-pf-2} \\
	&\geq \frac{1}{j+n-1}H(A^{[2]}_{1:j+n-1}S_{j+n:N}|\bF W_1)   \nonumber \\
	&\qquad +\left(\frac{1}{n}-\frac{1}{j+n-1}\right)H(S_{j+n:N}|\bF W_1)+\frac{n-1}{n}H(S_{j:N}|\bF W_1),    \label{lem-geS-pf-3}
	\end{align}
	where \eqref{lem-geS-pf-2} is obtained by applying \eqref{lem-geS-pf-1} for $i=j+1,j+2\cdots,j+n-1$, 
	and the last inequality \eqref{lem-geS-pf-3} follows from symmetry and the Han's inequality in \eqref{entropicLP-Han} that 
	\begin{align}
	\frac{1}{n}H(A^{[2]}_{j:j+n-1}|S_{j+n:N}\bF W_1)&=\frac{1}{{j+n-1 \choose n}}\sum_{\cB\subseteq[1:j+n-1]:|\cB|=n}\frac{H(A^{[2]}_{\cB}|S_{j+n:N}\bF W_1)}{n}   \\
	&\geq \frac{1}{j+n-1}H(A^{[2]}_{1:j+n-1}|S_{j+n:N}\bF W_1). 
	\end{align}
\end{proof}
%%--------------------- END lemma-generageS -----------------------

%%---------------------- BEGIN lemma-LB-reduce ---------------------
\begin{lemma}\label{lemma-LB-reduce}
	For $K\geq 3$, $N\geq 3$, $m\in[1:N-1]$, and $k\in[0:K-1]$, we have %$H(A_{1:m}^{[k+1]}S_{m+1:N}|FW_{1:k})\geq B_N^{K-k}(N-m,m)$. 
	\begin{equation}
	\frac{1}{L}H(A_{1:m}^{[k+1]}S_{m+1:N}|FW_{1:k})\geq B_N(K-k,m). 
	\end{equation}
\end{lemma}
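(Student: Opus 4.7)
The plan is to identify the left-hand side as the entropic quantity $y_{k+1}(N-m,m)$ in the notation of \Cref{prop:LP} via server symmetry, recognize that the reindexed variables $\{x_{k+j}(\cdot,\cdot), y_{k+j}(\cdot,\cdot)\}_{j=1}^{K-k}$ inherited from the original code form a feasible solution to the relaxed entropic LP for a $(N,K-k)$ PIR system, and then lower-bound $y_{k+1}(N-m,m)$ by reapplying the chain of entropic inequalities that underlies the proof of \Cref{thm-general-lower-bound} in the reindexed setting.

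Concretely, I would first verify the identity $\frac{1}{L}H(A_{1:m}^{[k+1]}S_{m+1:N}|\bF W_{1:k}) = y_{k+1}(N-m, m)$ by the server symmetry property of the (assumed) symmetrized code. Next, I would construct an auxiliary $(N,K-k)$ PIR code $\mathcal{C}'$ from the original $(N,K)$ code $\mathcal{C}$ by relabeling $W'_i \triangleq W_{k+i}$ for $i\in[1:K-k]$, augmenting the shared random key to $\bF' \triangleq (\bF, W_{1:k})$, and inheriting $S'_n = S_n$, $Q'^{[i]}_n = Q^{[k+i]}_n$, and $A'^{[i]}_n = A^{[k+i]}_n$. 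Verifying that $\mathcal{C}'$ is a valid $(N,K-k)$ PIR code is routine: the augmented key is independent of $W_{k+1:K}$ because $\bF \perp W_{1:K}$ and $W_{1:k}\perp W_{k+1:K}$; per-server privacy is inherited from the original identical-distribution privacy constraint applied at indices $k+i$; and decodability of $W'_i$ follows from that of $W_{k+i}$ in the original code.

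With the reduced code in hand, the entropic quantities $x'_j(a,b) \triangleq x_{k+j}(a,b)$ and $y'_j(a,b) \triangleq y_{k+j}(a,b)$ for $j\in[1:K-k]$ satisfy every constraint of the relaxed entropic LP in \Cref{prop:LP} for the $(N,K-k)$ PIR system via direct index shifting: submodularity, monotonicity, Han's inequality, privacy (for $j\in[1:K-k-1]$), invariance, decodability, and the shifted boundary condition $x'_{K-k}(a,b) = x_K(a,b) = 0$ all follow from the corresponding constraints in the original $(N,K)$ LP. Applying the same derivation used in the proof of \Cref{thm-general-lower-bound} (specifically~\eqref{alpha-to-entropy}, \Cref{lemma-suss-add}, and \Cref{lemma-generageS}) to the reindexed variables yields a recursive lower bound on $y'_1(N-m,m) = y_{k+1}(N-m,m)$ matching $B_N(K-k,m)$: the entropic quantity plays exactly the role of the cost surrogate $(N-m)\alpha'+m\beta'$ for $\mathcal{C}'$ via the analog of~\eqref{alpha-to-entropy}, and the infimum characterization of $B_N(K-k,m)$ closes the argument.

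The hard part will be precisely justifying the inheritance of the LP constraints under the index shift, most delicately the shifted boundary condition at $j=K-k$ (where $x'_{K-k}(a,b)=0$ must be derived from $x_K(a,b)=0$ rather than assumed) and the privacy constraint at $j=K-k-1$, together with confirming that the chain of entropic manipulations in \Cref{lemma-suss-add,lemma-generageS} applied to the conditional entropies $H(\cdot|\bF W_{1:k})$ produces the bound $B_N(K-k,m)$ itself rather than a strictly weaker explicit surrogate. The key technical ingredient is that all entropy inequalities used in those lemmas (submodularity, Han's inequality, privacy-based identical distribution, reconstruction of answers from storage and queries) are closed under the additional conditioning on $W_{1:k}$.
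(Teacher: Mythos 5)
Your approach is the same as the paper's: the published proof of this lemma is the single sentence that one may ``view the messages $W_{1:k}$ as empty sets,'' leaving a $(K-k)$-message system, and your construction of $\mathcal{C}'$ (absorb $W_{1:k}$ into the random key, relabel the remaining messages, inherit storage, queries and answers) is exactly the right formalization of that sentence. The validity checks you list --- independence of the augmented key from $W_{k+1:K}$, inherited per-server privacy, inherited decodability --- are the ones that matter.

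One caveat on the step you yourself flag as the hard part. The detour through feasibility of the index-shifted variables in the LP of \Cref{prop:LP}, or equivalently re-running the derivation of \Cref{thm-general-lower-bound} on $\mathcal{C}'$, can only ever produce $y_{k+1}(N-m,m)\geq \widetilde{B_N}(K-k,m,\bm{c})$, which is weaker than the stated conclusion $y_{k+1}(N-m,m)\geq B_N(K-k,m)$. The ``infimum characterization'' route does not close the gap by itself either: subadditivity relates the costs of $\mathcal{C}'$ to the entropy in the direction $(N-m)\alpha'+m\beta' \geq \frac{1}{L}H(A'^{[1]}_{1:m}S'_{m+1:N}|\bF')$, so achievability of $\mathcal{C}'$ yields $B_N(K-k,m)\leq (N-m)\alpha'+m\beta'$ but not $B_N(K-k,m)\leq y_{k+1}(N-m,m)$. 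The clean repair --- implicit in how the lemma is actually deployed in the recursion of Appendix~\ref{proof-thm-general-lower-bound} --- is to run both the reduction and the recursion on the entropic functional $E_N(K,m)\triangleq \inf_{\text{codes}}\frac{1}{L}H(A^{[1]}_{1:m}S_{m+1:N}|\bF)$: your reduction then gives the lemma verbatim with $E_N$ in place of $B_N$, since the left-hand side is one of the values over which the infimum for the $(N,K-k)$ problem ranges, and one invokes $B_N\geq E_N$ only once at the top of the chain. This looseness is inherited from the paper rather than introduced by you, but your write-up should resolve it in that direction rather than claim the recursive derivation ``matches $B_N(K-k,m)$.''
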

\begin{proof}
	The lemma can be obtained if we view the messages $W_{1:k}$ as empty sets and then there are $K-k$ messages in the system. 
\end{proof}
%%---------------------- END lemma-LB-reduce ---------------------

%%---------------- BEGIN lemma-SubsetEntropyInequality -------------
The following {\it subset entropy inequality} will be used in the bounding process. 
\begin{lemma}\label{lemma-SubsetEntropyInequality}
	For $d_j\geq 0$, $j\in[2:N]$, if 
	\begin{equation}
	\sum_{i=j}^N(N-i+1)d_i\geq \sum_{i=j}^m(N-i+1), \text{ for all }j\in[2:m],  \label{SEI-condition}
	\end{equation}
	then we have 
	\begin{equation}
	\sum_{j=2}^Nd_j H(S_{j:N})\geq \sum_{j=2}^m H(S_{j:N}). \label{SEI-inequality}
	\end{equation}
\end{lemma}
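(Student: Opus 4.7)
The plan is to reduce \eqref{SEI-inequality} to an inequality for a weighted sum of a monotone sequence, which can then be dispatched by Abel summation. First I would invoke server symmetry (which we may assume without loss of generality, as in \Cref{prop:LP}) to conclude that $H(S_\mathcal{A})$ depends only on $|\mathcal{A}|$. Setting $h_k\triangleq H(S_\mathcal{A})$ for any subset $\mathcal{A}\subseteq[1:N]$ with $|\mathcal{A}|=k$, we have $H(S_{j:N})=h_{N-j+1}$. Han's inequality applied to $(S_1,\ldots,S_N)$ then yields that $h_k/k$ is non-increasing in $k$, so the sequence
\begin{equation*}
r_j \;\triangleq\; \frac{H(S_{j:N})}{N-j+1} \;=\; \frac{h_{N-j+1}}{N-j+1}, \qquad j\in[2:N],
\end{equation*}
is non-decreasing in $j$ and non-negative.

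Second, I would rewrite both sides of \eqref{SEI-inequality} in terms of $r_j$. Let $c_j\triangleq (N-j+1)d_j$ for $j\in[2:N]$, let $e_j\triangleq (N-j+1)$ for $j\in[2:m]$, and set $e_j\triangleq 0$ for $j\in[m+1:N]$. Then \eqref{SEI-inequality} is equivalent to $\sum_{j=2}^N (c_j-e_j)\,r_j \geq 0$, while the hypothesis \eqref{SEI-condition} asserts that the tail sums $\Delta_j\triangleq\sum_{i=j}^N(c_i-e_i)$ satisfy $\Delta_j\geq 0$ for every $j\in[2:m]$. For $j\in[m+1:N]$ we have $e_j=0$, so $\Delta_j=\sum_{i=j}^N c_i\geq 0$ follows automatically from $d_i\geq 0$; hence $\Delta_j\geq 0$ throughout $[2:N]$.

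Third, a one-line Abel summation-by-parts identity gives
\begin{equation*}
\sum_{j=2}^N (c_j-e_j)\,r_j \;=\; \Delta_2\,r_2 \;+\; \sum_{j=3}^N \Delta_j\,(r_j-r_{j-1}),
\end{equation*}
and every summand on the right is non-negative by the three facts already established: $\Delta_j\geq 0$ on $[2:N]$, $r_2\geq 0$, and $r_j-r_{j-1}\geq 0$ for $j\geq 3$. This yields \eqref{SEI-inequality}.

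I do not anticipate any real obstacle. The two substantive ingredients, server symmetry and Han's inequality on $(S_1,\ldots,S_N)$, are already standard tools in this paper, and the remaining manipulation is a routine summation-by-parts identity. The only point that needs care is that the hypothesis controls the tail sums only for $j\in[2:m]$; extending non-negativity of $\Delta_j$ to $j\in[m+1:N]$ relies on the non-negativity of the $d_i$'s, which is precisely why that assumption is needed in the statement.
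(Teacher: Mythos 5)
Your proof is correct, and it rests on the same two ingredients as the paper's: server symmetry plus Han's inequality, packaged as the statement that $r_j = H(S_{j:N})/(N-j+1)$ is non-decreasing in $j$ and non-negative. Where you diverge is in how the coefficient condition \eqref{SEI-condition} is exploited. The paper splits each $d_i$ into non-negative pieces $d_i^2,\ldots,d_i^m$ so that each block $\{d_i^j\}_{i\geq j}$ alone satisfies $\sum_{i=j}^N (N-i+1)d_i^j \geq N-j+1$, applies the monotonicity of $r_j$ within each block to dominate a single term $H(S_{j:N})$, and then sums over $j$; the existence of such a partition is asserted rather than proved, and actually requires a small greedy/flow argument precisely because the hypothesis only controls tail sums. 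Your Abel summation-by-parts identity, $\sum_{j=2}^N (c_j-e_j)r_j = \Delta_2 r_2 + \sum_{j=3}^N \Delta_j (r_j - r_{j-1})$, sidesteps that partition entirely: the tail sums $\Delta_j$ are exactly what the hypothesis provides for $j\in[2:m]$, and you correctly observe that $\Delta_j\geq 0$ on $[m+1:N]$ follows from $d_i\geq 0$ alone. The result is a shorter and more self-contained argument that makes transparent why the hypothesis is stated in terms of weighted tail sums; the paper's partition formulation, on the other hand, is closer in spirit to how the coefficients $d_j$ are actually constructed and verified in the proof of \Cref{thm-lower-bound-Nm}.
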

\begin{proof}
	The lemma can be obtained by symmetry and Han's inequality. The details can be found in Appendix~\ref{proof-lemma-SubsetEntropyInequality}. 
\end{proof}
%%-------------- END lemma-SubsetEntropyInequality -------------

Now we use \Cref{lemma-generageS} to bound each term in the RHS of \eqref{succ-add}. 
We partition each term into several pieces, each of which applies one of the inequalities in \eqref{generateS-1}-\eqref{generateS-3}. 
For $j\in[1:m]$, denote the coefficient of the inequality in \eqref{generateS-3} by $c_j^0$. 
For $j=1$, let the coefficient of the inequality in \eqref{generateS-2} be $c_1^1$, and for notational simplicity in the sequel, let $c_1^n=0$ for $n\in[2:N-1]$. 
For $j\in[2:m]$, denote the coefficient of the inequality in \eqref{generateS-1} by $c_j^n$ for $n\in[1:N-j+1]$. 
Then by \Cref{lemma-suss-add}, we have the following bound, 
\begin{align}
&H(A_{1:m}^{[1]}S_{m+1:N}|\bF W_1)   \nonumber \\
&\geq \sum_{j=1}^mH(A_j^{[2]}S_{j+1:N}|\bF W_1)-\sum_{j=2}^mH(S_{j:N}|\bF W_1)   \label{LB-entropy-1}  \\
&\geq \sum_{j=1}^{m}c_j^0H(S_{j+1:N}|\bF W_1)+c_1^1(K-1)L  \nonumber \\
&\quad +\sum_{j=2}^{m}\sum_{n=1}^{N-j+1}c_j^n\left[\frac{1}{j+n-1}H(A^{[2]}_{1:j+n-1}S_{j+n:N}|\bF W_1)\right.   \nonumber \\
&\qquad\quad +\left.\left(\frac{1}{n}-\frac{1}{j+n-1}\right)H(S_{j+n:N}|\bF W_1)+\frac{n-1}{n}H(S_{j:N}|\bF W_1)\right]-\sum_{j=2}^mH(S_{j:N}|\bF W_1)    \label{LB-entropy-2} \\
&=c_1^1(K-1)L+\sum_{j=2}^{m}\sum_{n=1}^{N-j+1}\frac{c_j^nL}{j+n-1}{B_N}(K-1,j+n-1)+\sum_{j=2}^{m+1}c_{j-1}^0H(S_{j:N}|\bF W_1)  \nonumber \\
&\quad +\sum_{j=2}^{m}\sum_{n=1}^{N-j+1}c_j^n\left[\left(\frac{1}{n}-\frac{1}{j+n-1}\right)H(S_{j+n:N}|\bF W_1)+\frac{n-1}{n}H(S_{j:N}|\bF W_1)\right]-\sum_{j=2}^mH(S_{j:N}|\bF W_1)   \label{LB-entropy-3}\\
&=c_1^1(K-1)L+\sum_{j=2}^{m}\sum_{n=1}^{N-j+1}\frac{c_j^nL}{j+n-1}{B_N}(K-1,j+n-1)   \nonumber \\
&\quad +\sum_{j=2}^m\left[\sum_{i=2}^{j-1}c_i^{j-i}\left(\frac{1}{j-i}-\frac{1}{j-1}\right)+c_{j-1}^0+\sum_{n=1}^{N-j+1}\frac{(n-1)c_j^n}{n}\right]H(S_{j:N}|\bF W_1)+c_m^0H(S_{m+1:N}|\bF W_1)   \nonumber \\
&\quad +\sum_{j=m+1}^{N}\left[\sum_{i=2}^{m}c_i^{j-i}\left(\frac{1}{j-i}-\frac{1}{j-1}\right)\right]H(S_{j:N}|\bF W_1)-\sum_{j=2}^mH(S_{j:N}|\bF W_1)  \label{LB-entropy-4}\\
&\geq c_1^1(K-1)L+\sum_{j=2}^{m}\sum_{n=1}^{N-j+1}\frac{c_j^nL}{j+n-1}{B_N}(K-1,j+n-1),   \label{LB-entropy}
\end{align}
where \eqref{LB-entropy-2} follows from \Cref{lemma-generageS}, 
\eqref{LB-entropy-3} follows from \Cref{lemma-LB-reduce}, 
\eqref{LB-entropy-4} is a reorganization of the terms, 
and the last inequality \eqref{LB-entropy} follows from a conditional version of \Cref{lemma-SubsetEntropyInequality} and the assumption of the condition 
\begin{equation}
\sum_{i=j}^N(N-i+1)d_i\geq \sum_{i=j}^m(N-i+1), \text{ for all }j\in[2:m],  %%\label{coefficient-SEI-condition}
\end{equation} 
holds for 
\begin{equation}
d_j=
\begin{cases}
\sum_{i=2}^{j-1}c_i^{j-i}\left(\frac{1}{j-i}-\frac{1}{j-1}\right)+c_{j-1}^0+\sum_{n=1}^{N-j+1}\frac{(n-1)c_j^n}{n}, &\text{if }j\in[2:m]\\
c_m^0+\sum_{i=2}^{m}c_i^{j-i}\left(\frac{1}{j-i}-\frac{1}{j-1}\right), &\text{if }j=m+1  \\
\sum_{i=2}^{m}c_i^{j-i}\left(\frac{1}{j-i}-\frac{1}{j-1}\right), &\text{if }j\in[m+2:N]. 
\end{cases}  %%\label{coefficient-SEI-d}
\end{equation}
By substituting \eqref{LB-entropy} into \eqref{alpha-to-entropy}, we obtain for $m\in[2:N-1]$ that 
\begin{equation}
B_N(K,m)\geq 1+c_1^1(K-1)+ \sum_{j=2}^{m} \sum_{n=1}^{N-j+1} \frac{c_j^n}{j+n-1} {B_N}(K-1,j+n-1).   \label{recursive-inequality-B}
\end{equation}
The above inequality regarding $B_N(K,m)$ is recursive on $K$. If we replace the inequalities with equalities for all recursions, the resulting objective value should be a lower bound of $B_N(K,m)$, which is exactly the recursive definition of $\widetilde{B_N}(K,m)$ in~\eqref{general-LowerBound}. This proves the theorem.

%%-----------------------------------------------------------------
\section{Proof of \Cref{lemma-SubsetEntropyInequality}}\label{proof-lemma-SubsetEntropyInequality}
By database symmetry, we have 
\begin{equation}
H(S_{\cB_1})=H(S_{\cB_2}),~\forall \cB_1,\cB_2\subseteq[1:N], |\cB_1|=|\cB_2|. 
\end{equation}
Then Han's inequality becomes 
\begin{equation}
\frac{H(S_{i:N})}{N-i+1}\geq \frac{H(S_{j:N})}{N-j+1},~\forall i,j\in[1:N], i\geq j.   \label{pf-SEI-Han}
\end{equation}
Consider $d_2,d_3,\cdots,d_N$ satisfying \eqref{SEI-condition}, i.e., 
\begin{equation}
\sum_{i=j}^N(N-i+1)d_i\geq \sum_{i=j}^m(N-i+1), ~\forall j\in[2:m]. 
\end{equation}
We can partition each $d_i$ into $d_i^2,d_i^3,\cdots,d_i^{m}$ so that 
\begin{equation}
\sum_{i=j}^N(N-i+1)d_i^j\geq (N-j+1), ~\forall j\in[2:m].   \label{SEI-condition-partition}
\end{equation}
For $ j\in[2:m]$, we have 
\begin{align}
\sum_{i=j}^Nd_i^jH(S_{i:N})\geq \sum_{i=j}^N\frac{(N-i+1)d_i^j}{N-j+1}H(S_{j:N}) \geq H(S_{j:N}),
\end{align}
where the two inequalities follow from \eqref{pf-SEI-Han} and \eqref{SEI-condition-partition}, respectively.
Summing up the above inequality over $j\in[2:m]$, we obtain 
\begin{align}
\sum_{j=2}^m H(S_{j:N})\leq \sum_{j=2}^m\sum_{i=j}^Nd_i^jH(S_{i:N})= \sum_{i=2}^N\left(\sum_{j=2}^i d_i^j\right)H(S_{i:N}) \leq \sum_{i=2}^N\left(\sum_{j=2}^m d_i^j\right)H(S_{i:N})= \sum_{i=2}^Nd_jH(S_{i:N}), 
\end{align}
which is the inequality in \eqref{SEI-inequality}. This proves the lemma.

%%-----------------------------------------------------------------
\section{Proof of \Cref{thm-lower-bound-Nm}}\label{proof-thm-lower-bound-Nm}
We only need to prove for $m\in[2:N-1]$ that $\dunderline{B}_N(K,m)=\widetilde{B_N}(K,m,\bm{c})$ for some $\bm{c}\in\bC$. 
For $K=2$, we assign the following values 
\begin{equation}
c_j^n=
\begin{cases}
\frac{(m-j)(m+j-1)}{2(j-1)(N-j)},& \text{if } j\geq j^* \text{ and } n=1 \\
1-\frac{(m-j)(m+j-1)}{2(j-1)(N-j)}, &\text{if } j\geq j^* \text{ and } n=N-j+1 \\
1-\frac{(N-j^*)(j^*-1)}{N-j^*+1}\left(1-\frac{(m-j^*)(m+j^*-1)}{2(j^*-1)(N-j^*)}\right), &\text{if } j=j^*-1 \text{ and } n=0 \\
\frac{(N-j^*)(j^*-1)}{N-j^*+1}\left(1-\frac{(m-j^*)(m+j^*-1)}{2(j^*-1)(N-j^*)}\right),& \text{if } j=j^*-1 \text{ and } n=1 \\
1,&\text{if }j< j^*-1 \text{ and } n=0 \\
0, & \text{ otherwise,}
\end{cases}  \label{c_j^n-assign-K=2}
\end{equation}
which is easily seen satisfies \eqref{coefficient-sumup-to1}. 
Substituting the above value into \eqref{coefficient-SEI-d}, we have $d_j=1$ for $j\in[2:m]$ and $d_j=0$ for $j\notin[j^*:m]$. 
It is obvious that \eqref{coefficient-SEI-condition} is satisfied for all $j\in[2:m]$. 
To show the above coefficient $c_j^n$ satisfies \eqref{coefficient-range}, we only need to prove 
\begin{align}
\frac{(m-j)(m+j-1)}{2(j-1)(N-j)}\leq 1,~\forall j\geq j^*  \label{pf-coefficient-range-require1}
\end{align}
and
\begin{align}
\frac{(N-j^*)(j^*-1)}{N-j^*+1}\left(1-\frac{(m-j^*)(m+j^*-1)}{2(j^*-1)(N-j^*)}\right)\leq 1. \label{pf-coefficient-range-require2}
\end{align}
Let $f_1(j)=2(j-1)(N-j)-(m-j)(m+j-1)=\left(N-\frac{1}{2}\right)^2-\left(N+\frac{1}{2}-j\right)^2-m^2+m$ which is an increasing function of $j$. 
Let $f_2(j)=f_1(j)-2(N-j+1)$. 
\begin{itemize}
	\item If $j^*=2$, from the definition of $j^*$ in \eqref{def-j*-K=2}, we have $2\geq \left\lceil(N+\frac{1}{2})-\sqrt{(N-m)(N+m-1)+\frac{1}{4}}\right\rceil$, i.e., $2(N-1)\geq m^2-m$. 
	Then $f_1(j)\geq f_1(2)=2(N-1)-m^2+m\geq 0$ for any $j\geq j^*$, which proves \eqref{pf-coefficient-range-require1}. 
	Moreover, we have $f_2(j^*)=f_1(2)-2(N-1)=-m(m-1)<0$, which proves \eqref{pf-coefficient-range-require2}. 
	
	\item If $j^*=\left\lceil(N+\frac{1}{2})-\sqrt{(N-m)(N+m-1)+\frac{1}{4}}\right\rceil$, from \eqref{def-j*-K=2}, we have $2(N-1)< m^2-m$. 
	Then for any $j\geq j^*$, we obtain
	\begin{align}
	f_1(j)&\geq f_1(j^*)  \\
	&\geq \left(N-\frac{1}{2}\right)^2-\left(N+\frac{1}{2}-(N+\frac{1}{2})+\sqrt{(N-m)(N+m-1)+\frac{1}{4}}\right)^2-m^2+m  \\
	&=\left(N-\frac{1}{2}\right)^2-(N-m)(N+m-1)-\frac{1}{4}-m^2+m  \\
	&= 0,
	\end{align}
	which proves \eqref{pf-coefficient-range-require1}. We can then further obtain 
	\begin{align}
	f_2(j^*)&=f_1(j^*)-2(N-j^*+1) \\
	&\leq \left(N-\frac{1}{2}\right)^2-\left(N+\frac{1}{2}-(N+\frac{1}{2})+\sqrt{(N-m)(N+m-1)+\frac{1}{4}}-1\right)^2-m^2+m  \nonumber \\
	&\qquad -2(N-j^*+1)  \\
	&=\left(N-\frac{1}{2}\right)^2-\left(\sqrt{(N-m)(N+m-1)+\frac{1}{4}}-1\right)^2-m^2+m -2(N-j^*+1)  \\
	&=2\sqrt{(N-m)(N+m-1)+\frac{1}{4}}-1-2(N-j^*+1)  \\
	&\leq 2\sqrt{(N-m)(N+m-1)+\frac{1}{4}}-2N-3  \nonumber \\
	&\quad +2\left((N+\frac{1}{2})-\sqrt{(N-m)(N+m-1)+\frac{1}{4}}+1\right) \\
	&=0,
	\end{align}
	which implies that \eqref{pf-coefficient-range-require2} is true. 
\end{itemize}

For $K\geq 3$ and $j\geq j^*$, we assign the following values 
\begin{equation}
c_j^n=
\begin{cases}
1,& \text{if } j\geq j^* \text{ and } n=1 \\
1-\frac{(N-j^*)(j^*-1)^2}{j^*(N-j^*+1)}y_{j^*}, &\text{if } j=j^*-1 \text{ and } n=0 \\
\frac{(N-j^*)(j^*-1)^2}{j^*(N-j^*+1)}y_{j^*},& \text{if } j=j^*-1 \text{ and } n=1 \\
1,&\text{if }j< j^*-1 \text{ and } n=0 \\
0, & \text{otherwise,}
\end{cases}
\end{equation}
where $y_{j^*}=\frac{j^*}{(j^*-1)(N-j^*)}\left[\frac{(N-m)(m-1)}{m}-\sum_{i=j^*+1}^m\frac{N-i+1}{i-1}\right]$ is obtained by solving 
\begin{equation}
\begin{cases}
x_j+y_j=1\\
\frac{j-1}{j}x_j+\frac{j(N-j-1)}{(j+1)(N-j)}y_{j+1}=1
\end{cases}  \label{c_j^n-assign-K>=3}
\end{equation}
with initial conditions $x_m=0,y_m=1$. 
We can verify that 
\begin{equation}
d_j=
\begin{cases}
1-\frac{(j^*-1)(N-j^*)}{j^*(N-j^*+1)}y_{j^*},&\text{if }j=j^* \\
\frac{j-2}{j-1},&\text{if }j\in[j^*+1:m+1] \\
0,&\text{o.w.,}
\end{cases}
\end{equation}
and the conditions in \eqref{coefficient-sumup-to1} and \eqref{coefficient-SEI-condition} are easily verified. 
In particular, the equality in \eqref{coefficient-SEI-condition} holds for $j\in[2:j^*]$.
To show the coefficient $c_j^n$ in \eqref{c_j^n-assign-K>=3} satisfies \eqref{coefficient-range}, we only need to prove $\frac{(N-j^*)(j^*-1)^2}{j^*(N-j^*+1)}y_{j^*}\leq 1$, which is equivalent to $\frac{j^*}{N-j^*+1}\left[\frac{(N-m)(m-1)}{m}-\sum_{i=j^*+1}^m\frac{N-i+1}{i-1}\right]\leq 1$. Consider the following two cases: 
\begin{itemize}
	\item If $j^*=2$, we have $\frac{(N-j^*)(j^*-1)^2}{j^*(N-j^*+1)}y_{j^*}=\frac{1}{N-1}\left[\frac{(N-m)(m-1)}{m}-\sum_{i=3}^m\frac{N-i+1}{i-1}\right]< 1$, because $\frac{(N-m)(m-1)}{m}<N-m<N-1$. 
	
	\item If $j^*>2$, by the definition of $j^*$ in \eqref{def-j*-K>=3}, we have $\frac{(N-m)(m-1)}{m}-\sum_{i=j^*+1}^m\frac{N-i+1}{i-1}<\frac{N-j^*}{j^*}$, which implies 
	$\frac{(N-j^*)(j^*-1)^2}{j^*(N-j^*+1)}y_{j^*}<\frac{j^*-1}{N-j^*+1}\cdot \frac{N-j^*}{j^*}<1$. 
\end{itemize}
Now, we have shown that the coefficients designed in both \eqref{c_j^n-assign-K=2} and \eqref{c_j^n-assign-K>=3} satisfy $\bm{c}\in\bC$. 
The value of $\dunderline{B_N}(K,m)$ in \eqref{explicit-LB} is then obtained by substituting the above feasible $\bm{c}$ into the recursive function $\widetilde{B_N}(K,m,\bm{c})$ in \eqref{general-LowerBound}. 
This proves the theorem.

%%-----------------------------------------------------------------
\section{Proof of \Cref{thm-flat-bound}}\label{proof-thm-flat-bound}
To prove the theorem, we begin with the following iterative lemma, for which the proof is given in Appendix~\ref{proof-lemma-iteration-2term}. 
\begin{lemma}\label{lemma-iteration-2term}
	For any $k\in[1:K-1]$, $j\in[1:N-1]$, and any non-negative integer $q$, we have 
	\begin{align}
	&H(A_{1:j}^{[k]}S_{j+1:N}|\bF W_{1:k})+N^q(j-1)H(A_{1:N}^{[k]}|\bF W_{1:k})  \nonumber \\
	&\geq L\cdot\left(1+N^{q-1}(j-1)\right)+\left[H(A_{1:j}^{[k+1]}S_{j+1:N}|\bF W_{1:k+1})+N^{q-1}(j-1)H(A_{1:N}^{[k+1]}|\bF W_{1:k+1})\right].
	\end{align}
\end{lemma}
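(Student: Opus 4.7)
The plan is to prove the lemma by induction on $q$, with the inductive step reducing to a clean Han-type auxiliary bound on the full-answer entropy, and the base case $q=0$ handled by combining the two submodular inequalities already used in the proof outline of Theorem~\ref{thm-flat-bound}.

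For the inductive step I would first establish the auxiliary bound
\begin{align*}
N \cdot H(A_{1:N}^{[k]}|\bF W_{1:k}) \;\geq\; L + H(A_{1:N}^{[k+1]}|\bF W_{1:k+1})
\end{align*}
via a four-step chain: (i) subadditivity combined with server symmetry gives $H(A_{1:N}^{[k+1]}|\bF W_{1:k}) \leq N\,H(A_1^{[k+1]}|\bF W_{1:k})$; (ii) single-server privacy converts the right-hand side to $N\,H(A_1^{[k]}|\bF W_{1:k})$; (iii) monotonicity gives $H(A_1^{[k]}|\bF W_{1:k}) \leq H(A_{1:N}^{[k]}|\bF W_{1:k})$; and (iv) decodability of $W_{k+1}$ extracts an $L$ from $H(A_{1:N}^{[k+1]}|\bF W_{1:k})$. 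Multiplying this auxiliary bound by $(N-1)\,N^{q-2}(j-1)$ and adding the induction hypothesis at $q-1$ yields the $q$ case after straightforward rearrangement.

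For the base case $q=0$, after using decodability on the right-hand side to rewrite $L + H(A_{1:N}^{[k+1]}|\bF W_{1:k+1}) = H(A_{1:N}^{[k+1]}|\bF W_{1:k})$ and $L + H(A_{1:j}^{[k+1]} S_{j+1:N}|\bF W_{1:k+1}) = H(A_{1:j}^{[k+1]} S_{j+1:N}|\bF W_{1:k})$, the target becomes showing a specific combination of $X_k \triangleq H(A_{1:j}^{[k]} S_{j+1:N}|\bF W_{1:k})$ and $\gamma_k \triangleq H(A_{1:N}^{[k]}|\bF W_{1:k})$ dominates its counterpart for query $[k+1]$. The main tool will be iterating submodular inequality (B) for $i=1,\ldots,j-1$ combined with server symmetry and single-server privacy to obtain
\begin{align*}
j\,H(A_1^{[k]} S_{j+1:N}|\bF W_{1:k}) \;\geq\; H(A_{1:j}^{[k+1]} S_{j+1:N}|\bF W_{1:k}) + (j-1)\,H(S_{j+1:N}|\bF W_{1:k}),
\end{align*}
and then using monotonicity $H(A_1^{[k]} S_{j+1:N}|\bF W_{1:k}) \leq X_k$ together with submodular inequality (A) applied to query $[k]$ and Han's inequality $H(A_{j+1:N}^{[k+1]}|\bF W_{1:k}) \geq \tfrac{N-j}{N}\,H(A_{1:N}^{[k+1]}|\bF W_{1:k})$ to close the gap.

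The main obstacle will be the base case. A naive attempt to show $X_k \geq L + H(A_{1:j}^{[k+1]} S_{j+1:N}|\bF W_{1:k+1})$ term-by-term fails, since joint message symmetry together with decodability actually gives the reverse inequality in general. The needed slack must therefore be extracted from the mismatch between the $(j-1)$ coefficient on $\gamma_k$ in the left-hand side and the $(j-1)/N$ coefficient on the full-answer term in the right-hand side; the factor $1/N$ is chosen precisely so that Han's inequality on the $(N-j)$ storage-side servers for query $[k+1]$ cancels the deficit in the $X$-terms. Executing this telescoping with the correct coefficient bookkeeping is where the technical work concentrates.
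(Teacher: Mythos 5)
Your overall strategy is sound and genuinely different in organization from the paper's. The paper proves the lemma in a single chain of inequalities that carries the coefficient $N^q(j-1)$ all the way through: submodularity to split $H(A_{1:N}^{[k]}|\bF W_{1:k})+H(S_{j+1:N}|\bF W_{1:k})$ into $H(A_{1:j}^{[k]}S_{j+1:N}|\bF W_{1:k})+H(A_{j+1:N}^{[k]}|\bF W_{1:k})$, then monotonicity and symmetry down to single servers, per-server privacy to switch to query $[k+1]$, iterated submodularity to reassemble $H(A_{1:j}^{[k+1]}S_{j+1:N}|\cdot)$ and $H(A_{1:N}^{[k+1]}|\cdot)$, and decodability to extract the $L$'s. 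Your version factors this into a base case $q=0$ plus an induction whose step needs only the single-letter bound $N\,H(A_{1:N}^{[k]}|\bF W_{1:k})\geq L+H(A_{1:N}^{[k+1]}|\bF W_{1:k+1})$; multiplying that bound by $(N-1)N^{q-2}(j-1)$ and adding the hypothesis at $q-1$ does reproduce the claim at $q$, and your base-case bookkeeping ($j X_k\geq H(A_{1:j}^{[k+1]}S_{j+1:N}|\bF W_{1:k})+(j-1)H(S_{j+1:N}|\bF W_{1:k})$ combined with $j-1$ copies of submodularity (A)) cancels correctly. The ingredients are the same as the paper's, but the modular structure makes the role of each coefficient more transparent; the paper's one-shot chain is shorter but harder to motivate.

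One step needs repair. In the base case you propose to handle the residual term $H(A_{j+1:N}^{[k]}|\bF W_{1:k})$ (produced by submodularity (A) applied to query $[k]$) via ``Han's inequality on the $(N-j)$ storage-side servers for query $[k+1]$,'' i.e. $H(A_{j+1:N}^{[k+1]}|\bF W_{1:k})\geq\tfrac{N-j}{N}H(A_{1:N}^{[k+1]}|\bF W_{1:k})$. To invoke this you must first replace $A_{j+1:N}^{[k]}$ by $A_{j+1:N}^{[k+1]}$, and that swap is not licensed: privacy guarantees identical query distributions only per server, so one may only exchange $H(S_{\cA},A_n^{[k]}|\bF W_{1:k})=H(S_{\cA},A_n^{[k+1]}|\bF W_{1:k})$ for a single $n$ --- this is precisely why the privacy constraint in the relaxed entropic LP is restricted to $b=1$. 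The fix is to pass through a single server first: $H(A_{j+1:N}^{[k]}|\bF W_{1:k})\geq H(A_{j+1}^{[k]}|\bF W_{1:k})=H(A_{j+1}^{[k+1]}|\bF W_{1:k})\geq\tfrac{1}{N}H(A_{1:N}^{[k+1]}|\bF W_{1:k})$, which yields exactly the coefficient $\tfrac{1}{N}$ your accounting requires (and is the route the paper takes in \eqref{pf-flat-iteration-3}--\eqref{pf-flat-iteration-5}). Your auxiliary bound for the inductive step already handles this correctly by reducing to a single server before invoking privacy, so only the base case needs this adjustment.
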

To simplify the bounding process, we also need the following lemma. 
The lemma can be proved by applying \Cref{lemma-iteration-2term} with $j=N-1$ for $k,k+1,\cdots,K-1$ successively. We omit the details of the proof. 
\begin{lemma}\label{lemma-cancle-entropy}
	For $k\in[1:K-1]$, we have 
	\begin{equation}
	H(A_{1:N-1}^{[k]}S_N|\bF W_{1:k})+N^{K-k-1}(N-2)H(A_{1:N}^{[k]}|\bF W_{1:k})\geq L\cdot\left[\frac{(N^{K-k}-1)(N-2)}{N(N-1)}+(K-k)\right]. 
	\end{equation}
\end{lemma}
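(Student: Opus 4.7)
The plan is to recognize Lemma~\ref{lemma-cancle-entropy} as the telescoped form of Lemma~\ref{lemma-iteration-2term} specialized to $j=N-1$, so the entire argument reduces to iterating a single two-term recursion, summing a geometric series, and invoking a boundary condition. To streamline the bookkeeping, I would first introduce the shorthand
\begin{equation}
G_{k'}(q) \triangleq H(A_{1:N-1}^{[k']}S_N \,|\, \bF\, W_{1:k'}) + N^{q}(N-2)\, H(A_{1:N}^{[k']}\,|\,\bF\, W_{1:k'}),
\end{equation}
so that the desired inequality becomes $G_k(K-k-1) \geq L\bigl[(K-k)+\tfrac{(N^{K-k}-1)(N-2)}{N(N-1)}\bigr]$, and Lemma~\ref{lemma-iteration-2term} with $j=N-1$ reduces exactly to the one-step recursion $G_{k'}(q) \geq L\bigl(1+N^{q-1}(N-2)\bigr) + G_{k'+1}(q-1)$.

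Next I would apply this recursion along the diagonal $k'+q=K-1$, chaining through $(k,K-k-1)\to(k+1,K-k-2)\to\cdots\to(K-1,0)$. Telescoping the $K-k$ resulting inequalities cancels every intermediate $G_{k'}$ on both sides, producing
\begin{equation}
G_k(K-k-1) \;\geq\; L\sum_{i=0}^{K-k-1}\bigl(1+N^{K-k-2-i}(N-2)\bigr) + G_K(-1).
\end{equation}
The constant contributions sum to $K-k$, while the $N$-powers run through $N^{K-k-2},N^{K-k-3},\ldots,N^{-1}$, whose sum equals $\frac{1}{N}\cdot\frac{N^{K-k}-1}{N-1}=\frac{N^{K-k}-1}{N(N-1)}$. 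Multiplying by $(N-2)$ yields precisely the right-hand side claimed by the lemma.

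Finally, to discharge the residual term $G_K(-1)$, I would invoke the ``Boundary'' constraint of Proposition~\ref{prop:LP}: given $\bF$ and $W_{1:K}$, the stored contents $S_{1:N}$ and all answers $A_{1:N}^{[K]}$ are deterministic, so every entropy appearing in $G_K(-1)$ vanishes and $G_K(-1)=0$. I do not expect any serious obstacle; the only delicate point is notational, namely that one must actually run the recursion down to $q=0$ (rather than stopping at $q=1$) in order for the geometric sum to begin at $N^{-1}$ and produce the $N(N-1)$ denominator in the target expression. Since Lemma~\ref{lemma-iteration-2term} is explicitly stated for any non-negative integer $q$, this terminal step is legitimate, and the proof concludes.
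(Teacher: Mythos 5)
Your proposal is correct and matches the paper's approach exactly: the paper proves this lemma by applying Lemma~\ref{lemma-iteration-2term} with $j=N-1$ successively for $k,k+1,\ldots,K-1$ and omits the details, which you have filled in accurately (including the legitimate terminal step at $q=0$ and the vanishing of the boundary term given $\bF$ and $W_{1:K}$).
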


Let $\bar{k}=K-k$. Next, we prove the lower bound to $\alpha+m\beta$ for $m=(N-1)+(N-2)N^{K-k}$, $k\in[1:K]$ by successively applying the iterations in \Cref{lemma-iteration-2term}. 
Since the parameter $q$ is non-negative, the lower bound depends on the value of $k$ and is obtained respectively for the following two cases. 

\textit{Case i:} if $k\leq \bar{k}$, i.e., $k\leq \frac{K}{2}$, we apply \Cref{lemma-iteration-2term} till $k$ and obtain 
\begin{align}
&\alpha+\left[(N-1)+(N-2)N^{K-k}\right]\beta   \nonumber \\
&\geq \frac{1}{L}\left[H(A_{1:N-1}^{[1]}S_N|\bF)+N^{K-k-1}(N-2)H(A_{1:N}^{[1]}|\bF)\right]   \\
&\geq \sum_{i=1}^{k}\left[1+(N-2)N^{K-k-i}\right]+\frac{1}{L}\left[H(A_{1:N-1}^{[k]}S_N|\bF W_{1:k})+N^{K-2k}(N-2)H(A_{1:N}^{[k]}|\bF W_{1:k})\right]    \label{flat-bound-case1-1} \\
&= \sum_{i=1}^{k}\left[1+(N-2)N^{K-k-i}\right]+\frac{1}{N^{k-1}L}\left[H(A_{1:N-1}^{[k]}S_N|\bF W_{1:k})+N^{K-k-1}(N-2)H(A_{1:N}^{[k]}|\bF W_{1:k})\right]  \nonumber  \\
&\qquad +\left(1-\frac{1}{N^{k-1}}\right)\frac{1}{L}H(A_{1:N-1}^{[k]}S_N|\bF W_{1:k})    \label{flat-bound-case1-2} \\
&\geq \sum_{i=1}^{k}\left[1+(N-2)N^{K-k-i}\right]+\frac{1}{N^{k-1}}\left[\frac{(N^{K-k}-1)(N-2)}{N(N-1)}+(K-k)\right]  \nonumber  \\
&\qquad +\left(1-\frac{1}{N^{k-1}}\right)\Big(B_N(K-k+1,N-1)-1\Big)    \label{flat-bound-case1-3} \\
&= \left(k+\frac{N^{K-2k}(N^k-1)(N-2)}{(N-1)}\right) +\frac{1}{N^{k-1}}\left[\frac{(N^{K-k}-1)(N-2)}{N(N-1)}+(K-k)\right]  \nonumber \\
&\qquad +\left(1-\frac{1}{N^{k-1}}\right)\Big(B_N(K-k+1,N-1)-1\Big),  \label{flat-bound-case1}
\end{align}
where \eqref{flat-bound-case1-1} follows by applying \Cref{lemma-iteration-2term} with $j=N-1$ for $1,2,\cdots,k-1$ successively, 
and \eqref{flat-bound-case1-3} follows from \Cref{lemma-cancle-entropy}. 

\textit{Case ii:} if $k> \bar{k}$, i.e., $k>\frac{K}{2}$, we apply \Cref{lemma-iteration-2term} till $\bar{k}$ and obtain 
\begin{align}
&\alpha+\left[(N-1)+(N-2)N^{K-k}\right]\beta    \nonumber \\
&\geq \frac{1}{L}\left[H(A_{1:N-1}^{[1]}S_N|\bF)+N^{K-k-1}(N-2)H(A_{1:N}^{[1]}|\bF)\right]   \\
&\geq \sum_{i=1}^{\bar{k}}\left[1+(N-2)N^{K-k-i}\right]+\frac{1}{L}\left[H(A_{1:N-1}^{[\bar{k}]}S_N|\bF W_{1:\bar{k}})+(N-2)H(A_{1:N}^{[\bar{k}]}|\bF W_{1:\bar{k}}) \right]  \label{flat-bound-case2-1} \\
&\geq \sum_{i=1}^{\bar{k}}\left[1+(N-2)N^{K-k-i}\right]+ \left(1+\frac{N-2}{N}\right)+ \left[H(A_{1:N-1}^{[\bar{k}+1]}S_N|\bF W_{1:\bar{k}+1}) +\frac{(N-2)}{N}H(A_{1:N}^{[\bar{k}+1]}|\bF W_{1:\bar{k}+1})\right]   \label{flat-bound-case2-2} \\
&\geq \sum_{i=1}^{\bar{k}}\left[1+(N-2)N^{K-k-i}\right]+ \left(1+\frac{N-2}{N}\right)\left(1+\frac{1}{N}+\cdots+\frac{1}{N^{k-\bar{k}-1}}\right)  \nonumber \\
&\qquad +(N-1)\sum_{i=1}^{k-\bar{k}-1}\frac{1}{N^i}H(A_{1:N-1}^{[\bar{k}+i]}S_N|\bF W_{1:\bar{k}+i}) \nonumber \\
&\qquad +\frac{1}{N^{k-\bar{k}-1}}\left[H(A_{1:N-1}^{[k]}S_N|\bF W_{1:k}) +\frac{(N-2)}{N}H(A_{1:N}^{[k]}|\bF W_{1:k})\right]     \label{flat-bound-case2-5} \\
&= \left(\bar{k}+\frac{(N-2)(N^{K-k}-1)}{N-1}\right)+ \frac{2(N^{k-\bar{k}}-1)}{N^{k-\bar{k}}} +(N-1)\sum_{i=1}^{k-\bar{k}-1}\frac{1}{N^i}H(A_{1:N-1}^{[\bar{k}+i]}S_N|\bF W_{1:\bar{k}+i}) \nonumber \\
&\qquad +\frac{1}{N^{K-\bar{k}-1}}\left[H(A_{1:N-1}^{[k]}S_N|\bF W_{1:k}) +N^{K-k-1}(N-2)H(A_{1:N}^{[k]}|\bF W_{1:k})\right]    \nonumber \\
&\qquad +\frac{N^{K-k}-1}{N^{K-\bar{k}-1}}H(A_{1:N-1}^{[k]}S_N|\bF W_{1:k})   \label{flat-bound-case2-6} \\
&\geq \left(\bar{k}+\frac{(N-2)(N^{K-k}-1)}{N-1}\right)+ \frac{2(N^{k-\bar{k}}-1)}{N^{k-\bar{k}}} +(N-1)\sum_{i=1}^{k-\bar{k}-1}\frac{1}{N^i}\Big(B_N(K-(\bar{k}+i-1),N-1)-1\Big) \nonumber \\
&\qquad +\frac{1}{N^{k-1}}\left[\frac{(N^{K-k}-1)(N-2)}{N(N-1)}+(K-k)\right] +\frac{N^{K-k}-1}{N^{k-1}}\Big(B_N(K-k+1,N-1)-1\Big),    \label{flat-bound-case2}
\end{align}
where \eqref{flat-bound-case2-1} follows by applying \Cref{lemma-iteration-2term} with $j=N-1$ for $1,2,\cdots,\bar{k}-1$ successively, 
and \eqref{flat-bound-case2-2}-\eqref{flat-bound-case2-5} are obtained by applying \Cref{lemma-iteration-2term} with $j=N-1$ and $q=0$ for $\bar{k},\bar{k}+1,\cdots,k-1$ successively. 
This proves the theorem. 

%%-----------------------------------------------------------------
\section{Proof of \Cref{lemma-iteration-2term}}\label{proof-lemma-iteration-2term}
For any $k\in[1:K]$, $j\in[1:N-1]$, and non-negative integer $q$, we prove the lemma as follows,
\begin{align}
&H(A_{1:j}^{[k]}S_{j+1:N}|\bF W_{1:k})+N^q(j-1)H(A_{1:N}^{[k]}|\bF W_{1:k})  \nonumber \\
&=H(A_{1:j}^{[k]}S_{j+1:N}|\bF W_{1:k})+N^q(j-1)H(A_{1:N}^{[k]}|\bF W_{1:k})+N^q(j-1)H(S_{j+1:N}|\bF W_{1:k})  \nonumber \\
&\qquad -N^q(j-1)H(S_{j+1:N}|\bF W_{1:k})    \label{pf-flat-iteration-1}\\
&\geq \big(1+N^q(j-1)\big)H(A_{1:j}^{[k]}S_{j+1:N}|\bF W_{1:k})+N^q(j-1)H(A_{j+1:N}^{[k]}|\bF W_{1:k}) -N^q(j-1)H(S_{j+1:N}|\bF W_{1:k})  \label{pf-flat-iteration-2} \\
&\geq \sum_{i=1}^jH(A_i^{[k]}S_{j+1:N}|\bF W_{1:k})+N^{q-1}(j-1)\sum_{i=1}^NH(A_i^{[k]}|\bF W_{1:k})  \nonumber \\
&\qquad +\big(1+N^q(j-1)-j\big)H(A_{1:j}^{[k]}S_{j+1:N}|\bF W_{1:k}) -N^q(j-1)H(S_{j+1:N}|\bF W_{1:k})    \label{pf-flat-iteration-3} \\
&\geq \sum_{i=1}^jH(A_i^{[k+1]}S_{j+1:N}|\bF W_{1:k})-(j-1)H(S_{j+1:N}|\bF W_{1:k})+N^{q-1}(j-1)\sum_{i=1}^NH(A_i^{[k+1]}|\bF W_{1:k})  \label{pf-flat-iteration-4} \\
&\geq \left[H(A_{1:j}^{[k+1]}S_{j+1:N}|\bF W_{1:k})+(j-1)H(S_{j+1:N}|\bF W_{1:k})\right]-(j-1)H(S_{j+1:N}|\bF W_{1:k})  \nonumber \\
&\qquad +N^{q-1}(j-1)H(A_{1:N}^{[k+1]}|\bF W_{1:k})   \label{pf-flat-iteration-5} \\
&\geq H(A_{1:j}^{[k+1]}S_{j+1:N}|\bF W_{1:k})+N^{q-1}(j-1)H(A_{1:N}^{[k+1]}|\bF W_{1:k})    \\
&= \left(1+N^{q-1}(j-1)\right)+\left[H(A_{1:j}^{[k+1]}S_{j+1:N}|W_{1:k+1})+N^{q-1}(j-1)H(A_{1:N}^{[k+1]}|\bF W_{1:k+1})\right], \label{pf-flat-iteration-6}
\end{align}
where \eqref{pf-flat-iteration-2} and \eqref{pf-flat-iteration-5} follow from submodularity, 
and \eqref{pf-flat-iteration-4} follows from the privacy, the nonnegativity of $q$ and the inequality 
\begin{equation}
H(A_{1:j}^{[k]}S_{j+1:N}|\bF W_{1:k})\geq H(S_{j+1:N}|\bF W_{1:k}).
\end{equation}

\end{appendices}

%%===================================================================================================

\bibliographystyle{IEEEtran}
\bibliography{S-D-tradeoff_ref}

\end{document}